\def\algbackskip{\hskip-\ALG@thistlm}
\def\fourvdots{\vbox{\baselineskip1\p@ \lineskiplimit\z@
  \hbox{.}\hbox{.}\hbox{.}\hbox{.}}}
\newcommand{\rank}[1]{\operatorname{rank}\p{#1}}
\newcommand{\p}[1]{\left(#1\right)}
\newcommand{\blind}{1}
\newcommand{\pscal}[2]{\langle #1, #2  \rangle}
\newcommandx{\norm}[2][2=]{\| #1 \|_{#2}}
\newcommandx{\nint}[1]{\llbracket #1 \rrbracket}
\newcommandx{\maxi}[2][2=]{\underset{#2}{\operatorname{max}}\left\{#1\right\}}
\newcommandx{\mini}[2][2=]{\underset{#2}{\operatorname{min}}\left\{#1\right\}}
\newcommandx{\argmaxi}[2][2=]{\underset{#2}{\operatorname{argmax}}\left\{#1\right\}}
\newcommandx{\argmini}[2][2=]{\underset{#2}{\operatorname{argmin}}\left\{#1\right\}}
\newcommandx{\diff}[1]{\nabla\mathcal{L}(#1; Y,\Omega)}
\newcommandx{\fu}[1]{\mathsf{f}_U(#1)}
\newcommandx{\rk}[1]{\operatorname{rk}(#1)}
\newcommandx{\prob}[1]{\mathbb{P}\left(#1 \right) }
\newcommandx{\e}{\mathrm{e}}
\newcommandx{\PE}[1][1=]{\mathbb{E}\left[#1 \right] }
\newcommandx{\pe}[1][1=]{\mathbb{E}\left[#1 \right] }
\newcommandx{\set}[2]{\{ #1 \, ; \, 2\}}
\newcommandx{\TX}{X^0}
\newcommandx{\TS}{S^0}
\newcommandx{\TL}{L^0}
\newcommandx{\Talpha}{\alpha^0}
\newcommandx{\smax}{\sigma_{+}}
\newcommandx{\smin}{\sigma_{-}}
\newcommandx{\umax}{{u}}
\newcommandx{\R}{\mathbb{R}}
\newcommandx{\N}{\mathbb{N}}
\newcommandx{\Yset}{\mathbb{Y}}
\newcommandx{\Xset}{\mathbb{X}}
\newcommandx{\Ycurl}{\mathcal{Y}}
\newcommandx{\lone}{\|\cdot \|_{1}}
\newcommandx{\linf}{\| \cdot \|_{\infty}}
\newcommandx{\lop}{\| \cdot \|}
\newcommandx{\lnuc}{\| \cdot \|_{*}}
\newcommandx{\am}{\mathbf{a}}
\newcommandx{\um}{\mathbf{u}}
\newcommand{\xupdownarrow}[1]{%
  {\left\updownarrow\vbox to #1{}\right.\kern-\nulldelimiterspace}
}
\def\rmd{\mathrm{d}}
\def\wrt{with respect to}
\def\approxQL{\mathcal{A}}
\def\rset{\mathbb{R}}
\def\eqsp{\;}
\def\eqdef{:=}
\def\argmin{\operatorname{argmin}}
\begin{document}
\newcommand{\ubar}[1]{\underaccent{\bar}{#1}}
\newtheorem{theorem}{Theorem}
\newtheorem{assumption}{\textbf{H}\hspace{-0.15cm}}
\Crefname{assumption}{\textbf{H}}{\textbf{H}\hspace{-0.5cm}}
\crefname{assumption}{\textbf{H}}{\textbf{H}\hspace{-0.5cm}}

\newtheorem{lemma}{Lemma}
\Crefname{lemma}{Lemma}{Lemma}
\crefname{lemma}{lemma}{lemma}

\newtheorem{corollary}{Corollary}
\Crefname{corollary}{Corollary}{Corollary}
\crefname{corollary}{corollary}{corollary}
\newtheorem{remark}{Remark}
\Crefname{remark}{Remark}{Remark}
\crefname{remark}{remark}{remark}

\newtheorem{proposition}{Proposition}
\Crefname{proposition}{Proposition}{Proposition}
\crefname{proposition}{proposition}{proposition}

\newtheorem{definition}{Definition}
\Crefname{definition}{Definition}{Definition}
\crefname{definition}{definition}{definition}

\newtheorem{example}{Example}
\Crefname{example}{Example}{Example}
\crefname{example}{example}{example}

\def\spacingset#1{\renewcommand{\baselinestretch}%
{#1}\small\normalsize} \spacingset{1}


  \title{\bf Main effects and interactions in mixed and incomplete data frames}
  \author[1,2]{Genevi\`{e}ve Robin}
  \author[3,4]{Olga Klopp}
  \author[1,2]{Julie Josse}
\author[1,2,5]{\'{E}ric Moulines, \thanks{This work has been funded by the DataScience Inititiative (Ecole Polytechnique) and the  Russian Academic Excellence Project '5-100'}}  
\author[6,7]{Robert Tibshirani}
\affil[1]{Centre de Math\'ematiques Appliquées, \'{E}cole Polytechnique, France }
\affil[2]{Projet XPOP, INRIA}
\affil[3] {ESSEC Business School}
\affil[4] {CREST, ENSAE}
\affil[5]{National Research University Higher School of Economics, Russia}
\affil[6]{Department of Statistics, Stanford University}
\affil[7]{Department of Biomedical Data Science, Stanford University}

  \maketitle

\bigskip
\begin{abstract}
A mixed data frame (MDF) is a table collecting categorical, numerical and count observations. The use of MDF is widespread in statistics and the applications are numerous from  abundance data in ecology to recommender systems. In many cases, an MDF exhibits simultaneously \textit{main effects}, such as row, column or group effects and \textit{interactions}, for which a low-rank model has often been suggested. Although the literature on low-rank approximations is very substantial, with few exceptions, existing methods do not allow to incorporate main effects and interactions while providing statistical guarantees. The present work fills this gap. 

We propose an estimation method which allows to recover simultaneously the main effects and the interactions. We show that our method is near optimal under conditions which are met in our targeted applications. We also propose an optimization algorithm which provably converges to an optimal solution. Numerical experiments reveal that our method, \texttt{mimi}, performs well when the main effects are sparse and the interaction matrix has low-rank. We also show that \texttt{mimi} compares favorably to existing methods, in particular when the main effects are significantly large compared to the interactions, and when the proportion of missing entries is large. The method is available as an R package on the Comprehensive R Archive Network.\end{abstract}

\noindent%
{\it Keywords:} Low-rank matrix completion, missing values, heterogeneous data
\vfill

\newpage
\spacingset{1.45} 

\section{Introduction}

Mixed data frames (MDF) (see \cite{Pages2015,glrm}) are tables collecting categorical, numerical and count data.
In most applications, each row is an example or a subject and each column is a feature or an attribute. A distinctive characteristic of MDF is that column entries may be of different types and most often many entries are missing. MDF appear in numerous applications including patient records in health care (survival values at different time points, quantitative and categorical clinical features like blood pressure, gender, disease stage, see, \textit{e.g.}, \cite{health-mixed}), survey data \cite[Chapters 5 and 6]{survey}, abundance tables in ecology \citep{legendre4corner}, and recommendation systems \citep{agarwal2011}.

\subsection{Main effects and interactions}
In all these applications, data analysis is often made in the light of additional information, such as sites and species traits in ecology, or users and items characteristics in recommendation systems. This caused the introduction of the two central concepts of interest in this article: \textit{main effects} and \textit{interactions}. This terminology is classically used to distinguish between effects of covariates on the observations which are independent of the other covariates (main effects), and effects of covariates on the observations which depend on the value of one or more other covariates (interactions). For example, in health care, a treatment might extend survival for all patients -- this is a main effect -- or extend survival for young patients but shorten it for older patients -- this is an interaction.

Many statistical models have been developed to analyze such types of data. Abundance tables counting species across environments are for instance classically analyzed using the log-linear model \citep[Chapter 4]{Agresti13}. This model decomposes the logarithms of the expected abundances into the sum of species (rows) and environment (columns) effects, plus a low-rank interaction term.
Other examples include multilevel models \citep{gelman_multi} to analyze hierarchically structured data where examples (patients, students, etc.) are nested within groups (hospitals, schools, etc.). 

\subsection{Generalized low-rank models}
At the same time, low-rank models, which embed rows and columns into low-dimensional spaces, have been widely used for exploratory analysis of MDF \citep{Kiers1991, Pages2015, glrm}.
Despite the abundance of results in low-rank matrix estimation (see \cite{review-low-rank} for a literature survey), to the best of our knowledge, most of the existing methods for MDF analysis do not provide a statistically sound way to account for \textit{main effects} in the data. In most applications, estimation of main effects in MDF has been done heuristically as a preprocessing step \citep{softImpute, glrm,Landgraf15}.
\cite{fithian2013scalable} incorporate row and column covariates in their model, but mainly focus on optimization procedures and did not provide statistical guarantees concerning the main effects. \cite{Mao} propose a procedure to estimate jointly main effects and a low-rank structure -- which can be interpreted as interactions --, but the procedure is based on a least squares loss, and is therefore not suitable to mixed data types.

On the other hand, several approaches to model non-Gaussian, and particularly discrete data are available in the matrix completion literature, but they do not consider main effects. \cite{Davenport2012} introduced one-bit matrix completion, where the observations are binary such as yes/no answers, and provide nearly optimal upper and lower bounds on the mean square error of estimation. One-bit matrix completion was also studied in \cite{Cai:2013:MCM:2567709.2627673}. In \cite{KloppLafond2015}, the authors introduce multinomial matrix completion, where the observations are allowed to take more than two values, such as ratings in recommendation systems, and propose a minimax optimal estimator. Unbounded non-Gaussian observations have also been studied before.
For instance, \cite{Cao2016} extended the approach of \cite{Davenport2012} to Poisson matrix completion, and \cite{Gunasekar:2014:EFM:3044805.3045106} and \cite{Lafond2015} both studied exponential family matrix completion. 

\subsection{Contributions}
In the present paper we propose a new framework for incomplete and mixed data which allows to account for main effects and interactions. Before introducing a general model for MDF with sparse main effects and low-rank interactions, we start in \Cref{ex} with a concrete example from survey data analysis. Then, we propose in \Cref{estim} an estimation procedure based on the minimization of a doubly penalized negative quasi log-likelihood. We also propose a block coordinate gradient descent algorithm to compute our estimator, and prove its convergence result. In \Cref{up-bounds} we discuss the statistical guarantees of our procedure and provide upper bounds on the estimation errors of the sparse and low-rank components. To assess the tightness of our convergence rates, in \Cref{low-bound}, we derive lower bounds and show that, in a number of situations, our upper bounds are near optimal. In \Cref{examples}, we specialize our results to three examples of interest in applications. 

To support our theoretical claims, numerical results are presented in \Cref{sec:numerical-results}. In \Cref{subsec:simulated-data}, we provide the results of our experiments that show that our method "mimi" (\textbf{m}ain effects and \textbf{i}nteractions in \textbf{m}ixed and \textbf{i}ncomplete data frames) performs well when the main effects are sparse and the interactions are low-rank. In case of model mis-specification, mimi gives similar results to a two-step procedure where main effects and interactions are estimated separately. Then, in \Cref{subsec:mixed-data}, we compare mimi to existing methods for mixed data imputation. Our experiments reveal that mimi compares favorably to competitors, in particular, when the main effects are significantly large compared to the interactions, and when the proportion of missing entries is large. Finally, in \Cref{subsec:american-community-survey}, we illustrate the method with the analysis of a census data set. 
\if0\blind
{The method is implemented in the R \citep{R} package \href{https://CRAN.R-project.org/package=mimi}{\texttt{mimi}} available on the Comprehensive R Archive Network; the proofs and additional experiments are postponed to the supplementary material.
}\fi
\if1\blind
{The method is implemented in the R \citep{R} package available on the Comprehensive R Archive Network; the proofs and additional experiments are postponed to the supplementary material.
}\fi

\paragraph{Notation}
We denote the Frobenius norm on $\mathbb{R}^{m_1\times m_2}$ by $\norm{\cdot}[F]$, the operator norm by $\norm{\cdot}$, the nuclear norm by $\norm{A}[*]$  and the sup norm $\norm{\cdot}[\infty]$. $\norm{\cdot}[2]$ is the usual Euclidean norm, $\norm{\cdot}[0]$ the number of non zero coefficients, and $\norm{\cdot}[\infty]$ the infinity norm. For $n\in\N$, denote $\nint{n} = \{1,\ldots,n\}$.  We denote the support of $\alpha\in \mathbb{R}^N$ by $\operatorname{supp}(\alpha) = \{k\in\nint{N}, \alpha_k\neq 0\}$. For $I\subseteq\nint{1}{m_1}$, we denote $\mathbbm{1}_{I}$, defined by $\mathbbm{1}_{I}(i) = 1$ if $i\in I$ and $0$ otherwise, the indicator of set $I_h$.

\section{General model and examples}
\label{ex}
\subsection{American Community Survey}
\label{sec:acs-intro}
Before introducing our general model, we start by giving a concrete example.
The American Community Survey\footnote{https://www.census.gov/programs-surveys/acs/about.html} (ACS) provides detailed information about the American people on a yearly basis. Surveyed households are asked to answer 150 questions about their employment, income, housing, etc. As shown in \Cref{tab:ACS}, this results in a highly heterogeneous and incomplete data collection.
\begin{table}[ht]
\footnotesize
\centering
\begin{tabular}{rrrrrrrrrrrrrrrrrrrrrr}
  \hline
ID & Nb. people & Electricity bill (\$) & Food Stamps & Family Employment Status & Allocation \\
\hline
  1 &   2  & 160 &   No &   Married couple, neither employed &   Yes \\
  2 &   1  & 390 &   No &   \textbf{NA}  &   No \\
  3 &   4  & \textbf{NA} &   No &   Married couple, husband employed &   No \\
  4 &   2  & 260 &   No &   Married couple, neither employed &   No \\
  5 &   2  & 100 &   No &   Married couple, husband employed &   No \\
  6 &   2  & 130 &   No & \textbf{NA} &   No \\
   \hline
\end{tabular}
\caption{American Community Survey: Excerpt of the 2016 public use microsample data.}
\label{tab:ACS}
\end{table}

Here, the Family Employment Status (FES) variable categorizes the surveyed population in groups, depending on whether the household contains a couple or a single person, and whether the householders are employed. In an exploratory data analysis perspective, a question of interest is: does the household category influence the value of the other variables? For example income, food stamps allocation, etc. Furthermore, as we do not expect the group effects to be sufficient to explain the observations, can we also model residuals, or \textit{interactions}?

Denote $Y=(Y_{ij})$ the data frame containing the households in rows and the questions in columns. If the $j$-th column is continuous (electricity bill for instance), one might model the group effects and interactions as follows:
\[
 \mathbb{E}[Y_{ij}] = \Talpha_{c(i)j} + \TL_{ij},
\]
where $c(i)$ indicates the group to which individual $i$ belongs, and $\Talpha_{c(i)j}$ and $\TL_{ij}$ are fixed group effects and interactions respectively. 
This corresponds to the so-called \textit{multilevel regression} framework \citep{gelman_multi}. If the $j$-th column is binary (food stamps allocation for instance), one might model
$$ \mathbb{P}(Y_{ij} = \text{``Yes"}) = \frac{\mathrm{e}^{\TX_{ij}}}{1+\e^{\TX_{ij}}},\quad \TX_{ij} = \Talpha_{c(i)j} + \TL_{ij},$$
corresponding to a logistic regression framework.

The goal is then, from the mixed and incomplete data frame $Y$, to estimate simultaneously the vector of group effects $\Talpha$ and the matrix of interactions $\TL$. We propose a method assuming the vector of main effects $\Talpha$ is sparse and the matrix of interactions $\TL$ has low-rank. The sparsity assumption means that groups affect a small number of variables. On the other hand, the low-rank assumption means the population can be represented by a few archetypical individuals and summary features \cite[Section 5.4]{glrm}, 
which interact in a multiplicative manner. In fact, if $\TL$ is of rank $r$, then it can be decomposed as the sum of $r$ rank-$1$ matrices as follows:
$$\TL = \sum_{k=1}^r u_kv_k^{\top},$$
where $u_k$ (resp. $v_k$) is a vector of $\mathbb{R}^{m_1}$ (resp. $\mathbb{R}^{m_2}$). Thus, using the above example, we obtain
$$ \mathbb{E}[Y_{ij}] = \Talpha_{c(i)j} + \sum_{k=1}^r u_{ik}v_{jk},$$
where the last term $\sum_{k=1}^r u_{ik}v_{jk}$ can be interpreted as the sum of multiplicative interaction terms between latent individual types and features.

\subsection{General model}
\label{subsec:gen-model}
We now introduce a new framework generalizing the above example to other types of data and main effects. Consider an MDF $Y = (Y_{ij})$ of size $m_1\times m_2$. The entries in each column $j\in\nint{m_2}$  belong to an observation space, denoted $\Yset_j$. For example, for numerical data, the observation space is $\Yset_j = \R$, and for count data, $\Yset_j = \N$ is the set of natural integers. For binary data, the observation space is $\Yset_j = \{0,1\}$. In the entire paper, we assume that the random variables $(Y_{ij})$ are independent and that for each $(i,j)\in\nint{m_1}\times\nint{m_2}$, $Y_{ij} \in \Yset_j$ and $\PE[|Y_{ij}|]<\infty$. Furthermore, we will assume that $Y_{ij}$ is sub-exponential with scale $\gamma$ and variance $\sigma^2$: for all $(i,j)\in\nint{m_1}\times\nint{m_2}$ and $|z|<\gamma$,
$\PE[\e^{z(Y_{ij} - \PE[Y_{ij}])} ]\leq  \e^{\sigma^2z^2/2}.$\\

In our estimation procedure, we will use a data-fitting term based on heterogeneous exponential family quasi-likelihoods.
Let $(\Yset, \Ycurl,\mu)$ be a measurable space, $h: \Yset \to \R_+$, and  $g: \R \to \R$ be functions. Denote by $\operatorname{Exp}^{(h,g)}= \{ f^{(h,g)}_x \, : \, x \in \R \}$  the canonical exponential family. Here, $h$ is the base function, $g$ is the link function, and $f^{(h,g)}_x$ is the density  \wrt\ the base measure $\mu$ given by
\begin{equation}
\label{exp-fam}
f^{(h,g)}_{x} (y) = h(y)\exp\left(y x  - g(x)\right),
\end{equation}
for $y\in\Yset$. For simplicity, we assume $\int h(y)\exp(yx)\mu(\rmd y)<\infty$ for all $x\in \R$.\\

The exponential family is a flexible framework for different data types.
For example, for numerical data, we set $g(x)=x^2 \sigma^2/2$ and $ h(y)= (2 \pi \sigma^2)^{-1/2} \exp(-y^2/\sigma^2)$.
In this case, $\operatorname{Exp}^{(h,g)}$ is the family of  Gaussian distributions with mean $\sigma^2x$ and variance $\sigma^2$.
For count data, we set $g(x)= \exp(a x)$ and $h(y)= 1/y!$, where $a \in \R$. In this case, $\operatorname{Exp}^{(h,g)}$ is the family of Poisson distributions with intensity $\exp(a x)$.
For binary data, $g(x)= \log(1+\exp(x))$ and $h(y)= 1$. Here, $\operatorname{Exp}^{(h,g)}$  is the family of Bernoulli distributions with success probability $1/(1+\exp(-x))$.\\

In our estimation procedure, we choose a collection $\{ (g_j, h_j),~j \in \nint{m_2} \}$ of link functions and base functions corresponding to the observation spaces $\{(\Yset_j, \Ycurl_j, \mu_j) \, , \, j \in \nint{m_2}\}$. For each $(i,j) \in  \nint{m_1} \times \nint{m_2}$, we denote by $X^0_{ij}$ the value of the parameter minimizing the divergence between the distribution of $Y_{ij}$ and the exponential family $\operatorname{Exp}^{(h_j,g_j)}$, $j \in \nint{m_2}$:
\begin{equation}
\label{exp-model}
\TX_{ij} = \operatorname{argmin}_{x\in\R} \left\lbrace - \PE[ Y_{ij}]x + g_j(x)\right\rbrace.
\end{equation}
To model main effects and interactions we assume the matrix of parameters $\TX = (\TX_{ij})\in\mathbb{R}^{m_1\times m_2}$ can be decomposed as the sum of sparse main effects and low-rank interactions:
\begin{equation}
\label{eq:X_decomp}
\TX = \sum_{k=1}^N\Talpha_kU^k+\TL.
\end{equation}
Here, $\mathcal{U}=(U^1,\ldots,U^N)$ is a fixed dictionary of $m_1\times m_2$ matrices, $\Talpha$ is a sparse vector with unknown support $\mathcal{I} = \{k\in\nint{N}; \Talpha_k\neq 0\}$ and $\TL$ is an $m_1\times m_2$ matrix with low-rank.
The decomposition introduced in \eqref{eq:X_decomp} is a general model combining regression on a dictionary and low-rank design.

\subsection{Low-rank plus sparse matrix decomposition}
Such decompositions have been studied before in the literature. In particular, a large body of work has tackled the problem of reconstructing a sparse and a low-rank terms exactly from the observation of their sum. \cite{chandra} derived identifiability conditions under which exact reconstruction is possible when the sparse component is \textit{entry-wise sparse}; the same model was also studied in \cite{Hsu_robustmatrix}. \cite{Candes:2011:RPC} proved a similar results for entry-wise sparsity, when the location of the non-zero entries are chosen uniformly at random. \cite{Xu:2010:RPV} extended the model to study \textit{column-wise sparsity}. \cite{mardani2013} studied an even broader framework with general sparsity pattern and determined conditions under which exact recovery is possible. \\
In the present paper, we consider the problem of estimating a (general) sparse component and a low-rank term from noisy and incomplete observation of their sum, when the noise is heterogeneous and in the exponential family. Because of this noisy setting, we can not recover the two components exactly. Thus, we do not require strong identifiability conditions as those derived in \citep{chandra,Hsu_robustmatrix,Candes:2011:RPC,Xu:2010:RPV,mardani2013}. However, since decomposition \eqref{eq:X_decomp} may not be unique, we restrict our model to the following class of possible decompositions, to which our estimator will be the closest. From all possible decompositions $(\alpha, L)$, consider $(\alpha^1,L^1)$ such that
 \begin{equation}
 \label{eq1}
 (\alpha^1,L^1)\in \operatorname{argmin}_{X^0=\sum\alpha_kU^k + L} \{\left\|\alpha\right\|_0 + \operatorname{rank} L\}.
 \end{equation}
 Let $s^1=\left\|\alpha^1\right\|_0 + \operatorname{rank} L^1$. Finally let
 \begin{equation}
 \label{eq2}
 (\Talpha,\TL)\in \operatorname{argmin}_{\substack{X^0=\sum\alpha_kU^k + L\\\left\|\alpha\right\|_0 + \operatorname{rank} L=s^1}} \left\|\alpha\right\|_0 .
 \end{equation}
The decomposition satisfying \eqref{eq1} and \eqref{eq2} may also not be unique. Assume that there exists a pair $(\alpha^{\star},L^{\star}) \neq(\Talpha,\TL)$ satisfying \eqref{eq1} and \eqref{eq2}. Then,
\begin{equation*}
\begin{aligned}
&\left\|L^{\star}-L^0 \right\|_F =\left\|\sum_k\alpha_k^{\star}U^k-\sum_k\alpha^0_kU^k \right\|_F&&\leq 2a\left\|\Talpha\right\|_0\max_k\left\|U_k\right\|_2=R,
\end{aligned}
\end{equation*}
with $a$ an upper bound on $\norm{\Talpha}[\infty]$. This implies that for all such possible decompositions $(\alpha^{\star}, L^{\star})$ we have that $L^{\star}$ and $\sum_k\alpha_k^{\star}U^k$ are in the small balls of radius $R$ and centered at $\TL$ and $\sum_k\Talpha_kU^k$ respectively. Our statistical guarantees in \Cref{main-results} show that our estimators of $\TL$ and $\sum_k\Talpha_kU^k$ are in balls of radius at least $R$, and also centered at $\TL$ and $\sum_k\Talpha_kU^k$. Moreover, we also show that this error bound is minimax optimal in several situations. To summarize, in our model the decomposition may not be unique, but all the possible decompositions are in a neighborhood of radius smaller than the optimal convergence rate.

\subsection{Examples}
We now provide three examples of dictionaries which can be used to model classical main effects.

\begin{example}
\normalfont
{\textbf{Group effects}}
\label{ex:groups}
We assume the $m_1$ individuals are divided into $H$ groups. For $h\in\nint{H}$ denote by $I_h \subset \nint{m_1}$ the $h$-th group containing $n_h$ individuals. The size of the dictionary is $N=H m_2$ and its elements are, for all $(h,q)\in\nint{H}\times\nint{m_2}$, $U_{h,q} = (\mathbbm{1}_{I_h}(i) \mathbbm{1}_{\{q\}}(j))_{(i,j)\in\nint{m_1}\times\nint{m_2}}$.
This example corresponds to the model discussed in \Cref{sec:acs-intro}; we develop it further in \Cref{sec:numerical-results} with simulations and a survey data analysis.
\end{example}

\begin{example}
\normalfont
{\textbf{Row and column effects}}
\label{ex:row-column}
(see \textit{e.g.} \citep[Chapter 4]{Agresti13}) Another classical model is the log-linear model for count data analysis. Here, $Y$ is a matrix of counts. Assuming a Poisson model, the parameter matrix $\TX$, which satisfies $\PE[Y_{ij}] = \exp(\TX_{ij})$ for all $(i,j)\in\nint{m_1}\times \nint{m_2}$, is assumed to be decomposed as follows:
\begin{equation}
\label{log-lin}
\TX_{ij} = (\alpha_r^0)_i+(\alpha_c^0)_j+\TL_{ij},
\end{equation}
where $\alpha_r^0\in\mathbb{R}^{m_1}$, $\alpha_c^0\in\mathbb{R}^{m_2}$ and $\TL\in \mathbb{R}^{m_1\times m_2}$ is low-rank.
 This model is often used to analyze abundance tables of species across environments (see, \textit{e.g.}, \cite{10.7717/peerj.2885}). In this case the low-rank structure of $\TL$ reflects the presence of groups of similar species and environments. Model \eqref{log-lin} can be re-written in our framework as
\begin{equation*}
\TX = \sum_{k=1}^{N}\Talpha_k U_k+\TL,
\end{equation*}
with $\Talpha = (\alpha_r^0, \alpha_c^0)$, $N = m_1 + m_2$ and where for $i\in\nint{m_1}$ and $j\in\nint{m_2}$ we have
$U_{i} = (\mathbbm{1}_{\{i\}}(k))_{(k,l)\in\nint{m_1}\times \nint{m_2}}$  and $U_{m_1+j} = (\mathbbm{1}_{\{j\}}(l))_{(k,l)\in\nint{m_1}\times \nint{m_2}}$.
\end{example}

\begin{example}
\label{ex:corruptions}
\normalfont
{\bf Corruptions}
Our framework also embeds the well-known robust matrix completion problem \citep{Hsu_robustmatrix, Candes:2011:RPC, Klopp2017} which is of interest, for instance, in recommendation systems.
In this application, malicious users coexist with normal users, and introduce spurious perturbations.
Thus, in robust matrix completion, we observe noisy and incomplete realizations of a low-rank matrix $\TL$ 
of fixed rank and containing zeros at the locations of malicious users, 
perturbed by corruptions. 
The sparse component corresponding to corruptions is denoted $\sum_{(i,j)\in \mathcal{I}}\Talpha_kU_{i,j}$, where the $U_{i,j}$, $(i,j)\in\nint{m_1}\times \nint{m_2}$, are the matrices of the canonical basis of $\mathbb{R}^{m_1\times m_2}$ $U_{i,j} = (\mathbbm{1}_{\{i\}}(k) \mathbbm{1}_{\{j\}}(l))_{(k,l)\in\nint{m_1}\times\nint{m_1}}$ and
$\mathcal{I}$ is the set of indices of corrupted entries.  
Thus, the non-zero components of $\Talpha$ correspond to the locations where the malicious users introduced the corruptions.
 For this example, the particular case of quadratic link functions $g_j(x) = x^2/2$ was studied in \cite{Klopp2017}. We generalize these results in two directions: we consider mixed data types and general main effects.
\end{example}

\subsection{Missing values}
Finally, we consider a setting with missing observations. Let $\Omega = (\Omega_{ij})$ be an observation mask with $\Omega_{ij} = 1$ if $Y_{ij}$ is observed and $\Omega_{ij} = 0$ otherwise. We assume that $\Omega$ and $Y$ are independent, i.e. a Missing Completely At Random (MCAR) scenario \citep{Little02}: $(\Omega_{ij})$ are independent Bernoulli random variables with probabilities $\pi_{ij}$, $(i,j) \in \nint{m_1}\times \nint{m_2}$.  Furthermore for all $(i,j) \in \nint{m_1}\times \nint{m_2}$, we assume there exists $p>0$ allowed to vary with $m_1$ and $m_2$, such that
\begin{equation}
\label{eq:mcar}
\pi_{ij}\geq p.
\end{equation}
For $j\in\nint{m_2}$, denote by $\pi_{.j} = \sum_{i=1}^{m_1}\pi_{ij}$, $j\in \nint{m_2}$ the probability of observing an element in the $j$-th column. Similarly, for $i\in\nint{m_1}$, denote by $\pi_{i.} = \sum_{j=1}^{m_2}\pi_{ij}$ the probability of observing an element in the $i$-th row. We define the following upper bound:
\begin{equation}
\label{eq:beta}
\max_{i,j}(\pi_{i.},\pi_{.j})\leq \beta.
\end{equation}

\section{Estimation procedure}
\label{estim}
Consider the data-fitting term defined by the heterogeneous exponential family negative quasi log-likelihood
\begin{equation}
\label{eq:neg-log-lik}
\mathcal{L}(X;Y,\Omega) = \sum_{i=1}^{m_1}\sum_{j=1}^{m_2}\Omega_{ij}\left\{-Y_{ij}X_{ij}+g_j(X_{ij}) \right\},
\end{equation}
and define the function
\begin{equation}
\label{eq:function-l}
f(\alpha,L)=\mathcal{L}(\fu{\alpha}+L;Y,\Omega),
\end{equation}
where for $\alpha\in\R^N$,
$\fu{\alpha} = \sum_{k=1}^N\alpha_kU_k$. We assume  $\norm{\Talpha}[\infty]\leq a$ and $\norm{\TL}[\infty]\leq a$ where $a > 0$ is a known upper bound.
We use the nuclear norm $\norm{\cdot}[*]$ (the sum of singular values) and $\ell_1$ norm $\norm{\cdot}[1] $ penalties as convex relaxations of the rank and sparsity constraints respectively:
\begin{align}
\label{eq:estimator}
&(\hat{\alpha},\hat{L})  \in  \operatorname{argmin}_{(\alpha,L)} F(\alpha, L)\\
& \text{s. t. } \norm{\alpha}[\infty]\leq a, \norm{L}[\infty]\leq a,
\end{align}
\begin{equation}
\label{eq:function-F}
F(\alpha,L)=f(\alpha,L) + \lambda_1\norm{L}[*] + \lambda_2\norm{\alpha}[1],
\end{equation}
with $\lambda_1>0$ and $\lambda_2>0$. In the sequel, for all $(\hat{\alpha},\hat{L})$ in the set of solutions, we denote by $\hat{X} = \fu{\hat\alpha}  + \hat{L}$.

\subsection{Block coordinate gradient descent (BCGD)}
\label{subsec:algo}
To solve \eqref{eq:estimator} we develop a block coordinate gradient descent algorithm where the two components $\alpha$ and $L$ are updated alternatively in an iterative procedure. At every iteration, we compute a (strictly convex) quadratic approximation of the data fitting term and apply block coordinate gradient descent to generate a search direction.  
This BCGD algorithm is a special instance of the coordinate gradient descent method for non-smooth separable minimization developed in \cite{tseng:yun:2009}.

 Note that the upper bound on $\norm{\alpha}[\infty]$ and $\norm{L}[\infty]$ is required to derive the statistical guarantees and, for simplicity, we did not implement it in practice. That is, we solve the following relaxed problem:
\begin{align}
\label{eq:estimator2}
(\hat{\alpha},\hat{L})  \in & \operatorname{argmin}_{(\alpha,L)} F(\alpha, L).
\end{align}
\paragraph{Quadratic approximation.} For any $(\alpha, L)\in\mathbb{R}^N\times\mathbb{R}^{m_1\times m_2}$ and for any direction $(d_{\alpha}, d_L)\in\mathbb{R}^N\times\mathbb{R}^{m_1\times m_2}$, consider the following local approximation of the data fitting term
\begin{equation}
\label{eq:quad-approx0}
f(\alpha+d_{\alpha}, L+d_L)= f(\alpha,L)+  \approxQL(\fu{\alpha}+L,d_{\alpha},d_L) + o(\norm{d_{\alpha}}[2]^2+\norm{d_L}[F]^2) \eqsp,
\end{equation}
where we have set
\begin{multline}
\label{eq:quad-approx}
\approxQL(X,d_{\alpha},d_L)= -2\sum_{i=1}^{m_1} \sum_{j=1}^{m_2} w_{ij}[X_{ij}] Z_{ij}[X_{ij}]  (\fu{d_{\alpha}}_{ij} + {d_L}_{ij}) \\+ \sum_{i=1}^{m_1} \sum_{j=1}^{m_2} w_{ij}[X_{ij}](\fu{d_{\alpha}}_{ij} + {d_L}_{ij})^2
 + \nu\norm{d_{\alpha}}[2]^2+\nu\norm{d_L}[F]^2.
\end{multline}
In \eqref{eq:quad-approx}, $\nu>0$ is a positive constant and for $x \in \rset$ and $(i,j) \in \nint{m_1} \times \nint{m_2}$,
\begin{equation}
\label{eq:definition-w-Z}
w_{ij}[x] = \Omega_{ij} g_j''(x)/2 \eqsp, \quad Z_{ij}[x] = (Y_{ij} - g_j'(x))/g_j''(x) \eqsp.
\end{equation}
Note that the approximation \eqref{eq:quad-approx} is simply a Taylor expansion of $\mathcal{L}$ around $X$, with an additional quadratic term $\nu\norm{d_{\alpha}}[2]^2+\nu\norm{d_L}[F]^2$ ensuring its strong convexity. Denote by $(\alpha^{[t]},L^{[t]})$ the fit of the parameter at iteration $t$ and set $X^{[t]}=\fu{\alpha^{[t]}}+L^{[t]}$. We update $\alpha$ and $L$ alternatively as follows.
\paragraph{$\alpha$-Update.} We first solve
\begin{equation}
\label{eq:search-alpha}
d_{\alpha}^{[t]} \in  \argmin_{d \in \rset^N} \left\{ \approxQL(X^{[t]},d,0) + \lambda_2 \norm{\alpha^{[t]}+d}[1] 
\right\} \eqsp.
\end{equation}
Problem \eqref{eq:search-alpha} may be rewritten as a weighted Lasso problem:
\begin{equation*}
\label{eq:search-alpha-1}
\argmin_{\alpha \in \rset^d} \sum_{i=1}^{m_1} \sum_{j=1}^{m_2} w_{ij}[X_{ij}^{[t]}](Z_{ij}^{[t]} - [\fu{\alpha}]_{ij})^2+\nu\norm{\alpha^{[t]}-\alpha}[2]^2 + \lambda_2 \norm{\alpha}[1]\eqsp,
\end{equation*}
where for $i,j \in \nint{m_1} \times \nint{m_2}$ we have set $Z_{ij}^{[t]} \eqdef Z_{ij}[X_{ij}^{[t]}] + \fu{\alpha^{[t]}}$.
Efficient numerical solutions to this problem are available (see, e.g., \cite{friedman2010regularization}). 
To update $\alpha^{[t]}$, we select a step size with an Armijo line search. The procedure goes as follows. We choose $\tau_{\text{init}} > 0$ and we let $\tau^{[t]}_{\alpha}$ be the largest element of $\{\tau_{\text{init}} \beta^j \}_{j=0}^\infty$ satisfying
\begin{equation*}
f(\alpha^{[t]}+\tau_{\alpha}^{[t]} d^{[t]}, L^{[t]})+\lambda_2\norm{\alpha^{[t]}+\tau_{\alpha}^{[t]} d^{[t]}}[1] 
\leq f(\alpha^{[t]}, L^{[t]})+\lambda_2\norm{\alpha^{[t]}}[1] + \tau^{[t]}_{\alpha} \zeta \Gamma^{[t]}_{\alpha},
\end{equation*}
where $0 < \beta < 1$, $0 < \zeta < 1$, $0 \leq \theta < 1$, and
\begin{multline*}
\Gamma^{[t]}_{\alpha} \eqdef -2 \sum_{i=1}^{m_1} \sum_{j=1}^{m_2} w_{ij}[X_{ij}^{[t]}] Z_{ij}[X_{ij}^{[t]}] [\fu{d^{[t]}}]_{ij} + \theta \sum_{i=1}^{m_1} \sum_{j=1}^{m_2} w_{ij}[X_{ij}^{[t]}] \left[ \fu{d^{[t]}} \right]_{ij}^2 + \nu\norm{d^{[t]}}[2]^2\\
+ \lambda_2 \left\{ \norm{\alpha^{[t]} + d^{[t]}}[1] - \norm{\alpha^{[t]}}[1] \right\}.
\end{multline*}
We set $\alpha^{[t+1]}=\alpha^{[t]}+\gamma^{[t]} d^{[t]}_{\alpha}$ and $X^{[t+1/2]}= \fu{\alpha^{[t+1]}}+L^{[t]}$. 
\paragraph{$L$-Update.} We first solve
\begin{equation}
\label{eq:search-D}
d_L^{[t]} \eqdef \argmin_{d \in \rset^{m_1 \times m_2}} \left\{ \approxQL(X^{[t+1/2]},0,d) + \lambda_1 \norm{L^{[t]}+d}[*] 
\right\},
\end{equation}
which is equivalent to
\begin{equation}
\label{eq:search-D1}
\argmin_{L \in \rset^{m_1 \times m_2}} \sum_{i=1}^{m_1} \sum_{j=1}^{m_2} (\nu+w_{ij}[X_{ij}^{[t+1/2]}]) (Z_{ij}^{[t+1/2]}-L_{ij})^2 + \lambda_1 \norm{L}[*],
\end{equation}
where for $i,j \in \nint{m_1} \times \nint{m_2}$ we have set $$Z_{ij}^{[t+1/2]}=  \frac{w_{ij}[X_{ij}^{[t+1/2]}](Z_{ij}[X^{[t+1/2]}]+L^{[t]}_{ij})+\nu L^{[t]}_{ij}}{\nu + w_{ij}[X_{ij}^{[t+1/2]}]}.$$
The minimisation problem \eqref{eq:search-D1} may be seen as a weighted version of \texttt{softImpute} \citep{softImpute}. \cite{Srebro:2003:WLA} proposed to solve \eqref{eq:search-D1} using an EM algorithm where the weights in $(0,1]$ are viewed as frequencies of observations in a missing value framework (see also \cite{mazumder2010spectral}). We use this procedure, which involves soft-thresholding of the singular values of $L$, by adapting the \href{https://CRAN.R-project.org/package=softImpute}{\texttt{softImpute}} package \citep{softImpute}.
To update $L^{[t]}$, we choose the step size using again the Armijo line search.
We set $\tau_{\text{init}} > 0$ and  let $\tau_L^{[t]}$ be the largest element of $\{\tau_{\text{init}} \beta^j \}_{j=0}^\infty$ satisfying
\begin{equation*}
f(\alpha^{[t+1]},L^{[t]}+\tau_L^{[t]}d_L^{[t]})+\lambda_1\norm{L^{[t]}+\gamma^{[t]}d_L^{[t]}}[*]\\ \leq f(\alpha^{[t+1]},L^{[t]})+\lambda_1\norm{L^{[t]}}[*] + \tau_L^{[t]} \zeta \Gamma_L^{[t]},
\end{equation*}
\begin{multline*}
\Gamma_L^{[t]} \eqdef -2 \sum_{i=1}^{m_1} \sum_{j=1}^{m_2} w_{ij}[X_{ij}^{[t+1/2]}] Z_{ij}[X_{ij}^{[t+1/2]}] {d_L^{[t]}}_{ij} + \theta \sum_{i=1}^{m_1} \sum_{j=1}^{m_2} w_{ij}[X_{ij}^{[t+1/2]}] { {d_L^{[t]}}_{ij}}^2 \\
+ \lambda_1 \left\{ \norm{L^{(t)} + d_L^{[t]}}[*] - \norm{d_L^{[t]}}[*] \right\} \eqsp.
\end{multline*}
We finally set $L^{[t+1]}= L^{[t]}+ \tau_L^{[t]} d_L^{[t]}$.

\subsection{Convergence of the BCGD algorithm}
\label{subsec:cvg-algo}
The algorithm described in \Cref{subsec:algo} is a particular case of the coordinate gradient descent method for nonsmooth minimisation introduced in \cite{tseng:yun:2009}. In the aforementioned paper, the authors studied the convergence of the iterate sequence to a stationary point of the objective function. Here, we apply their general result \cite[Theorem~1]{tseng:yun:2009} to our problem to obtain global convergence guarantees. Consider the following assumption on the dictionary $\mathcal{U}$.
\begin{assumption}
\label{ass:dict} For all $k\in \nint{N}$ and $(i,j)\in \nint{m_1}\times\nint{m_2}$, $U^{k}_{ij}\in[-1,1]$ and there exists $\ae> 0$ such that for all $(i,j)\in \nint{m_1}\times\nint{m_2}$, $\sum_{k=1}^N|U^{k}_{ij}| \leq \ae.$
\end{assumption}
Assumption \textbf{H}\ref{ass:dict} is satisfied in the three models introduced in Examples \ref{ex:groups}, \ref{ex:row-column} and \ref{ex:corruptions}: for group effects and corruptions with $\ae =1$ and for row and column effects with $\ae = 2$. In particular, it guarantees that $\TX = \fu{\Talpha} +\TL$ satisfies $\norm{\TX}[\infty]\leq (1+\ae)a$. Plugging this in the definition of $\TX$ in \eqref{exp-model}, this assumption also implies that $\PE[Y_{ij}]\in g_j'([-(1+\ae)a, (1+\ae)a])$ for all $(i,j)\in\nint{m_1}\times\nint{m_2}$. Note that \textbf{H}\ref{ass:dict} can be relaxed by $\norm{U_k}[\infty]\leq \rho$, with $\rho$ an arbitrary constant. Consider also the following assumption on the link functions.
\begin{assumption}
\label{ass:cvx}
For all $j\in\nint{m_2}$ the functions $g_j$ are twice continuously differentiable. Moreover, there exist $0<\smin,\smax < +\infty$ such that for all $|x|\leq (1+\ae)a$ and $j\in\nint{m_2}$,
$\smin^2\leq g_j''(x)\leq \smax^2.$
\end{assumption}
Assumptions \textbf{H}\ref{ass:dict}--\ref{ass:cvx} imply that the data-fitting term has Lipschitz gradient. Furthermore, the quadratic approximation defined in \eqref{eq:quad-approx} is strictly convex at every iteration. We obtain the following convergence result.
\begin{theorem}
\label{th:convergence}
Assume \textbf{H}\ref{ass:dict}--\ref{ass:cvx} and let $\{(\alpha^{[k]},L^{[k]})\}$ be the iterate sequence generated by the BCGD algorithm. Then the following results hold.
\begin{enumerate}[label=(\alph*)]
\item $\{(\alpha^{[k]},L^{[k]})\}$ has at least one accumulation point. Furthermore, all the accumulation points of $\{(\alpha^{[k]},L^{[k]})\}$ are global optima of $F$.\label{th:convergence1}
\item $\{F(\alpha^{[k]},L^{[k]})\}\rightarrow F(\hat\alpha, \hat L)$.\label{th:convergence2}
\end{enumerate}
\end{theorem}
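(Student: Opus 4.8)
The plan is to recognise the BCGD algorithm of \Cref{subsec:algo} as an instance of the coordinate gradient descent method of \cite{tseng:yun:2009} applied to $F=f+P$, with smooth part $f(\alpha,L)=\mathcal L(\fu{\alpha}+L;Y,\Omega)$ and nonsmooth part $P(\alpha,L)=\lambda_1\norm{L}[*]+\lambda_2\norm{\alpha}[1]$, and then to quote \cite[Theorem~1]{tseng:yun:2009}. First I would record the structural facts. (i) $f$ is $C^2$, since each $g_j$ is $C^2$ by \textbf{H}\ref{ass:cvx} and $(\alpha,L)\mapsto\fu{\alpha}+L$ is linear; and $f$ is convex, since $g_j''\ge\smin^2>0$ makes each $g_j$ convex. (ii) $P$ is proper, closed, convex, and block-separable across the two blocks $\alpha$ and $L$. (iii) Updating $\alpha$ then $L$ is the Gauss--Seidel (cyclic) block rule, which is essentially cyclic with period $2$. (iv) The model $\approxQL(X,d_{\alpha},d_L)$ in \eqref{eq:quad-approx} is exactly $\nabla f(\alpha,L)^{\top}(d_{\alpha},d_L)+\tfrac12(d_{\alpha},d_L)^{\top}H(d_{\alpha},d_L)$ with $H=2M^{\top}WM+2\nu I$, where $M$ is the linear map $(d_{\alpha},d_L)\mapsto\fu{d_{\alpha}}+d_L$ restricted to the observed entries and $W=\mathrm{diag}(w_{ij}[X])$. (v) The Armijo rules on $\alpha$ and on $L$, with their quantities $\Gamma^{[t]}_{\alpha}$, $\Gamma^{[t]}_L$, are precisely the line search of \cite{tseng:yun:2009}.

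The crux is the uniform spectral bound $\underline\lambda I\preceq H^{[t]}\preceq\overline\lambda I$ with $0<\underline\lambda\le\overline\lambda<\infty$ independent of $t$ that \cite{tseng:yun:2009} require. I would obtain it in two steps. First, $F$ is coercive: $\mathcal L(\cdot;Y,\Omega)$ is bounded below (each summand $t\mapsto-Y_{ij}t+g_j(t)$ is convex and bounded below on $\R$ in the data models considered), while $\lambda_1\norm{L}[*]\ge\lambda_1\norm{L}[F]$ and $\lambda_2\norm{\alpha}[1]\ge\lambda_2\norm{\alpha}[2]$ force $F(\alpha,L)\to\infty$ as $\norm{(\alpha,L)}[2]\to\infty$. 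Since the Armijo rule makes $\{F(\alpha^{[k]},L^{[k]})\}$ nonincreasing, all iterates lie in the compact sublevel set $\mathcal S:=\{F\le F(\alpha^{[0]},L^{[0]})\}$ (in the constrained formulation \eqref{eq:estimator} this is immediate, as the feasible set is already compact). Second, on $\mathcal S$ the matrices $X^{[t]}=\fu{\alpha^{[t]}}+L^{[t]}$ stay in a fixed compact set, hence $0\le w_{ij}[X^{[t]}_{ij}]=\Omega_{ij}g_j''(X^{[t]}_{ij})/2\le\bar w<\infty$ by continuity of $g_j''$; combined with a bound on $\|M\|$ coming from \textbf{H}\ref{ass:dict} (the entries of $\mathcal U$ lie in $[-1,1]$ and their absolute column sums are $\le\ae$), this gives $2\nu I\preceq H^{[t]}\preceq(2\bar w\|M\|^2+2\nu)I$. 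The same compactness gives that $\nabla f$ is Lipschitz on $\mathcal S$ and its relevant enlargements. Because of the $\nu$-term, each subproblem \eqref{eq:search-alpha}, \eqref{eq:search-D} is strictly convex, so its minimiser is unique and is a strict descent direction whenever the current point is not stationary, whence the line searches terminate in finitely many steps.

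It then remains to translate the conclusions of \cite[Theorem~1]{tseng:yun:2009}: $\{(\alpha^{[k]},L^{[k]})\}$ has an accumulation point (by compactness of $\mathcal S$), and every accumulation point is a stationary point of $F$; since $F$ is convex, every stationary point is a global minimiser, which is \ref{th:convergence1}. For \ref{th:convergence2}, $\{F(\alpha^{[k]},L^{[k]})\}$ is nonincreasing and bounded below, hence converges to some $F_\infty$; by continuity of $F$ any accumulation point $\bar x$ satisfies $F(\bar x)=F_\infty$, and by \ref{th:convergence1} $\bar x$ is a global minimiser, so $F_\infty=F(\hat\alpha,\hat L)$.

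The main obstacle I anticipate is entirely bookkeeping: matching the precise quantities $\approxQL$, $\Gamma^{[t]}_{\alpha}$, $\Gamma^{[t]}_L$ and the step-size rules of \Cref{subsec:algo} to the notation and hypotheses of \cite{tseng:yun:2009} (their approximate Hessian $H^k$, their descent quantity, and their essentially-cyclic selection rule), and establishing the uniform spectral bounds on $H^{[t]}$ — which itself hinges on first pinning down the compact sublevel set $\mathcal S$, hence boundedness of the iterates. Once the iterates are confined to $\mathcal S$, convexity of $F$, finite termination of the line search, and Lipschitzness of $\nabla f$ are all routine.
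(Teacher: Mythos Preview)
Your proposal is correct and follows essentially the same route as the paper: both cast the BCGD algorithm as an instance of \cite{tseng:yun:2009}, establish compactness of the sublevel sets of $F$ via a lower bound on $\mathcal{L}$ together with the coercive penalties, and then convert the stationary-point conclusion of \cite[Theorem~1]{tseng:yun:2009} into a global-optimality statement by convexity. You are in fact more explicit than the paper about verifying the uniform spectral bounds $\underline\lambda I\preceq H^{[t]}\preceq\overline\lambda I$ that Tseng--Yun require, which the paper simply asserts follow from \textbf{H}\ref{ass:dict}--\ref{ass:cvx}.

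The one place worth tightening is your claim that each summand $t\mapsto -Y_{ij}t+g_j(t)$ is bounded below ``in the data models considered.'' The paper proves this for any canonical exponential family by a short contradiction: if $-Y_{ij}t+g_j(t)\to-\infty$ along some ray, then for large $x$ the density $h_j(y)\exp(yx-g_j(x))$ would integrate to strictly more than $1$ over $\{y\ge Y_{ij}\}$, violating normalisation. This argument avoids any case analysis and does not rely on \textbf{H}\ref{ass:cvx} (whose curvature bounds hold only on $|x|\le(1+\ae)a$ and so say nothing about global behaviour). Replacing your phrase by this one-line normalisation argument makes your coercivity step self-contained.
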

\begin{proof}
See \Cref{proof:convergence}.
\end{proof}

\section{Statistical guarantees}
\label{main-results}

We now state our main statistical results. Denote by $\pscal{\cdot}{\cdot}$ the usual trace scalar product in $\R^{m_1\times m_2}$. For $a\geq 0$ and a sparsity pattern $\mathcal{I}\subset \nint{N}$, define the following sets
\begin{equation}
\label{eq:true-set}
\begin{aligned}
& \mathcal{E}_1(a,\mathcal{I}) = \left\lbrace \alpha\in \mathbb{R}^N, \norm{\alpha}[\infty]\leq a, \operatorname{supp}(\alpha)\subset \mathcal{I}\right \rbrace,\\
&\mathcal{E}_2(a,\mathcal{I}) = \left\lbrace L\in \mathbb{R}^{m_1\times m_2},\norm{L}[\infty]\leq a,
\underset{k\in\mathcal{I}}{\max}|\pscal{L}{U_k}|=0
\right\rbrace,\\
&\mathcal{X}(a,\mathcal{I}) = \left\lbrace X = \fu{\alpha}+L  ; (\alpha,L)\in \mathcal{E}_1(a,\mathcal{I})\times \mathcal{E}_2(a,\mathcal{I}) \right\rbrace.
\end{aligned}
\end{equation}
\begin{assumption}
\label{ass:true-set}
There exist $a>0$ and $\mathcal{I}\subset \nint{N}$ such that $(\Talpha,\TL) \in \mathcal{E}_1(a,\mathcal{I})\times\mathcal{E}_2(a,\mathcal{I}).$
\end{assumption}
Assumption \textbf{H}\ref{ass:true-set} can be relaxed to allow upper bounds to depend on the entries of $\Talpha$ and $\TL$, but we stick to  \textbf{H}\ref{ass:true-set} for simplicity.  
\subsection{Upper bounds}
\label{up-bounds}
We now derive upper bounds for the Frobenius and $\ell_2$ norms of the estimation errors $\TL - \hat{L}$ and $\Talpha-\hat{\alpha}$ respectively.
In \Cref{th:general-th} we give a general result under conditions on the regularization parameters $\lambda_1$ and $\lambda_2$, which depend on the random matrix $\nabla\mathcal{L}(\TX;Y,\Omega)$. Then, \Cref{lemma:SigmaR} and \ref{lemma:Sigma} allow us to compute values of $\lambda_1$ and $\lambda_2$ that satisfy the assumptions of \Cref{th:general-th} with high probability. Finally we combine these results in \Cref{th:upper-bound}.

We denote $\vee$ and $\wedge $ the $\max$ and $\min$ operators respectively, $M = m_1\vee m_2$, $m = m_1\wedge m_2$ and $d = m_1+m_2$. We also define $r = \rank{\TL}$, $s = \norm{\Talpha}[0]$ and $\umax = \max_k\norm{U_k}[1]$.
Let $(E_{ij})_{(i,j)\in \nint{m_1}\times\nint{m_2}}$ be the canonical basis of $\R^{m_1\times m_2}$ and $\{\epsilon_{ij}\}$ an i.i.d. Rademacher sequence independent of $Y$ and $\Omega$. Define
\begin{equation}
\label{eq:sigma}
\Sigma_R = \sum_{i=1}^{m_1}\sum_{j=1}^{m_2}\Omega_{ij}\epsilon_{ij}E_{ij}\quad \text{and}\quad\nabla \mathcal{L}(X; Y,\Omega) = \sum_{i = 1}^{m_1}\sum_{j = 1}^{m_2}\Omega_{ij}\left\lbrace -Y_{ij} + g_j'\left( X_{ij} \right)\right\rbrace E_{ij}.
\end{equation}
$\Sigma_R$ is a random matrix associated with the missingness pattern and $\nabla \mathcal{L}(X; Y,\Omega)$ is the gradient of $\mathcal{L}$ with respect to $X$. Define also
\begin{equation*}
\label{eq:upper-bounds-rate-alpha}
\begin{aligned}
&\Theta_1 && = \frac{\lambda_2}{\smin^2}+ a^2\umax\PE[{\norm{\Sigma_R}[\infty]}]+\frac{p}{\norm{\Talpha}[1]}\left(\frac{a}{p}\right)^2\log(d),\\
&\Theta_2 && = \lambda_1^2+(1+\ae)a\PE[{\norm{\Sigma_R}^2}],\\
& \Theta_3 && = \frac{\lambda_2}{\lambda_1} + 2\Theta_1.
\end{aligned}
\end{equation*}
\begin{theorem}
\label{th:general-th}
Assume \textbf{H}\ref{ass:dict}-\ref{ass:true-set} and let
 $$\lambda_1\geq 2\norm{\nabla\mathcal{L}(\TX;Y,\Omega)} \quad\text{and}\quad \lambda_2\geq 2\umax\left(\norm{\diff{\TX}}[\infty]+ 2\smax^2(1+\ae)a\right).$$
Then, with probability at least $1-8d^{-1}$,
\begin{equation}
\norm{\fu{\Talpha}-\fu{\hat\alpha}}[F]^2\leq \frac{as}{p}C_1\Theta_1\quad \text{and}\quad
 \norm{\TL - \hat{L}}[F]^2 \leq  \frac{r}{p^2}C_2\Theta_2 + \frac{as}{p}C_3\Theta_3,
\end{equation}
where $C_1$, $C_2$ and $C_3$ are numerical constants independent of $m_1$, $m_2$ and $p$.
\end{theorem}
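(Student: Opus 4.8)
I would follow the route that is by now standard for doubly-penalised $M$-estimators with a low-rank plus sparse parametrisation, in the spirit of \cite{KloppLafond2015,Klopp2017}, but propagating the dictionary term through the whole argument. Set $\Delta_\alpha = \hat\alpha - \Talpha$, $\Delta_L = \hat L - \TL$, and $\Delta = \hat X - \TX = \fu{\Delta_\alpha} + \Delta_L$. The plan starts from the basic inequality $F(\hat\alpha,\hat L)\leq F(\Talpha,\TL)$, which rearranges to $\mathcal{L}(\hat X;Y,\Omega)-\mathcal{L}(\TX;Y,\Omega)\leq \lambda_1(\norm{\TL}[*]-\norm{\hat L}[*])+\lambda_2(\norm{\Talpha}[1]-\norm{\hat\alpha}[1])$. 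Since every $g_j$ is convex with $g_j''\geq\smin^2$ on $[-(1+\ae)a,(1+\ae)a]$ by \textbf{H}\ref{ass:cvx}, and since both $\hat X$ and $\TX$ are bounded by $(1+\ae)a$ in sup-norm by \textbf{H}\ref{ass:dict} together with the constraints and \textbf{H}\ref{ass:true-set}, a second-order Taylor expansion of $\mathcal{L}$ along the segment $[\TX,\hat X]$ gives
\[
\mathcal{L}(\hat X;Y,\Omega)-\mathcal{L}(\TX;Y,\Omega)\;\geq\;\pscal{\diff{\TX}}{\Delta}+\frac{\smin^2}{2}\sum_{i,j}\Omega_{ij}\Delta_{ij}^2 .
\]
Combining the two displays isolates $\tfrac{\smin^2}{2}\sum_{i,j}\Omega_{ij}\Delta_{ij}^2$ against the linear term and the penalty gaps. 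I would then bound the linear term by splitting $\pscal{\diff{\TX}}{\Delta}=\pscal{\diff{\TX}}{\Delta_L}+\sum_k(\Delta_\alpha)_k\pscal{\diff{\TX}}{U_k}$ and using trace duality, $|\pscal{\diff{\TX}}{\Delta_L}|\leq\norm{\diff{\TX}}\,\norm{\Delta_L}[*]$, together with $|\pscal{\diff{\TX}}{U_k}|\leq\norm{U_k}[1]\norm{\diff{\TX}}[\infty]\leq\umax\norm{\diff{\TX}}[\infty]$. The hypotheses $\lambda_1\geq 2\norm{\diff{\TX}}$ and $\lambda_2\geq 2\umax(\norm{\diff{\TX}}[\infty]+2\smax^2(1+\ae)a)$ are tailored so that these terms are swallowed by half of the penalty gaps; the additional $2\smax^2(1+\ae)a$ furnishes the slack needed later to absorb the cross terms $\pscal{\fu{\Delta_\alpha}}{\Delta_L}$ arising because the dictionary span need not be orthogonal to $\Delta_L$ (one has $\pscal{U_k}{\TL}=0$ for $k\in\mathcal{I}$ by the definition of $\mathcal{E}_2$, but $\pscal{U_k}{\hat L}$ may be nonzero).

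Next I would derive a cone/compatibility condition. Using $\operatorname{supp}(\Talpha)$, of cardinality $s$, and the decomposability of $\lone$ at $\Talpha$, together with the decomposability of $\lnuc$ along the row and column spaces of $\TL$ (with associated projection $\mathcal{P}_{\TL}$, so that $\rank{\mathcal{P}_{\TL}\Delta_L}\leq 2r$), the steps above force the off-support parts $\norm{(\Delta_\alpha)_{\operatorname{supp}(\Talpha)^c}}[1]$ and $\norm{\mathcal{P}_{\TL}^{\perp}\Delta_L}[*]$ to be dominated by the on-support parts up to lower-order terms; hence $\norm{\Delta_\alpha}[1]\lesssim\sqrt{s}\,\norm{\Delta_\alpha}[2]$ and $\norm{\Delta_L}[*]\lesssim\sqrt{r}\,\norm{\Delta_L}[F]$ on the relevant event. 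The crucial remaining ingredient is restricted strong convexity: the empirical form $\sum_{i,j}\Omega_{ij}\Delta_{ij}^2$ must be compared with $\sum_{i,j}\pi_{ij}\Delta_{ij}^2\geq p\,\norm{\Delta}[F]^2$. Writing the residual $\sum_{i,j}(\Omega_{ij}-\pi_{ij})\Delta_{ij}^2$ and using $\norm{\Delta}[\infty]\leq 2(1+\ae)a$, a symmetrisation followed by the contraction principle for $t\mapsto t^2$ (Lipschitz with constant $\asymp(1+\ae)a$ on the relevant range) bounds it by a multiple of $(1+\ae)a\,\PE[\pscal{\Sigma_R}{\Delta}]$, itself at most $(1+\ae)a\big(\PE[\norm{\Sigma_R}]\,\norm{\Delta_L}[*]+\umax\PE[\norm{\Sigma_R}[\infty]]\,\norm{\Delta_\alpha}[1]\big)$; a Bernstein-type bound on the deviation of the quadratic form around its conditional mean, plus a union bound over the $N$ dictionary entries, contributes the $\log(d)$ factor appearing in $\Theta_1$. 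This yields, on an event of probability at least $1-8d^{-1}$, a lower bound of the shape $\sum_{i,j}\Omega_{ij}\Delta_{ij}^2\geq \tfrac{p}{2}\norm{\Delta}[F]^2 - (\text{complexity remainder})$.

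Finally I would assemble everything. Inserting the linear-term bound, the cone inequalities and the restricted strong convexity bound into the basic inequality, and splitting $\norm{\Delta}[F]^2$ into $\norm{\fu{\Delta_\alpha}}[F]^2$ and $\norm{\Delta_L}[F]^2$ with the cross term absorbed as above, produces a quadratic inequality in $\norm{\fu{\Delta_\alpha}}[F]$; completing the square gives $\norm{\fu{\Talpha}-\fu{\hat\alpha}}[F]^2\lesssim\frac{as}{p}\Theta_1$. Feeding this back into the inequality governing $\norm{\Delta_L}[F]$ — where the $\lambda_2\norm{\Delta_\alpha}[1]$ contribution re-enters after division by $\lambda_1$, which is exactly why $\Theta_3=\lambda_2/\lambda_1+2\Theta_1$ carries both $\lambda_2/\lambda_1$ and $\Theta_1$ — and grouping the genuinely low-rank terms into $\Theta_2=\lambda_1^2+(1+\ae)a\,\PE[\norm{\Sigma_R}^2]$, one gets $\norm{\TL-\hat L}[F]^2\lesssim\frac{r}{p^2}\Theta_2+\frac{as}{p}\Theta_3$; tracking the constants yields $C_1,C_2,C_3$ independent of $m_1,m_2,p$. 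The step I expect to be the main obstacle is the restricted strong convexity bound: controlling $\sum_{i,j}\Omega_{ij}\Delta_{ij}^2$ from below uniformly over the low-rank-plus-sparse cone requires a peeling argument over dyadic scales of $\norm{\Delta}[F]$ (the radius is not known a priori), combined with the contraction step and with sharp control of $\PE[\norm{\Sigma_R}]$ and $\PE[\norm{\Sigma_R}[\infty]]$, all while keeping the sparse dictionary and low-rank contributions disentangled so that the two error bounds can be read off separately.
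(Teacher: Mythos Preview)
Your overall skeleton---basic inequality, quadratic lower bound from \textbf{H}\ref{ass:cvx}, trace/$\ell_1$--$\ell_\infty$ duality, decomposability of $\lnuc$, and restricted strong convexity via symmetrisation, contraction and peeling---matches the paper's route, and you correctly identify the RSC step as the technical crux (the paper packages it as a separate lemma on the set $\mathcal{C}(d_1,d_\Pi,\rho,\varepsilon)$ with exactly the peeling argument you sketch).

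There is, however, a real gap in how you propose to extract the \emph{separate} bound on $\norm{\fu{\Talpha}-\fu{\hat\alpha}}[F]^2$. From the single basic inequality $F(\hat\alpha,\hat L)\leq F(\Talpha,\TL)$ together with RSC you only control $\norm{\Delta X}[F]^2=\norm{\fu{\Delta_\alpha}+\Delta_L}[F]^2$, and the cross term $2\pscal{\fu{\Delta_\alpha}}{\Delta_L}$ can be large and negative; ``completing the square'' does not resolve this, because after using $(a+b)^2\geq a^2/2-b^2$ in either direction you are left needing a bound on the \emph{other} component, which is circular. The paper breaks this circularity by a different device: it exploits the first-order optimality condition at the \emph{hybrid} point $(\hat L,\Talpha)$---equivalently, compares $\hat X$ with $\tilde X=\hat L+\fu{\Talpha}$ rather than with $\TX$---to obtain directly $\norm{\hat\alpha}[1]\leq 3\norm{\Talpha}[1]$ and then $\norm{\fu{\Delta_\alpha}}[\Omega]^2\leq 3\lambda_2\smin^{-2}\norm{\Talpha}[1]$. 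This partial-optimality step, not a cone condition on $\Delta_\alpha$, is what produces the standalone $\alpha$-bound; only afterwards is it fed back to isolate $\norm{\Delta_L}[F]^2$.

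Relatedly, your reading of the extra $2\smax^2(1+\ae)a$ in the constraint on $\lambda_2$ is off. It is not there to absorb $\pscal{\fu{\Delta_\alpha}}{\Delta_L}$; it is precisely the price of evaluating the gradient at $\tilde X$ instead of $\TX$ in the hybrid comparison, since $\norm{\nabla\mathcal{L}(\tilde X;Y,\Omega)-\nabla\mathcal{L}(\TX;Y,\Omega)}[\infty]\leq \smax^2\norm{\hat L-\TL}[\infty]\leq 2\smax^2(1+\ae)a$. Without the hybrid-point argument your proof has no place where this additional slack is actually consumed, and no mechanism to decouple the two error terms.
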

\begin{proof}
See \Cref{proof:general-th}.
\end{proof}
We now give deterministic upper bounds on $\PE[\norm{\Sigma_R}]$ and $\PE[{\norm{\Sigma_R}[\infty]}]$ in \Cref{lemma:SigmaR}, and probabilistic upper bounds on $\norm{\nabla\mathcal{L}(\TX;Y,\Omega)}$ and $\norm{\nabla\mathcal{L}(\TX;Y,\Omega)}[\infty]$ in \Cref{lemma:Sigma}. We will use them to select values of $\lambda_1$ and $\lambda_2$ which satisfy the assumptions of \Cref{th:general-th} and compute the corresponding upper bounds.
\begin{lemma}
\label{lemma:SigmaR}
There exists an absolute constant $C^*$ such that the two following inequalities hold
\begin{equation*}
\PE[{\norm{\Sigma_R}[\infty]}] \leq 1 \quad \text{ and}  \quad \PE[\norm{\Sigma_R}] \leq C^*\left\{\sqrt{\beta} + \sqrt{\log m}\right\}.
\end{equation*}
\end{lemma}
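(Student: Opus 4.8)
The first inequality is immediate: by construction the $(i,j)$ entry of $\Sigma_R$ equals $\Omega_{ij}\epsilon_{ij}\in\{-1,0,1\}$, so $\norm{\Sigma_R}[\infty]\le 1$ almost surely and hence $\PE[{\norm{\Sigma_R}[\infty]}]\le 1$.

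For the operator-norm bound, the plan is to regard $\Sigma_R$ as an inhomogeneous random matrix with independent, centered, uniformly bounded entries and to apply a sharp bound on its expected spectral norm. Since $\{\Omega_{ij}\}$ and $\{\epsilon_{ij}\}$ are independent families and each $\epsilon_{ij}$ is a symmetric Rademacher variable, the entries $(\Sigma_R)_{ij}=\Omega_{ij}\epsilon_{ij}$ are mutually independent, mean zero, bounded by $1$ in absolute value, and satisfy $\PE[(\Sigma_R)_{ij}^2]=\pi_{ij}$. By \eqref{eq:beta}, the row and column variance sums obey $\sum_j\pi_{ij}=\pi_{i.}\le\beta$ and $\sum_i\pi_{ij}=\pi_{.j}\le\beta$. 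First I would pass to the self-adjoint dilation of $\Sigma_R$, a symmetric matrix of dimension $d=m_1+m_2$ with the same operator norm, still having independent entries bounded by $1$ and every row variance sum (being one of the $\pi_{i.}$ or $\pi_{.j}$) at most $\beta$. Applying the sharp spectral-norm inequality for symmetric matrices with independent entries (of Bandeira--van Handel type, in the form already employed in the noisy low-rank matrix completion literature) then gives
\[
\PE[{\norm{\Sigma_R}}]\;\le\;C\Bigl(\max_i\sqrt{\sum_j\pi_{ij}}+\max_j\sqrt{\sum_i\pi_{ij}}+\sqrt{\log m}\Bigr)\;\le\;C^{*}\bigl(\sqrt{\beta}+\sqrt{\log m}\bigr),
\]
which is the claimed bound.

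The hard part is twofold: obtaining the logarithmic term \emph{additively} rather than as a factor multiplying $\sqrt{\beta}$, and obtaining it as $\log(m_1\wedge m_2)$ rather than $\log(m_1+m_2)$. A direct matrix-Bernstein or matrix-Chernoff estimate (for instance applied to $\Sigma_R\Sigma_R^{\top}=\sum_j c_jc_j^{\top}$, with $c_j$ the $j$-th column of $\Sigma_R$), or a non-commutative Khintchine bound used conditionally on $\Omega$, would only yield the suboptimal $\sqrt{\beta\log d}$, since the per-term bound $\max_j\norm{c_j}[2]^2$ can be of order $m$; so the argument must invoke one of the sharp spectral-norm estimates for random matrices with a non-trivial variance profile, whose precise form delivers $\sqrt{\beta}+\sqrt{\log m}$. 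I expect selecting the right inequality, together with the bookkeeping of reducing the rectangular $\Sigma_R$ to its symmetric dilation, to be essentially the only content of the proof; an alternative self-contained route is to condition on $\Omega$ (so that $\Sigma_R$ is a Rademacher sum on the observed pattern), bound $\max_i\sum_j\Omega_{ij}$ and $\max_j\sum_i\Omega_{ij}$ by $O(\beta+\log M)$ via a Bernstein bound and a union bound over rows and columns, and then apply the conditional sharp inequality and integrate over $\Omega$.
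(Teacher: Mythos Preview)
Your proposal is correct and follows essentially the same approach as the paper: the paper states (as its Proposition~\ref{prop:sigmaR-nuc}) the rectangular version of the Bandeira--van Handel sharp spectral-norm bound obtained via self-adjoint dilation, and applies it directly to $\Sigma_R$ with $\sigma_1\vee\sigma_2\le\sqrt{\beta}$ and $\sigma_*\le 1$. Your diagnosis that matrix Bernstein or non-commutative Khintchine would only give the suboptimal $\sqrt{\beta\log d}$ is accurate, and the paper makes no attempt at the alternative conditional route you sketch at the end.
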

\begin{proof}
See \Cref{proof:lemma:SigmaR}
\end{proof}
\begin{lemma}
\label{lemma:Sigma}
Assume \textbf{H}\ref{ass:dict}-\ref{ass:true-set}. Then, there exists an absolute constant $c^*$ such that the following two inequalities hold with probability at least $1-d^{-1}$:
\begin{equation}
\label{eq:Sigma-infty}
\begin{aligned}
&\norm{\nabla\mathcal{L}(\TX;Y,\Omega)}[\infty]&&\leq 6\max \left\{\smax\sqrt{\log d}, \frac{\log d}{\gamma}\right\},\\
&\norm{\nabla\mathcal{L}(\TX;Y,\Omega)}&&\leq c^*\max\left\lbrace \smax\sqrt{\beta\log d}, \frac{\log d}{\gamma }\log\left(\frac{1}{\smin}\sqrt{\frac{m_1m_2}{\beta}}\right) \right\rbrace,
\end{aligned}
\end{equation}
where $d = m_1+m_2$ , $\smax$ and $\gamma$ are defined in \textbf{H}~\ref{ass:cvx}, and $\beta$ in
\eqref{eq:beta}.
\end{lemma}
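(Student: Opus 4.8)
The plan is to evaluate $\nabla\mathcal{L}$ at the true parameter $\TX$, where it collapses to a matrix with independent, centred, sub-exponential entries, and then to invoke a scalar Bernstein inequality for the sup-norm bound and a matrix Bernstein inequality for the operator-norm bound. Concretely, by \textbf{H}\ref{ass:cvx} each $g_j$ is strictly convex on $[-(1+\ae)a,(1+\ae)a]$ and, as noted after \textbf{H}\ref{ass:cvx}, $\PE[Y_{ij}]\in g_j'([-(1+\ae)a,(1+\ae)a])$; hence the minimiser in \eqref{exp-model} is characterised by the first-order condition $g_j'(\TX_{ij})=\PE[Y_{ij}]$, and substituting into \eqref{eq:sigma} gives
\[
\nabla\mathcal{L}(\TX;Y,\Omega)=\sum_{i=1}^{m_1}\sum_{j=1}^{m_2}\Omega_{ij}\big(\PE[Y_{ij}]-Y_{ij}\big)E_{ij}.
\]
Thus the entries $\xi_{ij}:=\Omega_{ij}(\PE[Y_{ij}]-Y_{ij})$ are independent and centred; since $\Omega_{ij}$ is Bernoulli, independent of $Y_{ij}$, and $\e^{\sigma^2z^2/2}\ge1$, each $\xi_{ij}$ remains sub-exponential with scale $\gamma$, with $\operatorname{Var}(\xi_{ij})=\pi_{ij}\operatorname{Var}(Y_{ij})\le\smax^2$.

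For the first (sup-norm) inequality I would apply the one-sided Bernstein tail bound $\prob{|\xi_{ij}|>t}\le2\exp\!\big(-\tfrac12\min\{t^2/\smax^2,\gamma t\}\big)$ to each of the at most $d^2$ entries and take a union bound; choosing $t$ of order $\max\{\smax\sqrt{\log d},\gamma^{-1}\log d\}$ makes the total failure probability at most $\tfrac12 d^{-1}$, which yields $\norm{\nabla\mathcal{L}(\TX;Y,\Omega)}[\infty]\le6\max\{\smax\sqrt{\log d},\gamma^{-1}\log d\}$, the explicit constant $6$ coming from the Bernstein constants.

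For the second (operator-norm) inequality I would write $\nabla\mathcal{L}(\TX;Y,\Omega)=\sum_{ij}\Xi_{ij}$ with $\Xi_{ij}=\xi_{ij}E_{ij}$ independent, rank-one and centred. The matrices $\sum_{ij}\PE[\Xi_{ij}\Xi_{ij}^\top]$ and $\sum_{ij}\PE[\Xi_{ij}^\top\Xi_{ij}]$ are diagonal, with $i$-th entry $\sum_j\pi_{ij}\operatorname{Var}(Y_{ij})\le\smax^2\pi_{i.}$ and $j$-th entry $\sum_i\pi_{ij}\operatorname{Var}(Y_{ij})\le\smax^2\pi_{.j}$ respectively, so by \eqref{eq:beta} the matrix-variance parameter is at most $\smax^2\beta$. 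Since the $\xi_{ij}$ are unbounded, I would apply a Bernstein inequality for sums of independent sub-exponential random matrices — obtained by truncating the summands at a level $\tau$ of order $\gamma^{-1}\log(\smin^{-1}\sqrt{m_1m_2/\beta})$, applying the bounded matrix Bernstein inequality to the truncated part, and controlling both the discarded heavy mass and the recentring bias through the sub-exponential tail. With ambient dimension $d=m_1+m_2$ this gives, with probability at least $1-\tfrac12 d^{-1}$,
\[
\norm{\nabla\mathcal{L}(\TX;Y,\Omega)}\lesssim\smax\sqrt{\beta\log d}+\gamma^{-1}\log d\cdot\log\!\big(\smin^{-1}\sqrt{m_1m_2/\beta}\big),
\]
and a union bound with the previous step yields both inequalities simultaneously with probability at least $1-d^{-1}$.

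The main obstacle is this operator-norm bound: the truncation level must be large enough that the omitted heavy part and the bias from recentring are polynomially small in $d$, yet small enough that the $\tau\log d$ term of matrix Bernstein matches the advertised rate — tracking this trade-off is precisely what produces the factor $\log(\smin^{-1}\sqrt{m_1m_2/\beta})$ — and the heterogeneous missingness probabilities $\pi_{ij}$ must be carried carefully through the matrix-variance computation. The sup-norm bound, by contrast, is a routine scalar Bernstein estimate plus a union bound.
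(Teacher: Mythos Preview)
Your proposal is correct and follows essentially the same route as the paper: both bounds rest on the observation that $g_j'(\TX_{ij})=\PE[Y_{ij}]$ makes the gradient entries independent centred sub-exponential variables, after which the sup-norm bound is a scalar Bernstein estimate plus a union bound over the $m_1m_2\le d^2$ entries, and the operator-norm bound is a matrix Bernstein inequality with matrix-variance parameter controlled via \eqref{eq:beta}. The only cosmetic difference is that the paper invokes a ready-made sub-exponential matrix Bernstein inequality (an extension of \cite[Theorem~4]{Tropp2012}, recorded as Proposition~\ref{prop:klopp2014}) rather than carrying out the truncation by hand; your truncation scheme is exactly how such an inequality is proved, and the factor $\log\!\big(\smin^{-1}\sqrt{m_1m_2/\beta}\big)$ you anticipate is precisely the $\log(\delta_*/\sigma_W)$ term there, obtained from the lower bound $\sigma_W\ge\smin\sqrt{\beta/(m_1m_2)}$.
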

\begin{proof}
See \Cref{proof:lemma:Sigma}.
\end{proof}
We now combine \Cref{th:general-th}, \Cref{lemma:SigmaR} and \ref{lemma:Sigma} with a union bound argument to derive upper bounds on 
$\norm{\fu{\Talpha}-\fu{\hat\alpha}}[F]^2$ and $\norm{\TL-\hat L}[F]^2$. We assume that $M = (m_1\vee m_2)$ is large enough, that is
$$M\geq  \max\left\lbrace\frac{4\smax^2}{\gamma^6}\log^2\left(\frac{\sqrt{m}}{p\gamma\smin}\right), 2\exp\left(\smax^2/\gamma^2\vee\smax^2\gamma(1+\ae a)\right)\right\rbrace.$$
Define
\begin{equation*}
\begin{aligned}
&\Phi_{1} &&= a^2 + \frac{\log(d)}{u\smin^2\gamma} + \frac{a^2\log(d)}{pu\norm{\Talpha}[1]},\\
&\Phi_2 &&= \frac{\smax^2}{\smin^4}\log(d) + (1+\ae)a \left(1 \vee (\log m /\beta)\right),
\\
& \Phi_3&&=  \frac{12p\sqrt{\log(d)}}{\gamma(1+\ae)a\smax \sqrt{\beta}}+\frac{1}{\smin^2}\left(\frac{\log d}{\gamma}\right)+\frac{p}{u}
\frac{\log(d)}{u\smin^2\gamma} + \frac{a^2\log(d)}{pu\norm{\Talpha}[1]},
\end{aligned}
\end{equation*}
and recall that $s = \norm{\Talpha}[0]$, $r = \operatorname{rank}(\TL)$, $\beta\geq \max_{i,j}\left(\sum_{l=1}^{m_2}\pi_{il},\sum_{k=1}^{m_1}\pi_{kj}\right)$ and that the entries $Y_{ij}$ are sub-exponential with scale parameter $\gamma$.
\begin{theorem}
\label{th:upper-bound}
Assume \textbf{H}\ref{ass:dict}-\ref{ass:true-set} and let
$$\lambda_1= 2c^*\smax\sqrt{\beta\log d},\quad \lambda_2\geq \frac{24\umax\log(d)}{\gamma},$$
where $c_*$ is the absolute constant defined in \Cref{lemma:Sigma}. Then, with probability at least $1-10d^{-1}$,
\begin{equation}
\norm{\fu{\Talpha}-\fu{\hat\alpha}}[F]^2\leq C\frac{sa\umax}{p}\Phi_{1}\text{, and }
\norm{\TL - \hat{L}}[F]^2\leq C\left(\frac{r\beta}{p^2}
\Phi_2 + \frac{sa\umax}{p}\Phi_3\right),
\end{equation}
with $C$ an absolute constant.
\end{theorem}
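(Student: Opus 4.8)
\Cref{th:upper-bound} is a corollary of \Cref{th:general-th} together with \Cref{lemma:SigmaR,lemma:Sigma}. The plan is: (i) verify that the prescribed parameters $\lambda_1=2c^*\smax\sqrt{\beta\log d}$ and $\lambda_2\geq 24\umax\log(d)/\gamma$ satisfy, with high probability, the hypotheses $\lambda_1\geq 2\norm{\diff{\TX}}$ and $\lambda_2\geq 2\umax(\norm{\diff{\TX}}[\infty]+2\smax^2(1+\ae)a)$ of \Cref{th:general-th}; (ii) plug the bounds of \Cref{lemma:SigmaR} and the chosen $\lambda_i$ into the data-dependent quantities $\Theta_1,\Theta_2,\Theta_3$ and simplify them to $\Phi_1,\Phi_2,\Phi_3$, using the lower bound on $M$ to discard subdominant terms; (iii) combine the failure probabilities by a union bound.

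For step (i), I would condition on the event of \Cref{lemma:Sigma}, which has probability at least $1-d^{-1}$. Since $\pi_{ij}\geq p$ forces $\beta\geq Mp$, we have $\smax\sqrt{\beta\log d}\geq\smax\sqrt{Mp\log d}$, and the first term in the assumed lower bound on $M$ is calibrated precisely so that this quantity dominates the second branch of the maximum in the operator-norm bound of \Cref{lemma:Sigma}; hence $2\norm{\diff{\TX}}\leq 2c^*\smax\sqrt{\beta\log d}=\lambda_1$. For $\lambda_2$, the exponential term in the lower bound on $M$ makes $\log d$ large enough that $6\max\{\smax\sqrt{\log d},\,\log(d)/\gamma\}+2\smax^2(1+\ae)a\leq 12\log(d)/\gamma$; multiplying by $2\umax$ and invoking the sup-norm bound of \Cref{lemma:Sigma} gives $2\umax(\norm{\diff{\TX}}[\infty]+2\smax^2(1+\ae)a)\leq 24\umax\log(d)/\gamma\leq\lambda_2$. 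Thus on this event both hypotheses of \Cref{th:general-th} hold.

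For steps (ii)--(iii), I would apply \Cref{th:general-th} and, on its event, substitute $\PE[{\norm{\Sigma_R}[\infty]}]\leq 1$ and $\PE[{\norm{\Sigma_R}^2}]\lesssim\beta+\log m$ — the latter obtained from the first-moment bound $\PE[\norm{\Sigma_R}]\lesssim\sqrt{\beta}+\sqrt{\log m}$ of \Cref{lemma:SigmaR} together with a standard moment inequality for the operator norm of the Rademacher matrix sum $\Sigma_R$ — as well as $\lambda_1=2c^*\smax\sqrt{\beta\log d}$ and $\lambda_2=24\umax\log(d)/\gamma$. Inserting these into the definitions of $\Theta_1$, $\Theta_2$ and $\Theta_3$ and then factoring out the prefactor $sa\umax/p$ (respectively $r\beta/p^2$) turns each of them, up to an absolute constant, into the displayed quantity $\Phi_1$ (respectively $\Phi_2$, $\Phi_3$); in the case of $\Theta_2$ one also uses $\beta+\log m\asymp\beta\,(1\vee\log(m)/\beta)$, and the lower bound on $M$ is invoked once more to keep only the dominant branch in each maximum. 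A union bound over the event of \Cref{th:general-th} and that of \Cref{lemma:Sigma} then gives the stated conclusion with probability at least $1-10d^{-1}$.

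The step I expect to be most delicate is the bookkeeping in (i)--(ii): the two lemmas return maxima of several heterogeneous terms while the statement displays single clean products, so one must check that the (deliberately strong) lower bound on $M$ really does make the ``regular'' branch dominate everywhere it is used — when validating $\lambda_1$ and $\lambda_2$, and when reducing $\Theta_2$ and $\Theta_3$ — and that each simplification costs only an absolute constant. A minor additional subtlety is that $\Theta_2$ involves the second moment of $\norm{\Sigma_R}$, whereas \Cref{lemma:SigmaR} as stated controls its first moment, so one needs the small concentration argument indicated above.
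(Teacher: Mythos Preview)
Your proposal is correct and follows essentially the same route as the paper: the paper itself does not give a standalone proof of \Cref{th:upper-bound} but simply states that one ``combine[s] \Cref{th:general-th}, \Cref{lemma:SigmaR} and \Cref{lemma:Sigma} with a union bound argument,'' which is precisely your plan (i)--(iii), and the $1-10d^{-1}$ arises exactly from the $1-8d^{-1}$ of \Cref{th:general-th} together with the two high-probability inequalities of \Cref{lemma:Sigma}. One small remark on your last paragraph: if you trace the proof of \Cref{th:general-th}, the residual $\mathsf{D}_X$ in \eqref{eq:residuals} involves $\big(\PE[\norm{\Sigma_R}]\big)^2$ rather than $\PE[\norm{\Sigma_R}^2]$, so \Cref{lemma:SigmaR} applies directly and no extra second-moment argument is needed --- the appearance of $\PE[\norm{\Sigma_R}^2]$ in the displayed $\Theta_2$ is a typographical slip.
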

Denoting by $\lesssim$ the inequality up to constant and logarithmic factors we get:
\begin{equation*}
\norm{\fu{\Talpha}-\fu{\hat\alpha}}[F]^2  \lesssim \frac{s\umax}{p}, \text{ and }\norm{\TL - \hat{L}}[F]^2  \lesssim  \frac{r\beta}{p^2}+ \frac{s\umax}{p},
\end{equation*}
In the case of almost uniform sampling, \textit{i.e.}, for all $(i,j)\in\nint{m_1}\times\nint{m_2}$ and two positive constants $c_1$ and $c_2$, $c_1 p \leq \pi_{ij}\leq c_2 p$ we obtain that $\beta \leq c_2 Mp$ and the following simplified bound:
\begin{equation}
\label{simplified-bound}
\norm{\TL - \hat{L}}[F]^2\lesssim \frac{rM}{p} + \frac{s\umax}{p}.
\end{equation}
The rate given in \eqref{simplified-bound} is the sum of the usual convergence rate of low-rank matrix completion $rM/p$ and of the usual sparse vector convergence rate $s$ \citep{Buhlmann2011, Tsybakov:2008:INE:1522486} multiplied by $u/p$. This additional factor accounts for missing observations ($p^{-1}$) and interplay between main effects and interactions ($u$). Furthermore, the estimation risk of $\fu{\Talpha}$ is also the usual sparse vector convergence rate, with an additional $up^{-1}$ factor accounting for interactions and missing values. 

Note that whenever the dictionary $\mathcal{U}$ is linearly independent, \Cref{th:upper-bound} also provides an upper bound on the estimation error of $\Talpha$. Let $G\in\mathbb{R}^{N\times N}$ be the Gram matrix of the dictionary $\mathcal{U}$ defined by $G_{kl} = \pscal{U_k}{U_l}$ for all $(k,l)\in\nint{N}\times\nint{N}$.
\begin{assumption}
\label{ass:gram} For $\kappa>0$ and all $\alpha \in \mathbb{R}^N$, $\alpha^{\top}G\alpha \geq \kappa^2\norm{\alpha}[2]^2.$
\end{assumption}
Recall that in the group effects model, we denote by $I_h$ the set of rows which belong to group $h$. \textbf{H}\ref{ass:gram} is satisfied for the group effects model with $\kappa^2 = \min_h |I_h|$, the row and column effects model with $\kappa^2 = \min(m_1, m_2)$ and the corruptions model with $\kappa^2 =  1$. If \textbf{H}\ref{ass:gram} is satisfied then, \Cref{th:upper-bound} implies that (up to constant and logarithmic factors):
\begin{equation*}
\norm{\Talpha-\hat\alpha}[2]^2  \lesssim \frac{s\umax}{p\kappa^2}.
\end{equation*}
\subsection{Lower bounds}
\label{low-bound}
To characterize the tightness of the convergence rates given in \Cref{th:upper-bound}, we now provide lower bounds on the estimation errors. We need three additional assumptions.
\begin{assumption}
\label{ass:sampling-unif}
The sampling of entries is uniform, i.e. for all $(i,j)\in\nint{m_1}\times\nint{m_2}$, $\pi_{ij} = p$.
\end{assumption}
\begin{assumption}
\label{ass:distribution}
There exists $\mathcal{I}\subset \nint{N}$, $a>0$ and $X\in\mathcal{X}_{\mathcal{I}, a}$ such that for all $(i,j)\in\nint{m_1}\times\nint{m_2}$, $Y_{ij}\sim\operatorname{Exp}^{(h_j, g_j)}(X_{ij})$.
\end{assumption}
Denote  $\tau = \max_k \sum_{l\neq k}|\pscal{U_k}{U_l}|$. Without loss of generality we assume $m_1 = m_1\vee m_2 = M$. For all $X\in \mathbb{R}^{m_1\times m_2}$ we denote $\mathbb{P}_X$ the product distribution of $(Y,\Omega)$ satisfying \textbf{H}\ref{ass:sampling-unif} and \ref{ass:distribution}. Consider two integers $s\leq (m_1\wedge m_2)/2$ and $r\leq (m_1\wedge m_2)/2$.
 We define the following set
\begin{equation}
\label{eq:set}
\mathcal{F}(r,s) = \bigcup_{|\mathcal{I}|\leq s} \left\lbrace (\alpha,L)\in \mathcal{E}_1(a,\mathcal{I})\times\mathcal{E}_2(a,\mathcal{I}); \rank{L}\leq r\right\rbrace.
\end{equation}
\begin{theorem}
\label{th:lower-bound}
Assume \textbf{H}\ref{ass:dict}-\ref{ass:sampling-unif} and $p\geq \frac{r}{m_1\wedge m_2}$. Then, there exists a constant $\delta >0$ such that
\begin{equation}
\label{eq:inf}
\inf_{\hat{L},\hat{\alpha}}\sup_{(\TL,\Talpha)\in \mathcal{F}(r,s)} \mathbb{P}_{\TX}\left(\norm{\TL - \hat L}[F]^2+\norm{\fu{\Talpha}-\fu{\hat\alpha}}[F]^2>\Psi_1\frac{rM}{p}+ \Psi_2\frac{s\kappa^2}{p}\right) \geq \delta,
\end{equation}
\begin{equation}
\begin{aligned}
&\Psi_1 &&= C\min\left(\smax^{-2},\min(a,\smax)^2\right),\\
&\Psi_2 &&=  C\left(\frac{1}{\smax^2\left(\max_k\norm{U^k}[F]^{2}+2\tau\right)}\wedge (a\wedge\smax)^2\right).
\end{aligned}
\end{equation}
\end{theorem}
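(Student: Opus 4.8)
The plan is to establish the minimax lower bound via the standard Fano / Tsybakov reduction to a multiple hypothesis testing problem, building two separate finite packing sets --- one capturing the low-rank contribution $rM/p$ and one capturing the sparse contribution $s\kappa^2/p$ --- and then combining them. First I would recall the general scheme: to prove a bound of the form $\inf_{\hat T}\sup_{T\in\mathcal T}\mathbb P_T(\rho(\hat T,T)^2 > c\,\psi)\geq\delta$, it suffices to exhibit a finite subset $\{T_0,\dots,T_K\}\subset\mathcal T$ with pairwise separation $\rho(T_k,T_l)^2\geq 4c\,\psi$ and with $\frac{1}{K}\sum_{k}\mathrm{KL}(\mathbb P_{T_k}\,\|\,\mathbb P_{T_0})\leq \alpha\log K$ for a small constant $\alpha$; the conclusion then follows from Theorem 2.5 in \citep{Tsybakov:2008:INE:1522486}. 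The metric here is $\rho((\alpha,L),(\alpha',L'))^2 = \norm{L-L'}[F]^2 + \norm{\fu\alpha-\fu{\alpha'}}[F]^2$, and the KL divergence between $\mathbb P_X$ and $\mathbb P_{X'}$ under \textbf{H}\ref{ass:sampling-unif}--\ref{ass:distribution} is controlled by $p$ times a sum of one-dimensional exponential-family KL divergences, each of which is at most $\tfrac{\smax^2}{2}(X_{ij}-X'_{ij})^2$ by \textbf{H}\ref{ass:cvx} (twice-differentiability with $g_j''\leq\smax^2$ on the relevant interval).

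For the low-rank term I would use the classical Varshamov--Gilbert construction for matrix completion (as in \citep{KloppLafond2015,Klopp2017}): take $L$ to range over a packing of rank-$r$ matrices of the form $L = \frac{\nu}{ \ }B$ where $B$ has entries in $\{0,\nu\}$, built by filling $r$ columns (or an $r\times m_2$ block) with sign patterns drawn from a Varshamov--Gilbert codebook, yielding $\log K \gtrsim rM$ hypotheses with pairwise Frobenius separation $\gtrsim \nu^2 rM$ and $\mathrm{KL}\lesssim p\,\smax^2\nu^2 rM$. Choosing $\nu^2 \asymp \min(a^2,\smax^{-2})/p$ balances the separation $\asymp rM\nu^2 \asymp \Psi_1 rM/p$ against the information constraint $p\smax^2\nu^2 rM \lesssim rM = \log K$; the cap $\nu\le a$ is what forces the $\min(a,\smax)^2$ in $\Psi_1$, and the hypothesis $p\geq r/(m_1\wedge m_2)$ is exactly what is needed for these matrices to remain entrywise bounded while the packing is large enough. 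For the sparse term I would analogously take $\alpha$ supported on a fixed size-$s$ subset $\mathcal I$ with entries in $\{0,\eta\}$ chosen from a Varshamov--Gilbert code on $\{0,1\}^s$, so $\log K' \gtrsim s$; here the separation is $\norm{\fu\alpha-\fu{\alpha'}}[F]^2 = (\alpha-\alpha')^\top G(\alpha-\alpha') \gtrsim \kappa^2\eta^2 s$ by \textbf{H}\ref{ass:gram}, while $\mathrm{KL}\lesssim p\,\smax^2\norm{\fu\alpha-\fu{\alpha'}}[F]^2 \lesssim p\,\smax^2 \eta^2(\max_k\norm{U^k}[F]^2+2\tau)s$, using $\norm{\fu{\beta}}[F]^2\leq\sum_k\beta_k^2\norm{U^k}[F]^2 + \sum_{k\ne l}|\beta_k\beta_l||\pscal{U_k}{U_l}| \le \|\beta\|_\infty^2 s(\max_k\norm{U^k}[F]^2 + \tau)$ on the support. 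Choosing $\eta^2 \asymp \big(\smax^{-2}(\max_k\norm{U^k}[F]^2+2\tau)^{-1}\big)\wedge a^2$, scaled by $1/p$, balances $\kappa^2\eta^2 s \asymp \Psi_2 s\kappa^2/p$ against $\log K' \asymp s$, again with the truncation $\eta\le a$ producing the $(a\wedge\smax)^2$ clause of $\Psi_2$.

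Finally I would combine the two families: run the test on $\{(0,L_k)\}$ to lower-bound $\norm{\TL-\hat L}[F]^2$ on an event of probability $\geq\delta_1$, and separately on $\{(\alpha_k,0)\}$ to lower-bound $\norm{\fu\Talpha-\fu{\hat\alpha}}[F]^2$ on an event of probability $\geq\delta_2$; since both parameter families lie in $\mathcal F(r,s)$ and the sum $\norm{\TL-\hat L}[F]^2+\norm{\fu\Talpha-\fu{\hat\alpha}}[F]^2$ dominates each term, the supremum over $\mathcal F(r,s)$ of the probability that the sum exceeds $\tfrac12(\Psi_1 rM/p + \Psi_2 s\kappa^2/p)$ is at least $\tfrac12\max(\delta_1,\delta_2) =: \delta$ (adjusting constants inside $\Psi_1,\Psi_2$ so the halved threshold matches). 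The main obstacle I anticipate is the bookkeeping in the sparse construction: one must verify that the Gram-matrix lower bound \textbf{H}\ref{ass:gram} and the off-diagonal correlation bound $\tau$ interact correctly so that the separation-to-information ratio genuinely produces the stated $\Psi_2$ (in particular the $\max_k\norm{U^k}[F]^2+2\tau$ factor), and to check that the constructed $X = \fu\alpha + L$ stays in $\mathcal X(a,\mathcal I)$ --- i.e. that $\norm{L}[\infty]\le a$, $\norm{\alpha}[\infty]\le a$, and the orthogonality $\max_{k\in\mathcal I}|\pscal{L}{U_k}|=0$ hold --- which constrains where in the matrix the rank-$r$ perturbation may be placed relative to the support $\mathcal I$.
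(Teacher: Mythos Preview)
Your proposal is correct and follows essentially the same route as the paper: two separate Varshamov--Gilbert packings (one of block-replicated rank-$r$ matrices with $\alpha=0$, one of $s$-sparse vectors with $L=0$), KL control via $g_j''\le\smax^2$ giving $\mathrm{KL}\le\tfrac{\smax^2}{2}p\norm{X-X'}[F]^2$, the choice of amplitude to balance separation against $\log K$, and Theorem~2.5 of \cite{Tsybakov:2008:INE:1522486} applied to each family before combining. The orthogonality concern you flag at the end is a non-issue precisely because each packing sets one component to zero, so $\mathcal I=\emptyset$ (respectively $L=0$) makes the constraint in $\mathcal E_2(a,\mathcal I)$ vacuous.
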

\begin{proof}
See \Cref{th:lower-bound-proof}.
\end{proof}
\subsection{Examples}
\label{examples}
We now specialize our theoretical results to Examples \ref{ex:groups}, \ref{ex:row-column} and \ref{ex:corruptions} presented in \Cref{subsec:gen-model}. We compute the values of $\umax$, $\tau$ and $\max_k\norm{U^k}[F]^2$ for the group effects, row and column effects and corruption models, and obtain the rates of \Cref{th:upper-bound} and \Cref{th:lower-bound} for these particular cases. Recall that in the group effects model, we denote by $I_h$ the set of rows which belong to group $h$. The orders of magnitude are summarized in \Cref{fig:upper-bounds} for the upper bound and in \Cref{fig:lower-bounds} for the lower bound. 
\begin{table}
\footnotesize
\begin{center}
\begin{tabular}{|c|c|c|c|}
\hline
\textbf{Model} & \textbf{Group effects} & \textbf{Row \& col effects} & \textbf{Corruptions}\\
\hline
$\umax$ & $\max_h|I_h|$ & $M$ & $1$\\
\hline
$ \norm{\Delta L}[F]^2+\norm{\fu{\Talpha}-\fu{\hat\alpha}}[F]^2$ & $rM/p + s\max_h|I_h|/p$ & $rM/p + sM/p $ & $rM/p+s/p$\\
\hline
\end{tabular}
\end{center}
\caption{Order of magnitude of the upper bound for Examples \ref{ex:groups}, \ref{ex:row-column} and \ref{ex:corruptions} (up to logarithmic factors).}
\label{fig:upper-bounds}
\end{table}
\begin{table}
\footnotesize
\begin{center}
\begin{tabular}{|c|c|c|c|}
\hline
\textbf{Model} & \textbf{Group effects} & \textbf{Row \& col effects} & \textbf{Corruptions}\\
\hline
$\umax$ & $\max_h|I_h|$ & $M$ & $1$\\
\hline
$\max_{k}\norm{U_k}[F]^2$ & $\max_h|I_h|$ & $M$ & $1$\\
\hline
$\kappa^2$ & $\min_h|I_h|$ & $m$ & $1$\\
\hline
$\norm{\Delta L}[F]^2$ +$\norm{\fu{\Talpha}-\fu{\hat\alpha}}[F]^2$ & $rM/p+(s\min_h|I_h|)/(p\max_h|I_h|)$ & $rM/p+sm/(pM)$ & $rM/p+s/p$\\
\hline
\end{tabular}
\end{center}
\caption{Order of magnitude of the lower bound for Examples \ref{ex:groups}, \ref{ex:row-column} and \ref{ex:corruptions}.}
\label{fig:lower-bounds}
\end{table}
Comparing \Cref{fig:upper-bounds} and \Cref{fig:lower-bounds} we see that the convergence rates obtained in \Cref{th:upper-bound} are minimax optimal across the three examples whenever $s<r$. Furthermore, in the corruptions model our rates are optimal (up to constant and logarithmic factors) for any values of $r,s$ and $M$, and equal to the minimax rates derived in \cite{Klopp2017}.
In the case of group effects, the rates are optimal when $r>s\max_h|I_h|/M$ or when $\max_h|I_h|$ is of the order of a constant. When $s>rM/ \max_h|I_h|$, we have an additional factor of the order $(\max_h|I_h|)^2/\min_h|I_h|$ in the upper bound. Note that the bounds have the same dependence in the sparsity pattern $s$.
In the row and column model, when $r<s$, we have an additional factor of the order $s/r$ in the upper bound.

\section{Numerical results}
\label{sec:numerical-results}
\subsection{Estimation of main effects and interactions}
\label{subsec:simulated-data}
We start by evaluating our method (referred to as ``mimi": \textbf{m}ain effects and \textbf{i}nteractions in \textbf{m}ixed and \textbf{i}ncomplete data) in terms of estimation of main effects and interactions. In this experiment, we focus on the group effects model presented in \Cref{sec:acs-intro}, with $H = 5$ groups of equal size.
We select at random $s$ non-zero coefficients in $\Talpha$, and construct a matrix $\TL$ of rank $k$. Then, $\TX = \sum_{h=1}^H\sum_{j=1}^{m_2} \Talpha_{hj}U^{h,j} + \TL,$ with $U_{h,j}$, $1\leq h\leq H$ and $1\leq j\leq m_2$ defined in \Cref{ex:groups}. Finally, every entry of the matrix is observed with probability $p$. 

In this first experiment, we consider only numeric variables to compare mimi to the following two-step method. In this alternative method, the main effects $\Talpha$ are estimated by the means of the variables taken by group; this corresponds to the preprocessing step performed in \cite{glrm} and \cite{Landgraf15} for instance. Then, $\TL$ is estimated using \texttt{softImpute} \citep{softImpute}; we refer to this method as ``group mean + softImpute". The regularization parameters of both methods are selected with cross-validation.

The results are displayed in Figure \ref{fig:alpha} where we plot the estimation errors $\norm{\hat \alpha - \Talpha}[2]^2$ and in Figure \ref{fig:theta} $\norm{\hat L - \TL}[F]^2$ 
for different levels of sparsity and different ranks.

\begin{figure}
\centering
\includegraphics[scale=0.35]{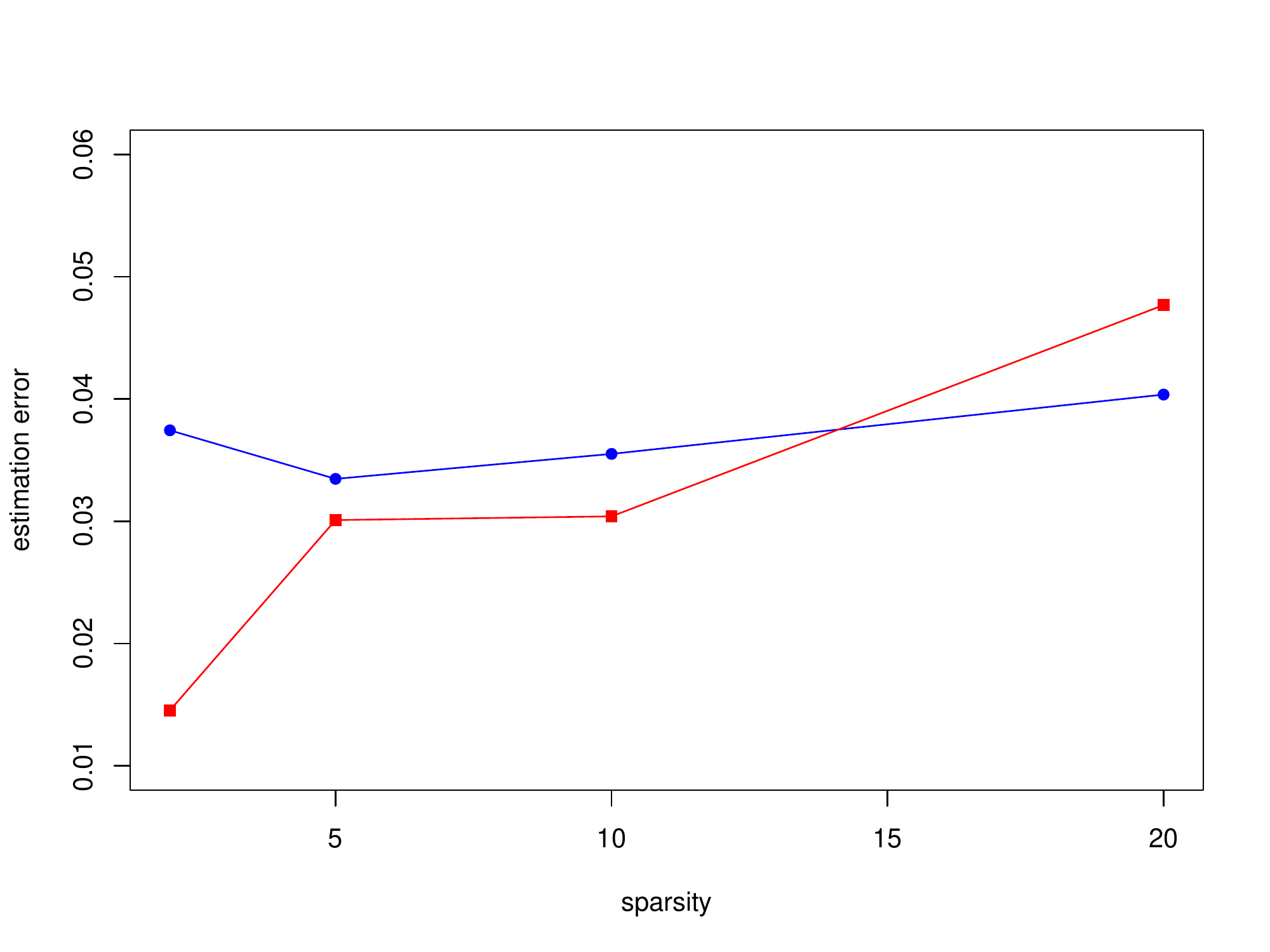}
\includegraphics[scale=0.35]{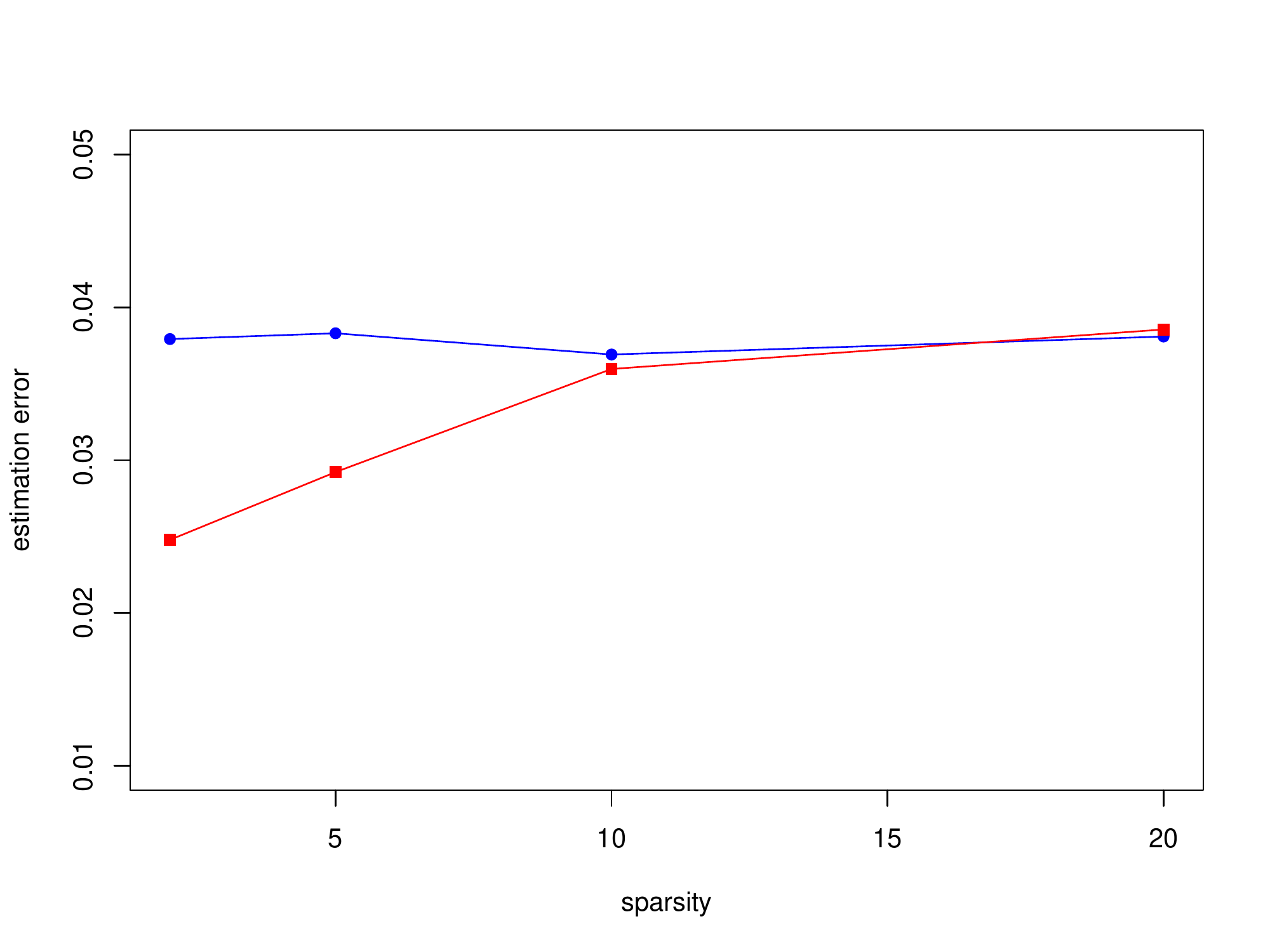}
\includegraphics[scale=0.35]{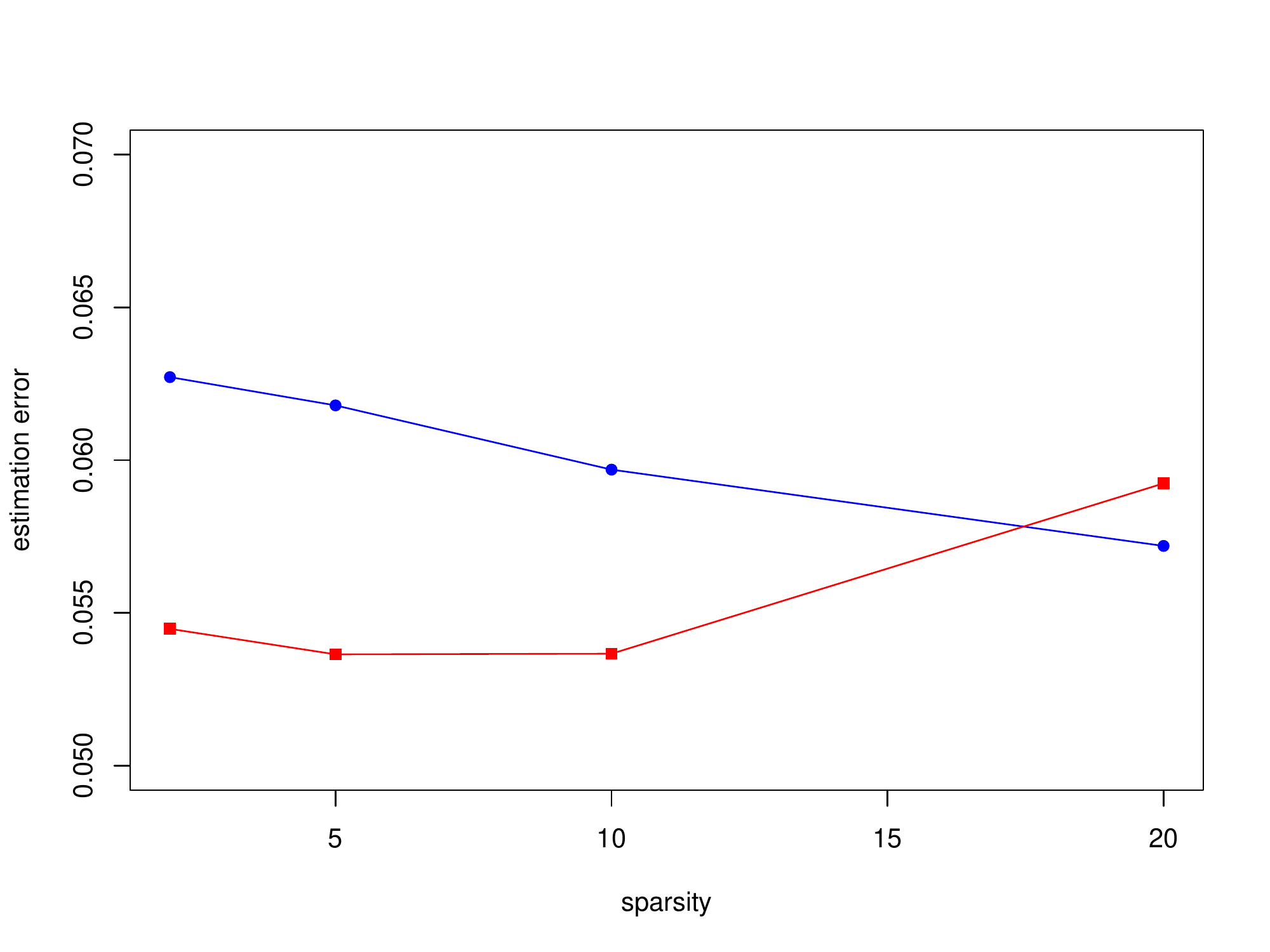}
\includegraphics[scale=0.35]{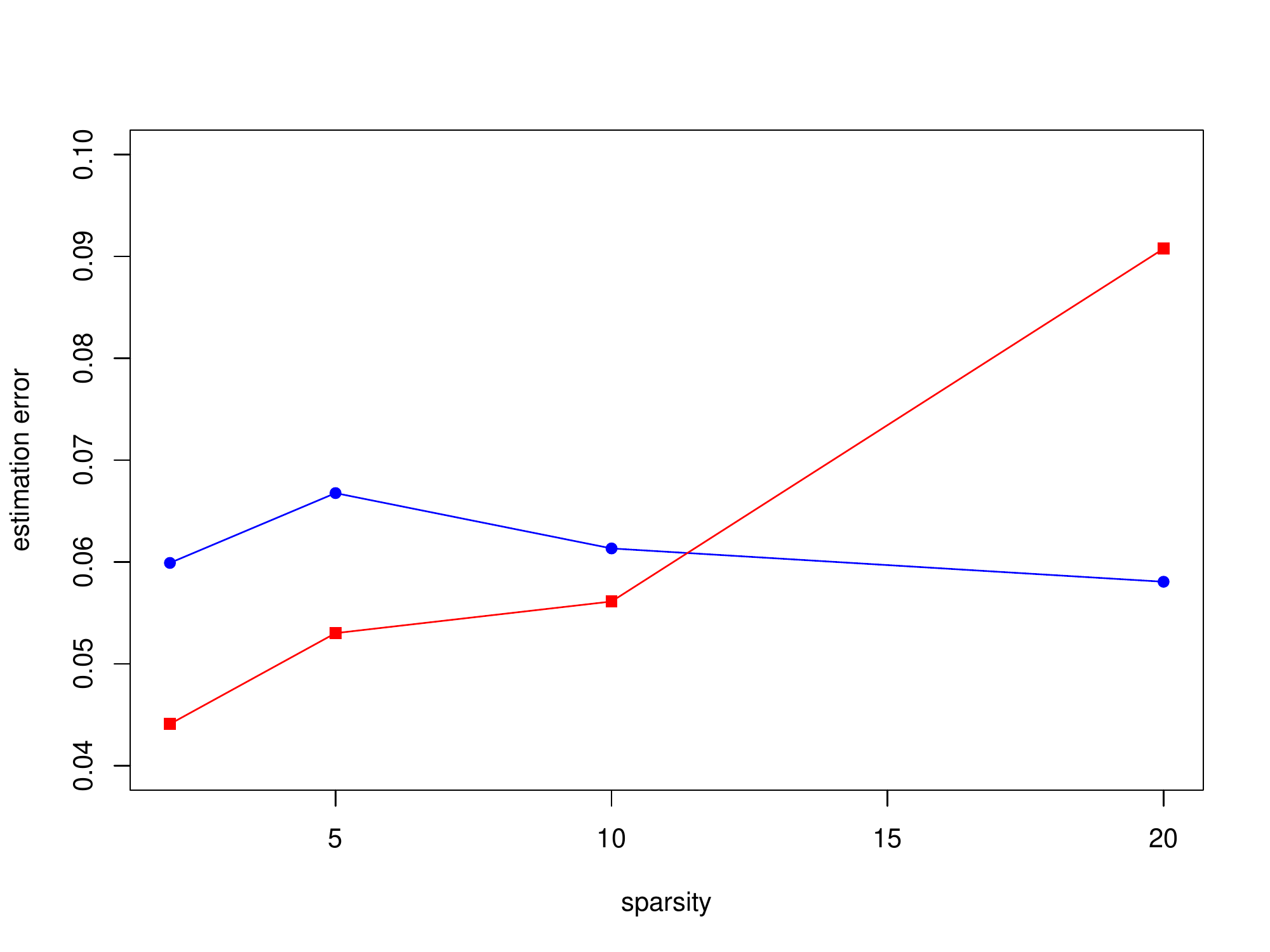}
\caption{Estimation error $\norm{\hat\alpha-\Talpha}[2]^2$ of mimi (red squares) and of groups means + softImpute (blue points) for increasing problem sparsity levels and ranks. The sparsity $s=2,5,10,20$ is indicated in the abscissa and the rank $k=2,5,10,20$ corresponds to different plots: top left $k=2$, top right $k=5$, bottom left $k=10$, bottom right $k=20$. The dimensions are fixed to $m_1=300$ and $m_2=30$ and the proportion of missing entries to $p=0.2$.}
\label{fig:alpha}
\end{figure}

On Figure \ref{fig:alpha} we observe that for a fixed rank, mimi has a smaller error ($\norm{\hat\alpha-\Talpha}[2]^2$) than the two-step procedure for small sparsity levels, and that the difference between the two methods cancels as the sparsity level increases. Furthermore, as the rank also increases (from top to bottom and from left to right), the difference between mimi and the two-step procedure also decreases. Finally, for large ranks and sparsity levels simultaneously, mimi has a large estimation error $\norm{\hat\alpha-\Talpha}[2]^2$ compared to the two-step procedure which does not assume sparsity. This case can be seen as a model mis-specification setting.

\begin{figure}
\centering
\includegraphics[scale=0.35]{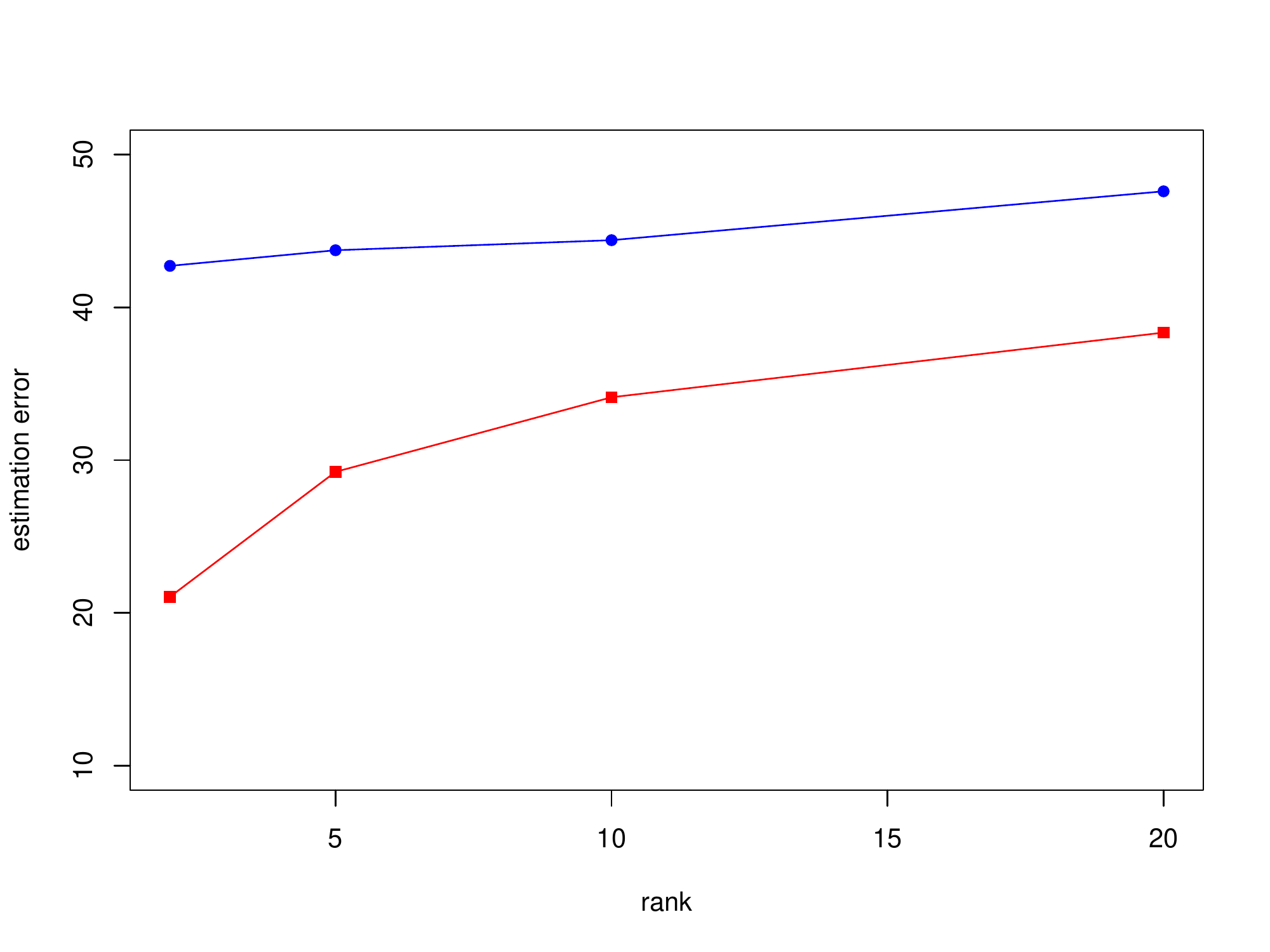}
\includegraphics[scale=0.35]{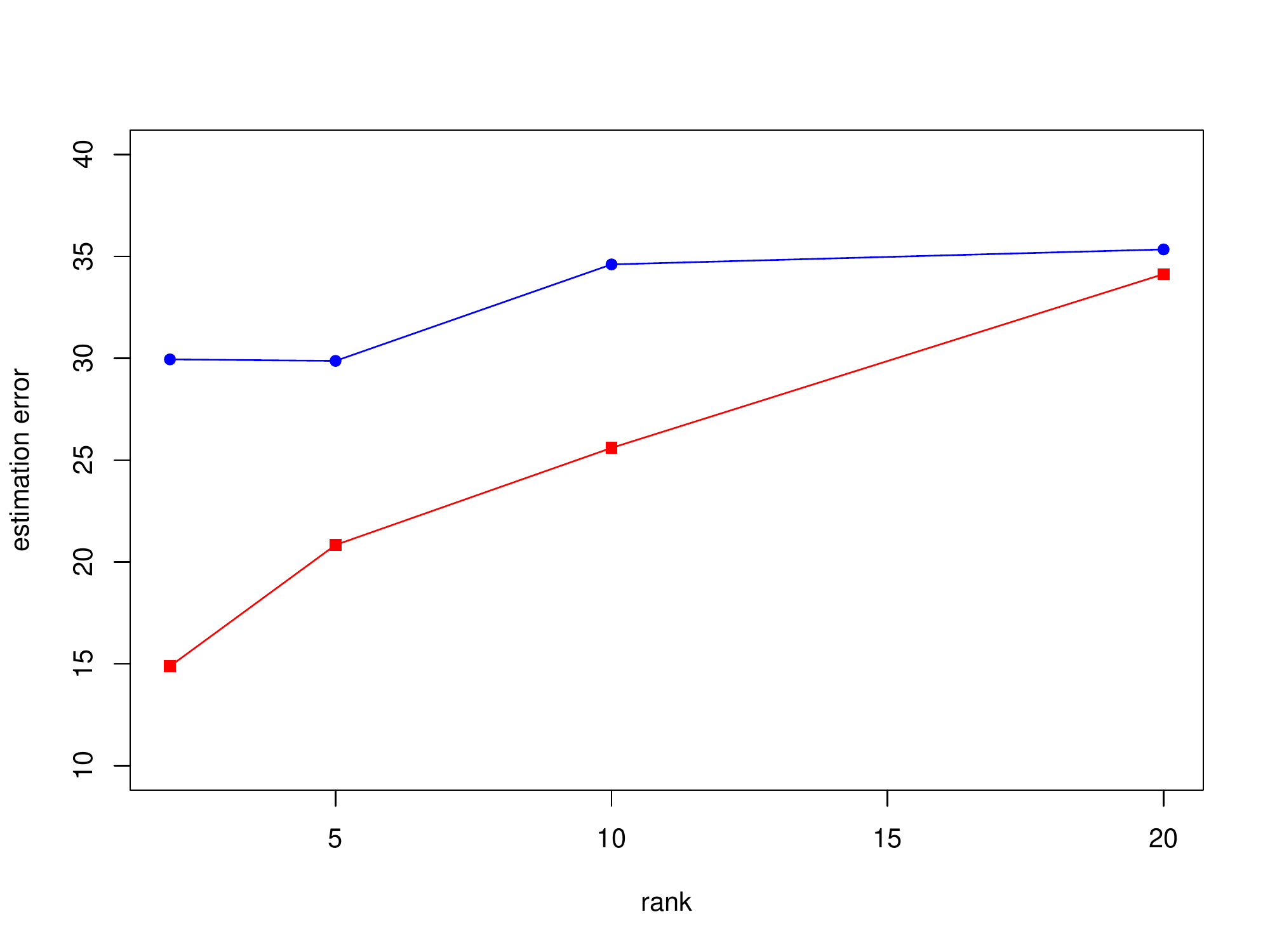}
\includegraphics[scale=0.35]{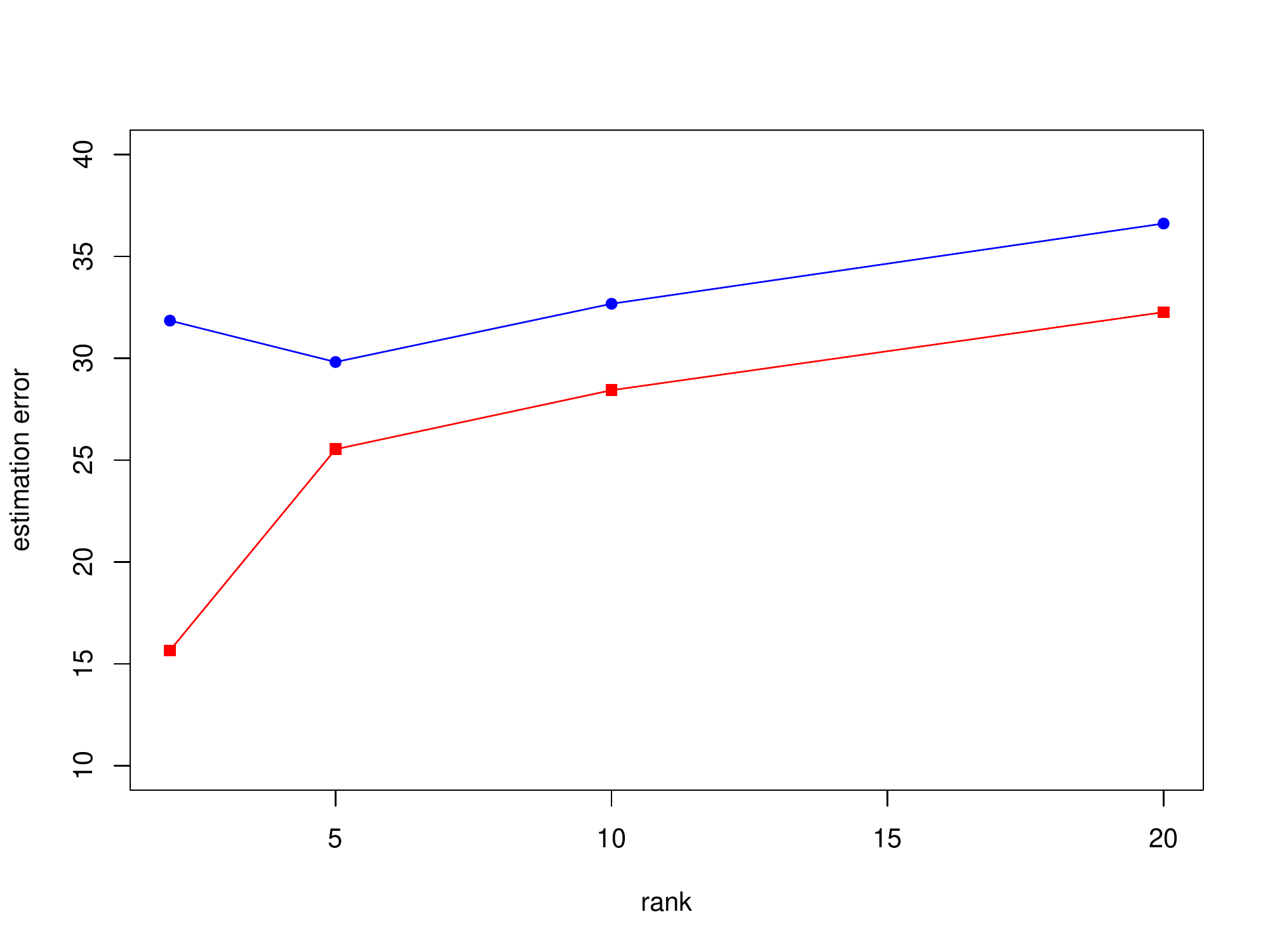}
\includegraphics[scale=0.35]{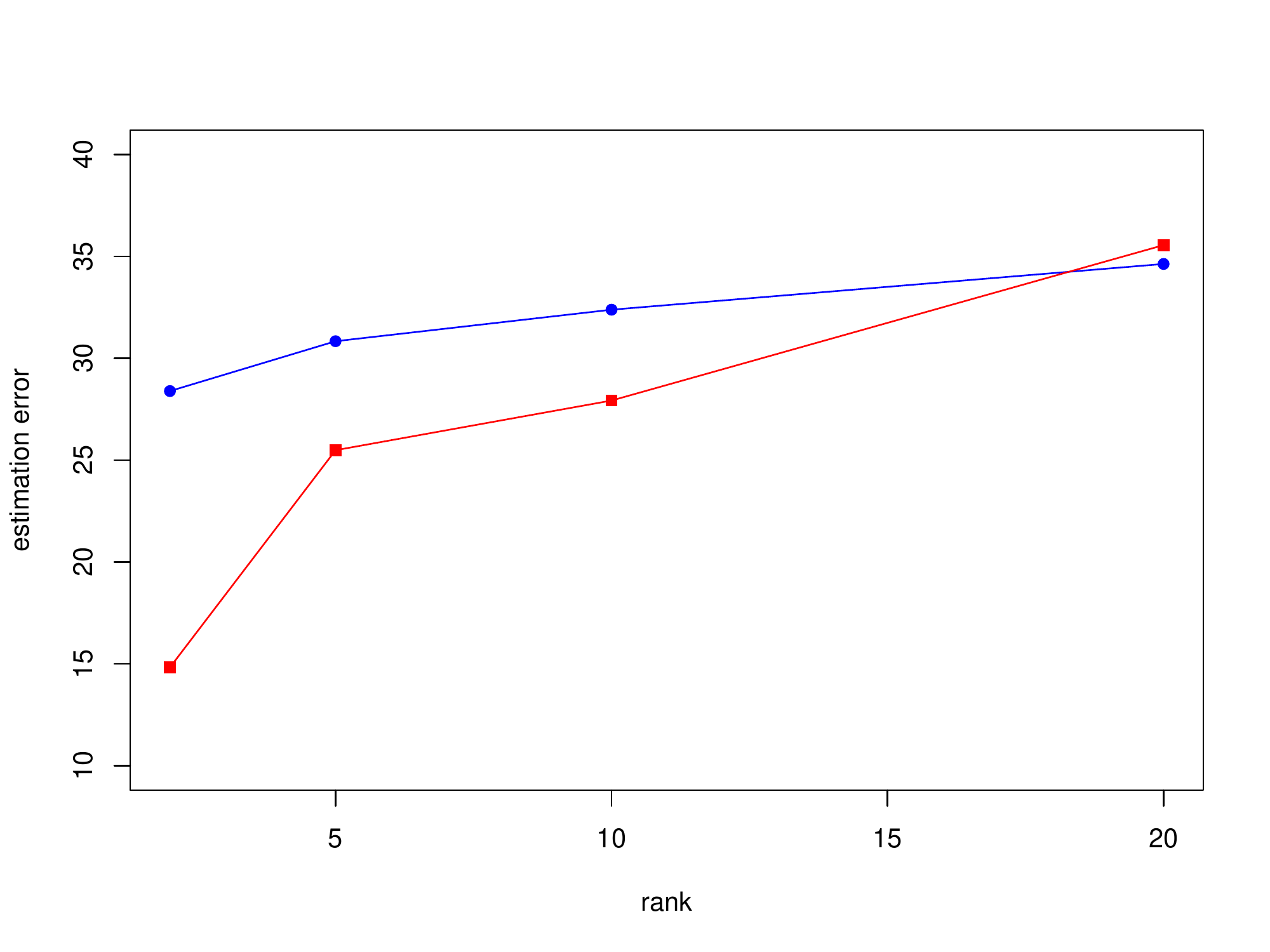}
\caption{Estimation error $\norm{\hat L-\TL}[F]^2$ of mimi (red squares) and of groups means + softImpute (blue points) for increasing problem sparsity levels and ranks. The rank $k=2,5,10,20$ is indicated in the abscissa and the sparsity $s=2,5,10,20$ corresponds to different plots: top left $s=2$, top right $s=5$, bottom left $s=10$, bottom right $s=20$. The dimensions are fixed to $m_1=300$ and $m_2=30$ and the proportion of missing entries to $p=0.2$.}
\label{fig:theta}
\end{figure}

On Figure \ref{fig:theta} we observe that mimi has overall smaller errors ($\norm{\hat L-\TL}[F]^2$) than the two-step procedure. The difference between the two methods cancels as the rank increases. We also observe that the level of sparsity has little impact on the results. However, for large ranks and sparsity levels simultaneously, mimi has a larger estimation error $\norm{\hat L-\TL}[F]^2$ than the two-step procedure.\\

Secondly, we fix the level of sparsity to $s=5$ and the rank to $k=5$, and perform the same experiment for increasing problem sizes ($150\times 30$, $1500\times 300$ and $1500\times 3000$). The results are given in Figure \ref{fig:imput}. We observe that the excess risk   $\norm{\hat L - \TL}[F]^2$  the two methods are similar. In terms of estimation of $\Talpha$, the estimation error of mimi is constant as the problem size increases but the sparsity level of $\Talpha$ is kept constant, as predicted by \Cref{th:upper-bound}. On the contrary, we observe that estimating $\Talpha$ in a preprocessing step yields large errors in high dimensions.
\begin{figure}
\centering
\includegraphics[scale=0.3]{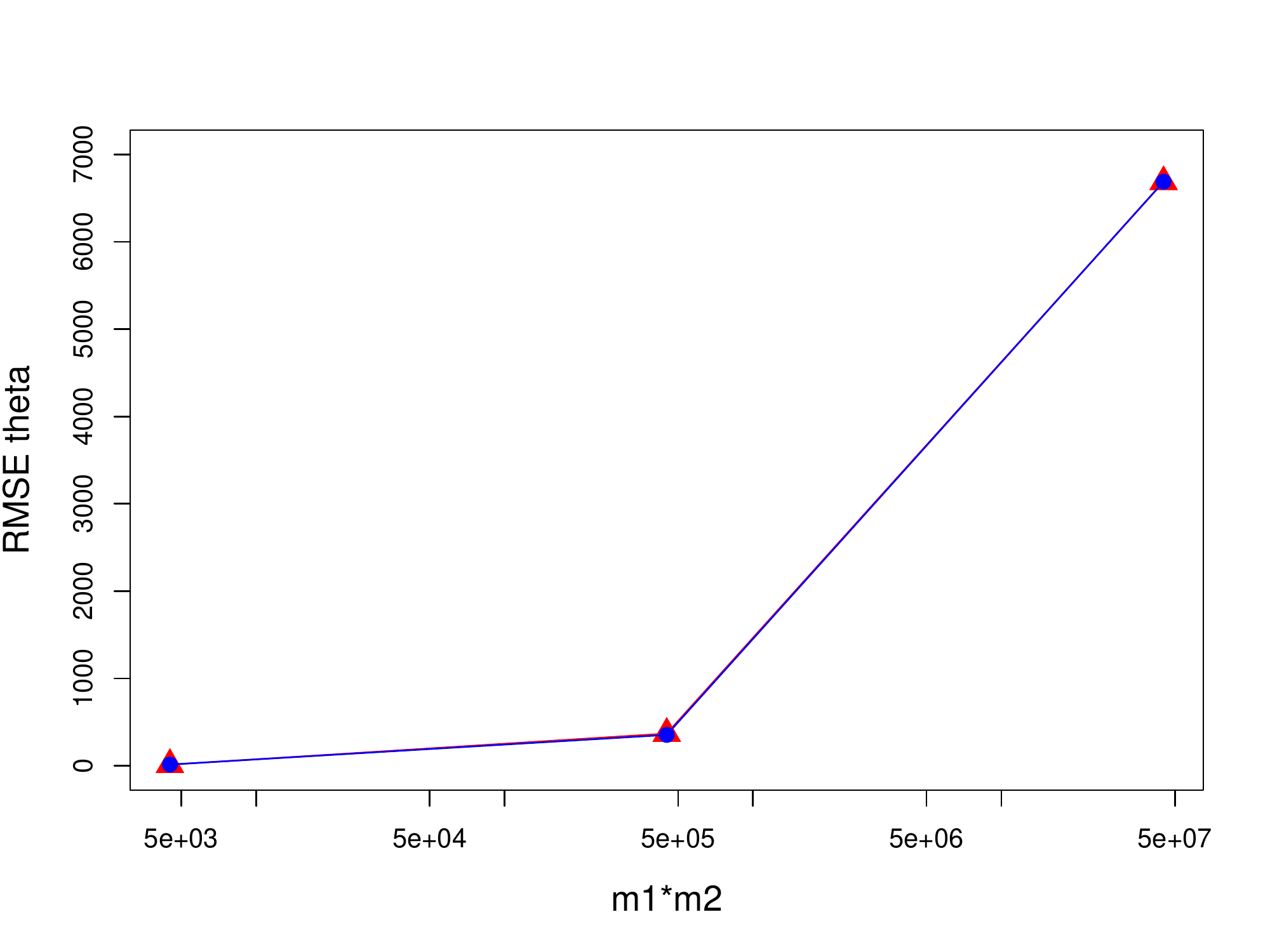}
\includegraphics[scale=0.3]{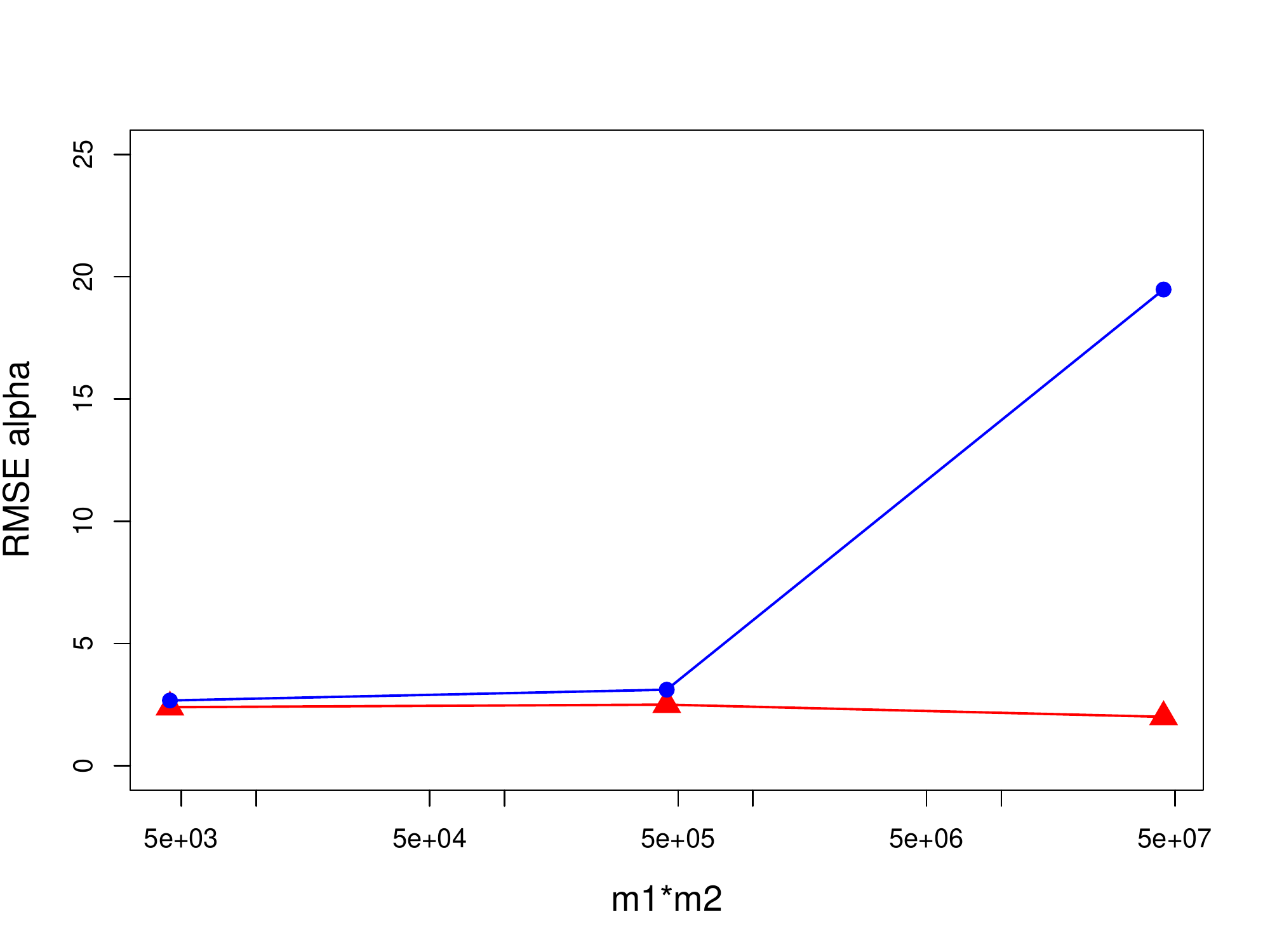}
\caption{Estimation error of mimi (red triangles) and of groups means + softImpute (blue points) for increasing problem sizes ($m_1m_2$, in log scale).}
\label{fig:imput}
\end{figure}

\subsection{Imputation of mixed data}
\label{subsec:mixed-data}

To evaluate mimi in a mixed data setting, we compare it in terms of imputation of missing values to five state-of-the-art methods:
\begin{itemize}
\item softImpute \citep{softImpute}, a method based on soft-thresholding of singular values to impute numeric data implemented in the R package \href{ https://CRAN.R-project.org/package=softImpute }{\texttt{softImpute}}.
\item Generalized Low-Rank Model (GLRM, \cite{glrm}), a matrix factorization framework for mixed data implemented in R in the \href{https://CRAN.R-project.org/package=h2o}{\texttt{h2o}} package.
\item Factorial Analysis of Mixed Data (FAMD, \cite{Pages2015}), a principal component method for mixed data implemented in the R package \href{https://CRAN.R-project.org/package=missMDA}{\texttt{missMDA}} \citep{missMDA}.
\item Multilevel Factorial Analysis of Mixed Data (MLFAMD, \cite{MLFAMD}), an extension of FAMD to impute \textit{multilevel} data, i.e. when individual are nested within groups. The method is also implemented in \href{https://CRAN.R-project.org/package=missMDA}{\texttt{missMDA}}.
\item Multivariate Imputation by Chained Equations (mice, \cite{mice}), an implementation of multiple imputation using Fully Conditional Specification. In the package \href{https://CRAN.R-project.org/package=mice}{\texttt{mice}}, different models can be set for each column to account for mixed data.
\end{itemize}
Note that we also add a comparison to imputation by the column means, in order to have a baseline reference. We fix a dictionary $\mathcal{U}$ of indicator matrices corresponding to group effects (see \Cref{ex:groups}), and generate a parameter matrix satisfying the decomposition \eqref{eq:X_decomp}. Then, columns are sampled from different data types, namely Gaussian and Bernoulli. For varying proportions of missing entries and values of the ratio $\rho=\norm{\fu{\Talpha}}[F]/\norm{\TL}[F]$, we evaluate the six methods in terms of imputation error of the two different data types. 
The parameters of all the methods (number of components for GLRM and FAMD and regularization parameters for softImpute and mimi) are selected using cross-validation. In addition, we use an optional ridge regularization in the h2o implementation of the GLRM method, which penalizes the $\ell_2$ norm of the left and right principal components ($U$ and $V$), and improved the imputation in practice. The details are available in the associated code provided as supplementary material. 

\begin{table}[!ht]
\centering
\begin{scriptsize}    
    \begin{tabular}{|l|l|l|l|l|l|l|l|l|l|}
    \hline
    {\bf \% missing}& \multicolumn{3}{c|}{\bf 20} &\multicolumn{3}{c|}{\bf 40}  &\multicolumn{3} {c|}{\bf 60}\\
    \hline
    $\mathbf{\rho}$ & {\bf 0.2} &{\bf 1}  & {\bf 5} &{\bf 0.2} &{\bf 1}   &{\bf 5}&{\bf 0.2} &{\bf 1}  & {\bf 5}\\
    \hline
    {\bf mean}   & 24.5(0.7)&23.3(0.7)&22.9(0.4) &      24.4(1.15)&33.2(1.1)&31.0(1.0)     & 42.1(1.2) & 40.7(1.2) & 39.9(0.6)   \\
    \hline
    {\bf mimi}& 18.6(0.4)& \textbf{18.3}(0.3)&\textbf{17.7}(0.3)    & 18.8(0.3) &27.0(0.5)& \textbf{24.8}(0.6)  &    36.0(1.0) & \textbf{33.7}(0.8)& \textbf{30.6}(0.4)  \\
    \hline
    {\bf GLRM} &21.5(0.7) &22.0(0.8) & 19.9(0.5) &    21.5(0.7) &31.7(1.2) & 31.0(0.9) &   44.5(10.8)& 49.4(16.2) & 50.7(3.2)\\
    \hline
    {\bf softImpute}  & \textbf{18.5}(0.3)& 18.5(0.2)& 17.9(0.3)&     18.6(0.3)& \textbf{26.8}(0.6)& 24.9(0.5) &       34.9(1.0)& 34.9(0.8)& 32.2(0.5)  \\
    \hline
    {\bf FAMD}   & \textbf{18.5}(0.4) &18.9(0.4)&18.1(0.4) &      18.7(0.3)&28.3(0.6)&25.6(0.7)     & 36.0(1.5) & 40.6(0.8)& 32.7(0.5)   \\
    \hline
    {\bf MLFAMD}   & \textbf{18.5}(0.4)&19.2(0.4)&18.3(0.4) &      \textbf{18.5}(0.5)&27.7(0.6)&26.3(0.5)     & \textbf{34.9}(1.3)& 40.7(1.0)& 33.5(0.6)   \\
    \hline
    {\bf mice}   & 22.3(0.8)&22.6(0.6)&22.1(0.6) &      22.7(0.6)&32.9(0.6)&30.1(0.9)    & 48.1(2.4) & 48.1(0.9) & 44.7(1.4)   \\
    \hline
    \end{tabular}
\end{scriptsize}
\caption{Imputation error (MSE) of mimi, GLRM, softImpute and FAMD for different percentages of missing entries ($20$\%, $40$\%, $60$\%) and different values of the ratio $\norm{\fu{\Talpha}}[F]/\norm{\TL}[F]$ ($0.2$, $1$, $5$). The values are averaged across $100$ replications and the standard deviation is given between parenthesis. In this simulation $m_1=150$, $m_2=30$, $s=3$ and $r=2$.} 
    \label{tbl:imput_all}
\end{table}
The results, presented in \Cref{tbl:imput_all}, reveal that mimi, softImpute, FAMD and MLFAMD yield imputation errors of comparable order. In this simulation setting, our method mimi improves on these existing methods when the ratio $\rho=\norm{\fu{\Talpha}}[F]/\norm{\TL}[F]$ is large, i.e. when the scale of the main effects is large compared to the interactions. The size of this improvement also increases with the amount of missing values. The imputation error by data type (quantitative and qualitative) are given in \Cref{sec:imput_by_type}, along with average experimental computational times of all the compared methods.
\subsection{American Community Survey}
\label{subsec:american-community-survey}
We next apply our method on the American Community Survey data presented in \Cref{sec:acs-intro}. We use the 2016 survey \footnote{available at \url{https://factfinder.census.gov/faces/nav/jsf/pages/searchresults.xhtml?refresh=t}} and restrict ourselves to the population of Alabama (24,614 household units). We focus on twenty variables (11 quantitative and 9 binary), and use mimi to estimate the effect of the Family Employment Status categories on these 20 variables. In other words, we place ourselves in the framework of \Cref{sec:acs-intro} and \Cref{ex:groups}. 
We model the quantitative attributes using Gaussian distributions, and the binary attributes with Bernoulli distributions. Using the same notations as in \Cref{sec:acs-intro}, $c(i)$ denotes the group (the FES) to which household $i$ belongs. Thus, if the $j$-th column is continuous (income), our model implies:
\[
 \mathbb{E}[Y_{ij}] = \Talpha_{c(i)j} + \TL_{ij}.
\]
If the $j$-th column is binary (food stamps allocation for instance), we model
$$ \mathbb{P}(Y_{ij} = 1) = \frac{\mathrm{e}^{\TX_{ij}}}{1+\e^{\TX_{ij}}},\quad \TX_{ij} = \Talpha_{c(i)j} + \TL_{ij}.$$
In \Cref{ACS}, we display the value of the parameter $\alpha_{c(i)j}$ for all possible groups $c(i)$ and some variables $j$ corresponding to the number of people in household, food stamps and allocations attributions. The value of $\alpha_{c(i)j}$ is related to the expected value $\mathbb{E}[Y_{ij}]$: everything else being fixed,  $\mathbb{E}[Y_{ij}]$ is an increasing function of $\alpha_{c(i)j}$. Thus, in terms of interpretation, the "group effect" $\alpha_{c(i)j}$ indicates (everything else being equal) whether belonging to category $c(i)$ yields larger or smaller values for $\mathbb{E}[Y_{ij}]$ compared to other categories.\\ 

We observe that household categories corresponding to married couples and single women have positive group effects on the variable "Number of people", meaning that these categories of households tend to have more children. We also observe that household categories containing employed people tend to receive less food stamps than other categories.

\begin{table}[ht]
\centering
\footnotesize
\begin{tabular}{|l|r|r|r|}
  \hline
FES & Nb of people & Food stamps (0: no, 1: yes)& Allocations (0: no, 1: yes) \\
  \hline
 Couple - both in LF & 0.38 & -1.8 & -0.68\\
 Couple - male in LF & 0.32 & -1.4 & -0.39\\
 Couple - female in LF & 0 & -0.9 & 0\\
 Couple - neither in LF & 0 & -1.6 & -0.12\\
 Male - in LF & 0 & 0&0 \\
 Male - not in LF & 0 & 0 &0\\
 Female - in LF & 0.28 & -0.19 &0\\
 Female  - not in LF & 0.13 & 0 &0\\
   \hline
\end{tabular}
\caption{Effect of family type and employment status estimated with mimi.}
\label{ACS}
\end{table}
The estimated low-rank component $\hat L$ has rank $5$, indicating that $5$ dimensions, in addition to the family type and employment status covariate, are needed to explain the observed data points.

\section{Conclusion}

This article introduces a general framework to analyze high-dimensonal, mixed and incomplete data frames with main effects and interactions. Upper bounds on the estimation error of main effects and interactions are derived. These bounds match with the lower-bounds under weak additional assumptions.
Our theoretical results are supported by a numerical experiments on synthetic and survey data, showing that the introduced method performs best when the proportion of missing values is large and the main effects and interactions are of comparable size. 

Our work opens several directions of future research. A natural extension would be to consider the inference problem, \textit{i.e.}, to derive confidence intervals for the main effects coefficients. Another useful direction would be to consider exponential family distributions with multi-dimensional parameters, for example multinomial, distributions, to incorporate categorical variables with more than two categories. One could also learn the scale parameter (which we currently assume fixed) adaptively.

\bibliographystyle{Chicago}

\bibliography{references}

\newpage
\bigskip
\begin{center}
{\large\bf SUPPLEMENTARY MATERIAL}
\end{center}
\appendix
\section{Imputation error by data type and timing results}
\label{sec:imput_by_type}
In this section we provide more details on the simulations of \Cref{subsec:mixed-data}. \Cref{tbl:imput_quanti} presents the imputation errors of the compared methods for quantitative variables only, and \Cref{tbl:imput_quali} for binary variables. For the quantitative variables, mimi and MLFAMD, which both model main group effects, perform best. As already noticed in \Cref{subsec:mixed-data}, mimi has smaller imputation errors than other methods when the size of the main effects compared to the interactions, and the proportion of missing entries, are both large. For the binary variables, suprisingly, softImpute outperforms consistently the other methods, although it is not designed for mixed data. Finally, \Cref{tbl:times} shows the average computational times of the different compared methods. We observe that the computational times of mimi, GLRM, FAMD and MLFAMD are of comparable order. The aforementioned methods are an order of magnitude slower than softImpute and mice.
\begin{table}[!ht]
\centering
\begin{scriptsize}    
    \begin{tabular}{|l|l|l|l|l|l|l|l|l|l|}
    \hline
    {\bf \% missing}& \multicolumn{3}{c|}{\bf 20} &\multicolumn{3}{c|}{\bf 40}  &\multicolumn{3} {c|}{\bf 60}\\
    \hline
    $\mathbf{\rho}$ & {\bf 0.2} &{\bf 1}  & {\bf 5} &{\bf 0.2} &{\bf 1}   &{\bf 5}&{\bf 0.2} &{\bf 1}  & {\bf 5}\\
    \hline
    {\bf mean}   & 20.7(1.3)&19.8(0.7)&19.6(0.6) &      28.0(2.6)&28.2(1.3)&26.9(1.1)   & 35.5(1.6)&34.2(1.3)&34.1(0.5)   \\
    \hline

    {\bf mimi}& 13.0(0.4)&\textbf{12.3}(0.4)&\textbf{11.4}(0.3)   & 19.8(1.1) & \textbf{19.0}(0.7) &\textbf{16.1}(0.5)  &    27.1(1.0)& \textbf{24.3}(1.1)&\textbf{20.2}(0.4) \\
    \hline
    {\bf GLRM} &16.1(1.0) &16.9(0.7) & 13.8(0.4) &    24.0(5.3)&24.5(1.5) &23.4(1.1)&   36.5(12.3)& 41.9(18.0) & 44.1(3.7)\\
    \hline
    {\bf softImpute}    & 14.0(0.5)& 14.0(0.4)& 13.3(0.4)&     20.3(1.2)&20.9(0.7)& 18.5(0.8) &       27.3(1.2)& 27.4(1.0)& 24.4(0.5) \\
    \hline
    {\bf FAMD}   & 12.7(0.5)&12.9(0.6)&12.1(0.3) &      \textbf{19.2}(1.3)&20.2(0.6)&17.3(0.6)     & 26.9(1.8)& 31.2(1.0) &22.7(0.4)   \\
    \hline
    {\bf MLFAMD}   & \textbf{12.6}(0.6)&13.7(0.6)&12.2(0.4) &      18.8(1.0)&19.7(0.6)&17.6(0.7) & \textbf{25.4}(1.5)& 26.2(1.2) & 23.5(0.6)   \\
    \hline
    {\bf mice}   & 17.3(0.8)&17.2(1.0)&16.9(0.6) &      25.1(1.2)&26.0(0.7)&23.1(1.0)   & 40.7(2.8)& 40.1(0.9)& 36.8(1.8)   \\
    \hline
    \end{tabular}
\end{scriptsize}
\caption{Quantitative variables: Imputation error (MSE) of mimi, GLRM, softImpute and FAMD for different percentages of missing entries ($20$\%, $40$\%, $60$\%) and different values of the ratio $\norm{\fu{\Talpha}}[F]/\norm{\TL}[F]$ ($0.2$, $1$, $5$). The values are averaged across $100$ replications and the standard deviation is given between parenthesis.} 
\label{tbl:imput_quanti}
\end{table}

\begin{table}[!ht]
\centering
\begin{scriptsize}    
    
    \begin{tabular}{|l|l|l|l|l|l|l|l|l|l|}
    \hline
    {\bf \% missing}& \multicolumn{3}{c|}{\bf 20} &\multicolumn{3}{c|}{\bf 40}  &\multicolumn{3} {c|}{\bf 60}\\
    \hline
    $\mathbf{\rho}$ & {\bf 0.2} &{\bf 1}  & {\bf 5} &{\bf 0.2} &{\bf 1}   &{\bf 5}&{\bf 0.2} &{\bf 1}  & {\bf 5}\\
    \hline
    {\bf mean}   & 13.0(0.3)&12.4(0.3)&\textbf{11.8}(0.4) &      18.33(0.4)&17.4(0.3)&16.9(0.3)     & 22.6(0.5)& 22.0(0.6) & \textbf{20.8}(0.6)   \\
    \hline
    {\bf mimi}& 13.5(0.3)& 13.5(0.3)&13.5(0.3)    &18.9(0.5)&19.1(0.3)&18.9(0.6) &    23.7(0.6)& 23.4(0.5)& 23.1(0.4)  \\
    \hline
    {\bf GLRM} &14.2(0.4) &14.1(0.6)&14.2(0.5) &    20.0(0.4) &20.2(0.4)&20.4(0.3) &   24.9(0.5)& 25.1(0.6) & 24.9(0.3)\\
    \hline
    {\bf softImpute}    & \textbf{12.2}(0.1)& \textbf{12.0}(0.3)& 12.0(0.6)&     \textbf{17.0}(0.3)& \textbf{16.7}(0.2)& \textbf{16.6}(0.4)&       \textbf{21.6}(0.4)& \textbf{21.6}(0.3)& 21.0(0.5)  \\
    \hline
    {\bf FAMD}   & 13.6(0.4)&13.8(0.4)&13.5(0.3) &      19.2(0.5)&19.8(0.3)&18.8(0.6)     & 24.0(0.5)& 25.0(0.4) & 23.6(0.4)   \\
    \hline
    {\bf MLFAMD}   & 13.6(0.5)&13.5(0.4)&13.6(0.4) &      19.4(0.5)&19.5(0.4)&19.6(0.5) & 24.0(0.5)& 24.1(0.4)& 23.9(0.4)   \\
    \hline
    {\bf mice}   & 14.6(0.3)&14.5(0.4)&14.4(0.4) &      20.5(0.4)&20.3(0.2)&20.5(0.4)    & 25.7(0.4)& 25.7(0.6)& 25.3(0.2)   \\
    \hline
    \end{tabular}
\end{scriptsize}
\caption{Binary variables: Imputation error (MSE) of mimi, GLRM, softImpute and FAMD for different percentages of missing entries ($20$\%, $40$\%, $60$\%) and different values of the ratio $\norm{\fu{\Talpha}}[F]/\norm{\TL}[F]$ ($0.2$, $1$, $5$). The values are averaged across $100$ replications and the standard deviation is given between parenthesis.} 
\label{tbl:imput_quali}
\end{table}

\begin{table}
\begin{tabular}{|l|l|l|l|l|l|l|l|}
\hline
\textbf{method} & \textbf{mean} &\textbf{mimi} &\textbf{GLRM} &\textbf{softImpute} &\textbf{FAMD} &\textbf{MLFAMD} &\textbf{mice} \\
\hline
\textbf{time (s)} &1.7e-4& 6.6&5.5&0.1&2.6&3.5&0.2\\
\hline
\end{tabular}
\caption{Computation time of the seven compared methods (averaged across $100$ simulations).}
\label{tbl:times}
\end{table}

\section{Proof of Theorem \ref{th:convergence}}
\label{proof:convergence}

To prove global convergence of the BCGD algorithm, we use a result from \cite[Theorem 1]{tseng:yun:2009} summarized below in \Cref{th:tseng}, combined with the compacity of the level sets of the objective $F$, proved using \Cref{lem:datfitbound} and \Cref{lem:compactlevelsets}.
\begin{theorem} 
\label{th:tseng}
Let $\{(\alpha^{[k]},L^{[k]})\}$ be the current iterates, $\{(d_{\alpha}^{[k]},d_{L}^{[k]})\}$ the descent directions and $\{(\Gamma^{[k]}_{\alpha}, \Gamma^{[k]}_{L})\}$ the functionals generated by the BCGD algorithm. Then the following results hold.
\begin{enumerate}[label=(\alph*)]
\item \label{th:tseng1} $\{F(\alpha^{[k]},L^{[k]})\}$ is nonincreasing and for all $k$, $(\Gamma^{[k]}_{\alpha}, \Gamma^{[k]}_{L})$ satisfies
\begin{equation*}
 -\Gamma^{[k]}_{\alpha}  \geq (1-\theta)\nu\norm{d_{\alpha}^{[k]}}[2]^2 \text{ and }
 -\Gamma^{[k]}_{L}  \geq (1-\theta)\nu\norm{d_{L}^{[k]}}[F]^2.
\end{equation*}
\item  Every cluster point of $\{(\alpha^{[k]},L^{[k]})\}$ is a stationary point of $F$. \label{th:tseng2}
\end{enumerate}
\end{theorem}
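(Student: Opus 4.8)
The plan is to exhibit the BCGD algorithm of \Cref{subsec:algo} as a specific instance of the coordinate gradient descent (CGD) scheme of \cite{tseng:yun:2009} and then to invoke their general convergence theorem. Write the objective as $F = f + P$, with $f$ the smooth data-fitting term of \eqref{eq:function-l} and $P(\alpha,L) = \lambda_2\norm{\alpha}[1] + \lambda_1\norm{L}[*]$ a proper, closed, convex penalty that is \emph{separable} across the two coordinate blocks $\alpha\in\R^N$ and $L\in\R^{m_1\times m_2}$. The blocks are updated cyclically (a Gauss--Seidel rule): at iteration $k$ the search direction $d_\alpha^{[k]}$ minimizes the strongly convex quadratic model of \eqref{eq:search-alpha}, the block is advanced by an Armijo step, and then $d_L^{[k]}$ is computed analogously from \eqref{eq:search-D}. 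Identifying $\approxQL$ in \eqref{eq:quad-approx} with the Tseng--Yun quadratic model, the role of their Hessian surrogate $H^{[k]}$ is played by the weighted Gram-type form built from the weights $w_{ij}[X^{[k]}_{ij}]$ of \eqref{eq:definition-w-Z}, augmented by the regularizing term $\nu\norm{\cdot}[]^2$, and $\Gamma_\alpha^{[k]},\Gamma_L^{[k]}$ are precisely the improvement functionals governing their line search.

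The second step is to check the hypotheses of their Theorem~1. By \textbf{H}\ref{ass:dict}--\ref{ass:cvx}, each $g_j$ is $C^2$ with $\smin^2\le g_j''(x)\le\smax^2$ on $[-(1+\ae)a,(1+\ae)a]$; this yields that $\nabla f$ is Lipschitz on the feasible region, which is the smoothness requirement on the differentiable part. The decisive structural condition is that the quadratic models have uniformly bounded, uniformly positive-definite curvature: since $w_{ij}[x]=\Omega_{ij}g_j''(x)/2\ge 0$ and $\nu>0$, the surrogate satisfies $\underline{c}\,\mathrm{Id}\preceq H^{[k]}\preceq\overline{c}\,\mathrm{Id}$, with $\underline{c}$ proportional to $\nu$ and $\overline{c}$ controlled by $\nu$, $\smax^2$ and the dictionary bound $\umax$ from \textbf{H}\ref{ass:dict}. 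These are exactly the bounds Tseng--Yun require on $H^{[k]}$, and the Armijo rule used here matches theirs with the same constants $\beta,\zeta,\theta$.

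With the framework verified, part~\ref{th:tseng1} is obtained as follows. Monotone non-increase of $\{F(\alpha^{[k]},L^{[k]})\}$ is a direct consequence of the Armijo acceptance inequalities, since each accepted step decreases $F$ by at least $\tau^{[k]}\zeta|\Gamma^{[k]}|$. The lower bound $-\Gamma_\alpha^{[k]}\ge(1-\theta)\nu\norm{d_\alpha^{[k]}}[2]^2$ (and its analogue for $L$) follows from the strong convexity of the quadratic model together with the optimality of $d^{[k]}$: because $d^{[k]}$ minimizes a model whose curvature is at least $\nu$, a one-line convexity estimate bounds the improvement functional by the squared step length, which is the content of the corresponding lemma in \cite{tseng:yun:2009}. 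Part~\ref{th:tseng2}, that every cluster point is stationary for $F$, is then their Theorem~1(e) under the Gauss--Seidel block rule, Lipschitz $\nabla f$, and the uniform definiteness of $H^{[k]}$.

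The main obstacle I anticipate is \emph{not} the abstract invocation but ensuring its hypotheses hold \emph{along the trajectory}: the curvature bounds from \textbf{H}\ref{ass:cvx} are only valid where $|X_{ij}|\le(1+\ae)a$, and the Lipschitz and definiteness estimates for $H^{[k]}$ must hold uniformly over all iterates. This requires the iterates to remain in a bounded region, i.e.\ the level sets of $F$ must be compact; that fact is established separately, via the boundedness of the data-fitting term (\Cref{lem:datfitbound}) and \Cref{lem:compactlevelsets}, and is the step on which the whole argument rests.
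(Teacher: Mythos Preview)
Your proposal is correct and follows the same approach as the paper: both treat \Cref{th:tseng} as a direct specialization of \cite[Theorem~1]{tseng:yun:2009}, verifying that \textbf{H}\ref{ass:dict}--\ref{ass:cvx} together with the block-separability of the penalty $P(\alpha,L)=\lambda_2\norm{\alpha}[1]+\lambda_1\norm{L}[*]$ place the BCGD iteration within Tseng--Yun's framework. The paper's own justification is a single sentence to this effect; your write-up is more explicit about the role of the $\nu$-regularization in securing uniform positive-definiteness of the quadratic model, and you correctly flag that the Lipschitz and curvature bounds from \textbf{H}\ref{ass:cvx} are only local, though note that the paper defers the level-set compactness argument (\Cref{lem:datfitbound}, \Cref{lem:compactlevelsets}) to the proof of \Cref{th:convergence} rather than building it into \Cref{th:tseng} itself.
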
 
Assumptions \textbf{H}\ref{ass:dict} and \ref{ass:cvx}, combined with the separability of the $\ell_1$ and nuclear norm penalties, guarantee that the conditions of \cite[Theorem 1]{tseng:yun:2009} are satisfied. We now show that the data-fitting term $\mathcal{L}(\fu{\alpha}+L; Y, \Omega)$ is lower-bounded.
\begin{lemma}
\label{lem:datfitbound}
There exists a constant $c>-\infty$ such that, for all $X\in\mathbb{R}^{m_1\times m_2}$, $\mathcal{L}(X; Y,\Omega)\geq c$.
\end{lemma}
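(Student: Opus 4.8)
The plan is to exploit the separability of $\mathcal{L}$ over the entries. Write $\mathcal{L}(X;Y,\Omega)=\sum_{(i,j):\,\Omega_{ij}=1}\bigl(-Y_{ij}X_{ij}+g_j(X_{ij})\bigr)$, a finite sum of terms each depending on a single coordinate $X_{ij}\in\R$. Hence it is enough to show that for every $j\in\nint{m_2}$ the scalar map $\phi_{j,y}(x):=g_j(x)-yx$ is bounded below on $\R$ for the realized value $y=Y_{ij}$; then $c:=\sum_{(i,j):\,\Omega_{ij}=1}\inf_{x\in\R}\phi_{j,Y_{ij}}(x)$ is a finite sum of finite numbers, depends only on $(Y,\Omega)$, and lower bounds $\mathcal{L}(X;Y,\Omega)$ uniformly in $X$, which is exactly the claim.

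First I would record that each $g_j$ is convex on all of $\R$: since $f^{(h_j,g_j)}_x$ is a probability density, $g_j(x)=\log\int h_j(y)\e^{xy}\mu_j(\rmd y)$ is the log-partition function of the family $\operatorname{Exp}^{(h_j,g_j)}$, which is convex by H\"older's inequality (the standing assumption $\int h_j(y)\e^{xy}\mu_j(\rmd y)<\infty$ for all $x$ makes it everywhere finite and smooth, and \textbf{H}\ref{ass:cvx} moreover makes it strictly convex near the origin). Therefore $\phi_{j,y}$ is convex with $\phi_{j,y}'=g_j'-y$, so $\inf_{x}\phi_{j,y}(x)=-g_j^\star(y)$, where $g_j^\star$ is the Fenchel conjugate of $g_j$; this infimum is finite exactly when $y\in\operatorname{dom}g_j^\star$, i.e. when $y$ lies between $\lim_{x\to-\infty}g_j'(x)$ and $\lim_{x\to+\infty}g_j'(x)$ and the conjugate stays finite at the endpoints.

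Next I would verify that the realized $Y_{ij}$ always lies in $\operatorname{dom}g_j^\star$. Since $(g_j,h_j)$ is chosen to model the observation space $\Yset_j$, the support of $\nu_j(\rmd y):=h_j(y)\mu_j(\rmd y)$ covers $\Yset_j$, so $Y_{ij}\in\Yset_j$ belongs to the closed convex hull of $\operatorname{supp}\nu_j$. For the three data types of the paper this is immediate: Gaussian on $\R$ ($g_j$ strongly convex, $\operatorname{dom}g_j^\star=\R$), Poisson on $\N$ ($\operatorname{dom}g_j^\star=[0,\infty)\ni Y_{ij}$), Bernoulli on $\{0,1\}$ ($\operatorname{dom}g_j^\star=[0,1]\ni Y_{ij}$). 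In general, finiteness even at an endpoint follows from the elementary bound $g_j(x)\ge xy+\log\nu_j(\{z\ge y\})$ for $x\ge0$ together with the symmetric estimate for $x\le0$, which gives $g_j^\star(y)\le-\log\min\bigl(\nu_j(\{z\ge y\}),\nu_j(\{z\le y\})\bigr)<\infty$ whenever $y$ is charged by $\nu_j$ from both sides.

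The main obstacle is precisely this last point: excluding the degenerate regime in which $g_j$ is asymptotically affine and $Y_{ij}$ sits exactly at the saturation value, where $g_j^\star$ could be infinite. For the data types considered here this never occurs, either by the direct case check or via the support estimate above; once every per-entry infimum $-g_j^\star(Y_{ij})$ is known to be finite, summing over the (finitely many) observed entries yields $c>-\infty$, and \Cref{lem:compactlevelsets} can then combine this with coercivity of $\lambda_1\norm{L}[*]+\lambda_2\norm{\alpha}[1]$ to obtain compactness of the level sets of $F$.
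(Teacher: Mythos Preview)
Your proof is correct and, at its mathematical core, identical to the paper's: both reduce to the per-entry lower bound $g_j(x)-xy\ge\log\nu_j(\{z\ge y\})$ for $x\ge 0$ (and its mirror for $x\le 0$), which is exactly the estimate one gets from $e^{g_j(x)}=\int e^{xz}\,\nu_j(\rmd z)\ge e^{xy}\nu_j(\{z\ge y\})$. The paper packages this as a proof by contradiction with the normalization of the density $f^{(h_j,g_j)}_x$, while you phrase it directly through the Fenchel conjugate $g_j^\star$ and add an explicit case check for the Gaussian/Poisson/Bernoulli models; the presentation differs, but the key inequality and the implicit support assumption (that $Y_{ij}$ is charged by $\nu_j$ from both sides) are the same.
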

\begin{proof}
Recall that $\mathcal{L}(X; Y,\Omega)=\sum_{i=1}^{m_1}\sum_{j=1}^{m_2}\Omega\{-Y_{ij}X_{ij}+g_j(X_{ij})\}$. Thus, we only need to prove that for all $(i,j)\in\nint{m_1}\times\nint{m_2}$, the function $x\mapsto -Y_{ij}x + g_j(x)$ is lower bounded by a constant $c_{ij}>-\infty$. Assume that this is not the case; by the convexity of $x\mapsto -Y_{ij}x + g_j(x)$ we have that either $-Y_{ij}x + g_j(x)\underset{x\rightarrow+\infty}{\rightarrow} - \infty$ or $-Y_{ij}x + g_j(x)\underset{x\rightarrow-\infty}{\rightarrow} - \infty$. Assume without loss of generality that $-Y_{ij}x + g_j(x)\underset{x\rightarrow+\infty}{\rightarrow} - \infty$. Then, there exists $x_0\in \mathbb{R}$ such that for all $x\geq x_0$, $-Y_{ij}x + g_j(x)< \log\int _{\substack{y\in \mathcal{Y}_j\\y\geq Y_{ij}}} h_j(y)\mu_j(d_y)$. Thus, for all $x\geq \max(x_0,0)$, we have that 
\begin{equation*}
\begin{aligned}
&\int _{y\in \mathcal{Y}_j} h_j(y)\mathrm{e}^{yx-g_j(x)}\mu_j(d_y)&&= \int _{\substack{y\in \mathcal{Y}_j\\y< Y_{ij}}} h_j(y)\mathrm{e}^{yx-g_j(x)}\mu_j(d_y)+\int _{\substack{y\in \mathcal{Y}_j\\y\geq Y_{ij}}} h_j(y)\mathrm{e}^{yx-g_j(x)}\mu_j(d_y)\\
& && > \int _{\substack{y\in \mathcal{Y}_j\\y< Y_{ij}}} h_j(y)\mathrm{e}^{yx-g_j(x)}\mu_j(d_y) + 1 >1,
\end{aligned}
\end{equation*}
contradicting normality of the density $h_j(y)\mathrm{e}^{yx-g_j(x)}$. Thus, there exists $c_{ij}>-\infty$, such that for all $x\in\mathbb{R}$, $-Y_{ij}x + g_j(x)\geq c_{ij}$. Finally we obtain that
$\mathcal{L}(X; Y,\Omega)\geq c=\sum_{i=1}^{m_1}\sum_{j=1}^{m_2}c_{ij}$.
\end{proof}
Finally, we use \Cref{lem:datfitbound} to show the compactness of the level sets of the objective function $F$, defined for $C\in \mathbb{R}$ by 
$$L_C = \{(\alpha, L)\in \mathbb{R}^N\times\mathbb{R}^{m_1\times m_2}; F(\alpha, L)\leq C\}.$$
\begin{lemma}
\label{lem:compactlevelsets}
The level sets of the objective function $F$ are compact.
\end{lemma}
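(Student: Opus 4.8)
The plan is to show each level set $L_C$ is closed and bounded, and then invoke the Heine--Borel theorem, since $L_C$ lives in the finite-dimensional space $\mathbb{R}^N\times\mathbb{R}^{m_1\times m_2}$. Closedness is immediate: the data-fitting term $(\alpha,L)\mapsto f(\alpha,L)=\mathcal{L}(\fu{\alpha}+L;Y,\Omega)$ is continuous (each $g_j$ is continuous by \textbf{H}\ref{ass:cvx}), and the maps $L\mapsto\norm{L}[*]$ and $\alpha\mapsto\norm{\alpha}[1]$ are continuous, so $F$ is continuous and $L_C=F^{-1}((-\infty,C])$ is closed.

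The substantive step is boundedness, and this is exactly where \Cref{lem:datfitbound} is used. On $L_C$ we have
\begin{equation*}
f(\alpha,L)+\lambda_1\norm{L}[*]+\lambda_2\norm{\alpha}[1]\leq C,
\end{equation*}
and since $\mathcal{L}(X;Y,\Omega)\geq c$ for every $X\in\mathbb{R}^{m_1\times m_2}$ by \Cref{lem:datfitbound}, in particular $f(\alpha,L)\geq c$, so
\begin{equation*}
\lambda_1\norm{L}[*]+\lambda_2\norm{\alpha}[1]\leq C-c.
\end{equation*}
Because $\lambda_1>0$ and $\lambda_2>0$, this yields $\norm{L}[*]\leq (C-c)/\lambda_1$ and $\norm{\alpha}[1]\leq (C-c)/\lambda_2$; using $\norm{L}[F]\leq\norm{L}[*]$ and $\norm{\alpha}[2]\leq\norm{\alpha}[1]$, the set $L_C$ is bounded in the product norm. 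Hence $L_C$ is closed and bounded in a finite-dimensional space, therefore compact.

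The only point that requires care, and the reason \Cref{lem:datfitbound} is genuinely needed, is that the data-fitting term $f$ alone is not coercive: it is constant in directions that only move entries $(i,j)$ with $\Omega_{ij}=0$, and it may fail to grow along directions where some $g_j$ is affine, so one cannot extract boundedness from $f$ by itself. The coercivity is instead supplied by the two convex penalties once $f$ is known to be bounded below; this is the mechanism that makes the argument work and is the part I would state explicitly.
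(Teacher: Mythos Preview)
Your proof is correct and follows essentially the same approach as the paper: both use \Cref{lem:datfitbound} to obtain $\lambda_1\norm{L}[*]+\lambda_2\norm{\alpha}[1]\leq C-c$ on $L_C$, deduce boundedness, and combine with closedness of $L_C$ (via continuity of $F$) to conclude compactness in the finite-dimensional ambient space. The paper phrases the last step as ``closed subset of a compact set'' rather than invoking Heine--Borel directly, but the argument is the same.
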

\begin{proof}
For all $(\alpha,L)\in \mathbb{R}^N\times\mathbb{R}^{m_1\times m_2}$, $F(\alpha, L) \geq c + \lambda_1\norm{L}[*] + \lambda_2\norm{\alpha}[1]$, where $c$ is the constant defined in \Cref{lem:datfitbound}. Thus, for all $C\in \mathbb{R}$, the level set $L_C$ is included in the compact set 
$$\left\{(\alpha, L)\in \mathbb{R}^N\times\mathbb{R}^{m_1\times m_2}; \norm{L}[*] \leq \frac{C-c}{2\lambda_1}\text{ and } \norm{\alpha}[1]\leq \frac{C-c}{2\lambda_2}\right\}.$$
Furthermore, by the continuity of $F$, the level set $L_C$ is also a closed set. Thus we obtain that for all $C\in\mathbb{R}$, the level set $L_C$ is compact.
\end{proof}
We can now combine \Cref{th:tseng}, \Cref{lem:datfitbound} and \Cref{lem:compactlevelsets} to prove \Cref{th:convergence}. Let $(\alpha^{[0]},L^{[0]})$ be an initialization point. \Cref{th:tseng}~\ref{th:tseng1} implies that the sequence $(\alpha^{[k]}, L^{[k]})$ generated by the BCGD algorithm lies in the level set of $F$ 
$$L_{F(\alpha^{[0]},L^{[0]})} =  \left\{(\alpha, L)\in \mathbb{R}^N\times\mathbb{R}^{m_1\times m_2}; F(\alpha,L)\leq F(\alpha^{[0]},L^{[0]}) \right\}.$$
Furthermore, $L_{F(\alpha^{[0]},L^{[0]})}$ is compact by \Cref{lem:compactlevelsets}, showing that the sequence $(\alpha^{[k]}, L^{[k]})$ has at least one accumulation point. Combined with \Cref{th:tseng}~\ref{th:tseng2} and the convexity of $F$, this shows \Cref{th:convergence}~\ref{th:convergence1}. \\

\Cref{th:tseng}~\ref{th:tseng1} and \Cref{lem:datfitbound} combined imply that the sequence $\{F(\alpha^{[k]},L^{[k]})\}$ converges to a limit $F^*$. Furthermore, \Cref{th:convergence}~\ref{th:convergence1} and the continuity of $F$ imply that there exists a sub-sequence $\{F(\alpha^{[k]},L^{[k]})\}_{k\in \mathcal{K}}$ such that $\{F(\alpha^{[k]},L^{[k]})\}_{k\in \mathcal{K}}\rightarrow F(\hat\alpha,\hat L)$. Thus, $F^*=F(\hat\alpha,\hat L)$, which proves \Cref{th:convergence}~\ref{th:convergence2}.

\section{Proof of Theorem \ref{th:general-th}}
\label{proof:general-th}
Let $\Pi = (\pi_{ij})_{(i,j)\in\nint{m_1}\times \nint{m_2}}$ be the distribution of the mask $\Omega$. For $B\in\R^{m_1\times m_2}$ we denote $B_{\Omega}$ the projection of $B$ on the set of observed entries. We define $\norm{B}[\Omega]^2 = \norm{B_{\Omega}}[F]^2$, and $\norm{B}[\Pi]^2 = \PE[{\norm{B}[\Omega]^2}]$, where the expectation is taken with respect to $\Pi$.
The proof of \Cref{th:general-th} will follow the subsequent two steps. We first derive an upper bound on the Frobenius error restricted to the observed entries $\norm{\Delta X }[\Omega]^2$, then show that the expected Frobenius error $\norm{\Delta X }[\Pi]^2$ is upper bounded by $\norm{\Delta X }[\Omega]^2$ with high probability, and up to a residual term defined later on.

Let us derive the upper bound on $\norm{\Delta X }[\Omega]^2$. By definition of $\hat{L}$ and $\hat{\alpha}$:
$
\mathcal{L}(\hat{X};Y,\Omega) -  \mathcal{L}(\TX;Y,\Omega)
\leq \lambda_1\left(\norm{\TL}[*]-\norm{\hat{L}}[*] \right) + \lambda_2\left(\norm{\Talpha}[1] - \norm{\hat{\alpha}}[1] \right).
$
Recall that, for $\alpha\in\R^N$, we use the notation $\fu{\alpha} = \sum_{k=1}^N\alpha_kU^k.$
Adding $\pscal{\nabla  \mathcal{L}(\TX; Y,\Omega)}{\Delta X}$ on both sides of the last inequality, we get
\begin{multline}
\label{eq:key0}
\mathcal{L}(\hat{X};Y,\Omega) -  \mathcal{L}(\TX;Y,\Omega)+\pscal{\nabla  \mathcal{L}(\TX; Y,\Omega)}{\Delta X} \leq \\
\lambda_1\left(\norm{\TL}[*]-\norm{\hat{L}}[*] \right)-\pscal{\nabla  \mathcal{L}(\TX; Y,\Omega)}{\Delta L}\\
  +\lambda_2\left(\norm{\Talpha}[1] - \norm{\hat{\alpha}}[1]\right) - \pscal{\nabla  \mathcal{L}(\TX; Y,\Omega)}{\fu{\Delta \alpha}}.
\end{multline}
Assumption \textbf{H}\ref{ass:cvx} implies that for any pair of matrices  $X^1$ and $X^2$ in $\mathbb{R}^{m_1\times m_2}$ satisfying $\norm{X^1}[\infty]\vee\|X^2\|_{\infty}\leq (1+\ae)a$, the two following inequalities hold for all $\Omega$:
\begin{equation}
\label{eq:strong-cvx}
\mathcal{L}(X;Y,\Omega) - \mathcal{L}(\tilde{X};Y,\Omega) - \pscal{\nabla\mathcal{L}(\tilde{X};Y,\Omega)}{X-\tilde{X}} \geq \frac{\smin^2}{2}\norm{X-\tilde{X}}[\Omega]^2,
\end{equation}
\begin{equation}
\label{eq:smooth}
\norm{\nabla\mathcal{L}(X;Y,\Omega)-\nabla\mathcal{L}(\tilde{X};Y,\Omega)}[F] \leq \smax^2\norm{X-\tilde{X}}[\Omega].
\end{equation}
Plugging \eqref{eq:strong-cvx} into \eqref{eq:key0} allows to construct a lower bound on the left hand side term  and obtain $\smin^2\norm{\Delta X }[\Omega]^2/2 \leq  \mathsf{A}_1 + \mathsf{A}_2$,
\begin{equation}
\label{eq:subres-X}
\begin{aligned}
& \mathsf{A}_1  && =  \lambda_1\left(\norm{\TL}[*]-\norm{\hat{L}}[*] \right)+\left|\pscal{\nabla  \mathcal{L}(\TX; Y,\Omega)}{\Delta L}\right|,\\
& \mathsf{A}_2 && = \lambda_2\left(\norm{\Talpha}[1] - \norm{\hat{\alpha}}[1]\right)+   \left|\pscal{\nabla  \mathcal{L}(\TX; Y,\Omega)}{\fu{\Delta \alpha}}\right|.
\end{aligned}
\end{equation}
Let us upper bound $\textsf{A}_1$. The duality of the norms $\norm{\cdot}[*]$ and $\norm{\cdot}$ implies that
\begin{equation*}
\left|\pscal{\nabla  \mathcal{L}(\TX; Y,\Omega)}{\Delta L}\right| \leq \norm{\nabla  \mathcal{L}(\TX; Y,\Omega)}\norm{\Delta L}[*].
\end{equation*}
Denote by $S_1$ and $S_2$ the linear subspaces spanned respectively by the left and right singular vectors of $\TL$, and $P_{S_1^{\perp}}$ and $P_{S_2^{\perp}}$ the orthogonal projectors on the orthogonal of $S_1$ and $S_2$, $P_{{\TL}^{\perp}} : X \mapsto P_{S_1^{\perp}}XP_{S_2^{\perp}}$ and $P_{{\TL}} : X \mapsto X - P_{S_1^{\perp}}XP_{S_2^{\perp}}$. The triangular inequality yields
\begin{equation}
\label{eq:subres-th1-L}
\norm{\hat{L}}[*]= \norm{\TL - P_{{\TL}^{\perp}}(\Delta L) - P_{{\TL}}(\Delta L)}[*] \geq \norm{\TL+P_{{\TL}^{\perp}}(\Delta L)}[*] - \norm{P_{{\TL}}(\Delta L)}[*].
\end{equation}
Moreover, by definition of $P_{{\TL}^{\perp}}$, the left and right singular vectors of $P_{{\TL}^{\perp}}(\Delta L)$ are respectively orthogonal to the left and right singular spaces of $\TL$, implying $\norm{\TL+P_{{\TL}^{\perp}}(\Delta L)}[*]  =\norm{\TL}[*]+\norm{P_{{\TL}^{\perp}}(\Delta L)}[*]$. Plugging this identity into \eqref{eq:subres-th1-L} we obtain
\begin{equation}\label{eq:L-nuc}
\norm{\TL}[*] -\norm{\hat{L}}[*] \leq \norm{P_{{\TL}}(\Delta L)}[*]- \norm{P_{{\TL}^{\perp}}(\Delta L)}[*],
\end{equation}
and $\textsf{A}_1\leq \lambda_1\left(\norm{P_{{\TL}}(\Delta L)}[*]- \norm{P_{{\TL}^{\perp}}(\Delta L)}[*] \right)+\norm{\nabla  \mathcal{L}(\TX; Y, \Omega)}\norm{\Delta L}[*].$

Using $\norm{\Delta L}[*] \leq \norm{P_{{\TL}}(\Delta L)}[*]+\norm{P_{{\TL}^{\perp}}(\Delta L)}[*]$ and the assumption $\lambda_1\geq 2\norm{\nabla  \mathcal{L}(\TX; Y,\Omega)}$ we get
$
\textsf{A}_1 \leq 3\lambda_1\norm{P_{{\TL}}(\Delta L)}[*]/2.
$
In addition, $\norm{P_{{\TL}}(\Delta L)}[*]\leq \sqrt{\rank{P_{{\TL}}(\Delta L)}}\norm{P_{{\TL}}(\Delta L)}[F]$, and $\rank{P_{{\TL}}(\Delta L)}\leq 2\rank{\TL}$ (see, \textit{e.g.} \cite[Theorem 3]{Klopp2014}). Together with $\norm{P_{{\TL}}(\Delta L)}[F]\leq \norm{\Delta L}[F]$, this finally implies the following upper bound:
\begin{equation}
\label{eq:sub-result-L-2}
\textsf{A}_1 \leq \frac{3\lambda_1}{2}\sqrt{2r}\norm{\Delta L}[F].
\end{equation}
We now derive an upper bound for $\textsf{A}_2$. The duality between $\norm{\cdot}[1]$ and $\norm{\cdot}[\infty]$ ensures
\begin{equation}
\label{eq:subres}
\left|\pscal{\nabla  \mathcal{L}(\TX; Y, \Omega)}{\fu{\Delta \alpha}}\right|
\leq \norm{\Delta\alpha}[1]\norm{\nabla\mathcal{L}(\TX; Y, \Omega)}[\infty]\umax.
\end{equation}
The assumption $\lambda_2 \geq 2\norm{\nabla\mathcal{L}(\TX; Y, \Omega)}[\infty]\umax$ in conjunction with \eqref{eq:subres} and the triangular inequality $\norm{\Delta\alpha}[1]\leq\norm{\Talpha}[1]+ \norm{\hat{\alpha}}[1]$ yield
\begin{equation}
\label{eq:sub-result-S}
\textsf{A}_2 \leq \frac{3\lambda_2}{2}\norm{\Talpha}[1].
\end{equation}
Combining inequalities \eqref{eq:subres-X}, \eqref{eq:sub-result-L-2} and \eqref{eq:sub-result-S} we obtain
\begin{equation}
\label{eq:ineq_x_omega}
\norm{\Delta X  }[\Omega]^2\leq \frac{3\lambda_1}{\smin^2}\sqrt{2r}\norm{\Delta L}[F]+ \frac{3\lambda_2}{\smin^2}\norm{\Talpha}[1].
\end{equation}

We now show that when the errors $\Delta L$ and $\Delta \alpha$ belong to a subspace $\mathcal{C}$ and for a residual $\mathsf{D}$ - both defined later on - the following holds with high probability:
\begin{equation}
\label{eq:res-str-cvx}
\norm{\Delta X }[\Omega]^2\geq \norm{\Delta X}[\Pi]^2 -\mathsf{D}.
\end{equation}
We start by defining our constrained set and prove that it contains the errors $\Delta L$ and $\Delta \alpha$ with high probability (Lemma \ref{lemma:alpha-l1}-\ref{lemma:L-nuc}); then we show that restricted strong convexity holds on this subspace (\Cref{lemma:rsc}).
For non-negative constants $d_1$, $d_{\Pi}$, $\rho < m$ and $\varepsilon$ that will be specified later on, define the two following sets where $\Delta\alpha$ and $\Delta L$ should lie:
\begin{equation}
\label{eq:sets-A}
\mathcal{A}(d_1,d_{\Pi}) =  \left\lbrace \alpha\in\R^{N}\text{ : } \norm{\alpha}[1]\leq d_1\text{, }\norm{\fu{\alpha}}[\Pi]^2\leq d_{\Pi}\right\rbrace.
\end{equation}
\begin{equation}
\label{eq:sets-C}
\begin{aligned}
& \mathcal{L}(\rho,\varepsilon) && = \Bigg\lbrace L\in\R^{m_1\times m_2}, \alpha\in\R^N: \norm{L+\fu{\alpha}}[\Pi]^2\geq \frac{72\log(d)}{p\log(6/5)},\\
& && \quad \quad \quad \quad\quad \quad \quad \quad
\norm{L+\fu{\alpha}}[\infty]\leq 1,\norm{L}[*]\leq \sqrt{\rho}\norm{L}[F] + \varepsilon \Bigg\rbrace
\end{aligned}
\end{equation}
If $\norm{\Delta X}[\Pi]^2$ is too small, the right hand side of \eqref{eq:res-str-cvx} is negative. The first inequality in the definition of $\mathcal{L}(\rho,\varepsilon)$ prevents from this. Condition $\norm{L}[*]\leq \sqrt{\rho}\norm{L}[F] + \varepsilon$ is a relaxed form of the condition $\norm{L}[*]\leq \sqrt{\rho}\norm{L}[F]$ satisfied for matrices of rank $\rho$.
Finally, we define the constrained set of interest:
$$\mathcal{C}(d_1,d_{\Pi},\rho,\varepsilon) = \mathcal{L}(\rho,\varepsilon) \cap \left\lbrace \R^{m_1\times m_2}\times \mathcal{A}(d_1,d_{\Pi}) \right\rbrace.$$
Recall $\umax = \max_k\norm{U_k}[1]$ and let
\begin{equation*}
 d_1  = 4\norm{\Talpha}[1],\text{ and } d_{\Pi}  = \frac{3\lambda_2}{\smin^2}\norm{\Talpha}[1] + 64a^2\umax\PE[{\norm{\Sigma_R}[\infty]}]\norm{\Talpha}[1] + 3072a^2p^{-1} + \frac{72a^2\log(d)}{\log(6/5)}.
\end{equation*}
\begin{lemma}
\label{lemma:alpha-l1}
Let $\lambda_2\geq 2\umax\left(\norm{\diff{\TX}}[\infty]+ 2\smax^2(1+u)a\right)$ and assume \textbf{H}\ref{ass:dict}-\ref{ass:cvx} hold. Then, with probability at least $1-8d^{-1}$,
$\Delta\alpha \in \mathcal{A}(d_1,d_{\Pi}).$
\end{lemma}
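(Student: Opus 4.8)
The plan is to verify the two inequalities defining $\mathcal{A}(d_1,d_\Pi)$ separately: the $\ell_1$-bound $\norm{\Delta\alpha}[1]\le d_1=4\norm{\Talpha}[1]$, which will be deterministic once $\lambda_2$ is taken as in the hypothesis, and the weighted-Frobenius bound $\norm{\fu{\Delta\alpha}}[\Pi]^2\le d_\Pi$, which holds on an event of probability at least $1-8d^{-1}$. Write $\Delta\alpha=\Talpha-\hat\alpha$. The elementary device throughout is that $(\Talpha,\hat L)$ is feasible for \eqref{eq:estimator} (indeed $\norm{\Talpha}[\infty]\le a$ and $\norm{\hat L}[\infty]\le a$); hence $F(\hat\alpha,\hat L)\le F(\Talpha,\hat L)$, and since the two points share the low-rank component the nuclear-norm terms cancel, leaving
\begin{equation*}
f(\hat\alpha,\hat L)+\lambda_2\norm{\hat\alpha}[1]\le f(\Talpha,\hat L)+\lambda_2\norm{\Talpha}[1].
\end{equation*}

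For the $\ell_1$-bound I would bound $f(\Talpha,\hat L)-f(\hat\alpha,\hat L)\le\pscal{\diff{\fu{\Talpha}+\hat L}}{\fu{\Delta\alpha}}$ by convexity of $\mathcal{L}$ (each $g_j$ is convex, \textbf{H}\ref{ass:cvx}), and estimate $\pscal{\diff{\fu{\Talpha}+\hat L}}{\fu{\Delta\alpha}}\le\umax\,\norm{\diff{\fu{\Talpha}+\hat L}}[\infty]\,\norm{\Delta\alpha}[1]$ (Hölder, together with $\sum_{ij}|\fu{\Delta\alpha}_{ij}|\le\umax\norm{\Delta\alpha}[1]$, which is the triangle inequality and the definition of $\umax$). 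Since $\TX=\fu{\Talpha}+\TL$, one has $|(\fu{\Talpha}+\hat L)_{ij}-\TX_{ij}|=|\hat L_{ij}-\TL_{ij}|\le 2a$, so by the mean value theorem and $g_j''\le\smax^2$ on $[-(1+\ae)a,(1+\ae)a]$ (\textbf{H}\ref{ass:dict}--\ref{ass:cvx}) one gets $\norm{\diff{\fu{\Talpha}+\hat L}}[\infty]\le\norm{\diff{\TX}}[\infty]+2\smax^2 a\le\norm{\diff{\TX}}[\infty]+2\smax^2(1+\ae)a$, and the assumed $\lambda_2$ then bounds the right-hand side by $\tfrac12\lambda_2\norm{\Delta\alpha}[1]$. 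Plugging back, $\lambda_2\norm{\hat\alpha}[1]\le\lambda_2\norm{\Talpha}[1]+\tfrac12\lambda_2\norm{\Delta\alpha}[1]$; combined with $\norm{\Delta\alpha}[1]\le\norm{\Talpha}[1]+\norm{\hat\alpha}[1]$ this gives $\norm{\hat\alpha}[1]\le 3\norm{\Talpha}[1]$, hence $\norm{\Delta\alpha}[1]\le 4\norm{\Talpha}[1]=d_1$.

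For the weighted-Frobenius bound I would first obtain a deterministic control of the empirical error: applying the restricted-strong-convexity inequality \eqref{eq:strong-cvx} to the pair $\fu{\hat\alpha}+\hat L$, $\fu{\Talpha}+\hat L$ (both of sup-norm $\le(1+\ae)a$) in the displayed comparison, and re-using the two bounds above, yields $\norm{\fu{\Delta\alpha}}[\Omega]^2\lesssim\smin^{-2}\lambda_2\norm{\Talpha}[1]$. It then remains to transfer this from $\norm{\cdot}[\Omega]$ to $\norm{\cdot}[\Pi]$, uniformly over the class $\{\fu{\alpha}:\norm{\alpha}[1]\le d_1,\ \norm{\fu{\alpha}}[\infty]\le 2\ae a\}$, which contains $\fu{\Delta\alpha}$ by the first step and \textbf{H}\ref{ass:dict}. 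I would do this by a symmetrization, contraction and peeling argument of the kind used for \Cref{lemma:rsc} and in \cite{Klopp2014}: symmetrization brings in the Rademacher matrix $\Sigma_R$; a contraction removes the square ($t\mapsto t^2$ being Lipschitz with constant of order $\ae a$ on the relevant range); the resulting Rademacher term is controlled through $\PE[{\sup_{\norm{\alpha}[1]\le d_1}|\pscal{\Sigma_R}{\fu{\alpha}}|}]\le d_1\umax\PE[{\norm{\Sigma_R}[\infty]}]$ (again \textbf{H}\ref{ass:dict} and $\ell_1/\ell_\infty$ duality); a Bernstein bound for $\sum_{ij}(\Omega_{ij}-\pi_{ij})\fu{\alpha}_{ij}^2$ produces the $a^2 p^{-1}$ contribution; and the peeling over dyadic scales — needed because nothing is known a priori about the size of $\norm{\fu{\Delta\alpha}}[\Pi]^2$ — produces the $a^2\log(d)/\log(6/5)$ contribution at the cost of a $d^{-1}$-type failure probability. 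On the resulting event one gets $\norm{\fu{\Delta\alpha}}[\Pi]^2\le 2\norm{\fu{\Delta\alpha}}[\Omega]^2+C\big(a^2\umax\PE[{\norm{\Sigma_R}[\infty]}]\norm{\Talpha}[1]+a^2 p^{-1}+a^2\log d\big)$, and inserting the empirical bound and collecting constants gives exactly $d_\Pi$.

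The routine ingredients are the feasibility/cancellation trick and the convexity and strong-convexity manipulations; the real work is the uniform empirical-to-population transfer over the $\ell_1$-ball, with the correct $p^{-1}$ and $\log d$ dependence and no control on the size of $\norm{\fu{\Delta\alpha}}[\Pi]^2$ — this is the step that forces the peeling device and dictates the precise shape of $d_\Pi$. One point worth noting is the extra summand $2\smax^2(1+\ae)a$ in the assumed lower bound on $\lambda_2$, which is absent from the basic-inequality analysis of \Cref{th:general-th}: it is exactly what is needed to dominate the gap between $\diff{\fu{\Talpha}+\hat L}$ (the gradient taken at the comparison point $(\Talpha,\hat L)$) and $\diff{\TX}$, whose sup-norm is the quantity controlled probabilistically in \Cref{lemma:Sigma}.
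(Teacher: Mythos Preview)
Your proposal is correct and follows essentially the same route as the paper. The only cosmetic difference is in the $\ell_1$-bound: you compare objective values directly via $F(\hat\alpha,\hat L)\le F(\Talpha,\hat L)$ and convexity of $\mathcal L$, whereas the paper starts from the first-order variational inequality \eqref{eq:opt-cond} at $(\hat L,\Talpha)$ and splits the gradient term into three pieces (using monotonicity of $g_j'$ for the last); both lead to $\norm{\hat\alpha}[1]\le 3\norm{\Talpha}[1]$. For the $\Pi$-bound, your strong-convexity step and empirical-to-population transfer are exactly the paper's argument (the latter being \Cref{lemma:rsc}(i), whose proof is the symmetrization/contraction/peeling you describe, with Talagrand's inequality rather than a pointwise Bernstein supplying the $a^2p^{-1}$ term).
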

\begin{proof}
See \Cref{lemma:alpha-l1-proof}.
\end{proof}
\Cref{lemma:alpha-l1} implies the upper bound on $\norm{\Delta\alpha}[2]^2$ of \Cref{th:general-th}. Thus, we only need to prove the upper bound on $\norm{\Delta L}[F]^2$. Let $\rho = 32r$ and $\varepsilon = 3\lambda_2/\lambda_1\norm{\Talpha}[1]$.
\begin{lemma}
\label{lemma:L-nuc}
Assume \textbf{H}\ref{ass:cvx} and let
$$\lambda_1\geq 2\norm{\diff{\TX}},\quad \lambda_2\geq 2\umax\left(\norm{\diff{\TX}}[\infty]+ 2\smax^2(1+u)a\right).$$
Then $
\norm{\Delta L}[*] \leq \sqrt{\rho}\norm{\Delta L}[F] + \varepsilon.$
\end{lemma}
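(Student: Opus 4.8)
The plan is to re-run the deterministic portion of the proof of \Cref{th:general-th} but to stop one step before \eqref{eq:sub-result-L-2}, so that the cone structure of $\Delta L$ relative to the singular spaces of $\TL$ is retained instead of being converted immediately into a Frobenius bound. Starting from optimality of $(\hat\alpha,\hat L)$, the basic inequality \eqref{eq:key0}, and the strong convexity bound \eqref{eq:strong-cvx}, one obtains (exactly as in the passage preceding \eqref{eq:subres-X}) that $0\le\tfrac{\smin^2}{2}\norm{\Delta X}[\Omega]^2\le\mathsf{A}_1+\mathsf{A}_2$, so it suffices to bound $\mathsf{A}_1$ and $\mathsf{A}_2$ appropriately. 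For $\mathsf{A}_2$ I would argue exactly as for \eqref{eq:sub-result-S}: the assumed lower bound on $\lambda_2$ implies in particular $\lambda_2\ge 2\umax\norm{\diff{\TX}}[\infty]$, and then $\ell_1/\ell_\infty$ duality together with $\norm{\Delta\alpha}[1]\le\norm{\Talpha}[1]+\norm{\hat\alpha}[1]$ yields $\mathsf{A}_2\le\tfrac{3\lambda_2}{2}\norm{\Talpha}[1]$.

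For $\mathsf{A}_1$ the key is not to pass to $\norm{\Delta L}[F]$ yet. I would combine nuclear/operator-norm duality, $|\pscal{\diff{\TX}}{\Delta L}|\le\norm{\diff{\TX}}\,\norm{\Delta L}[*]$, with the projection estimate \eqref{eq:L-nuc} to obtain $\mathsf{A}_1\le\lambda_1(\norm{P_{\TL}(\Delta L)}[*]-\norm{P_{\TL^\perp}(\Delta L)}[*])+\norm{\diff{\TX}}\,\norm{\Delta L}[*]$, and then use $\norm{\Delta L}[*]\le\norm{P_{\TL}(\Delta L)}[*]+\norm{P_{\TL^\perp}(\Delta L)}[*]$ together with the hypothesis $\lambda_1\ge 2\norm{\diff{\TX}}$ to get $\mathsf{A}_1\le\tfrac{3\lambda_1}{2}\norm{P_{\TL}(\Delta L)}[*]-\tfrac{\lambda_1}{2}\norm{P_{\TL^\perp}(\Delta L)}[*]$.

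Feeding these two bounds into $0\le\mathsf{A}_1+\mathsf{A}_2$ and rearranging then gives the cone inequality
\[
\norm{P_{\TL^\perp}(\Delta L)}[*]\le 3\,\norm{P_{\TL}(\Delta L)}[*]+\frac{3\lambda_2}{\lambda_1}\norm{\Talpha}[1],
\]
whence $\norm{\Delta L}[*]\le 4\,\norm{P_{\TL}(\Delta L)}[*]+\tfrac{3\lambda_2}{\lambda_1}\norm{\Talpha}[1]$. Finally, using $\rank{P_{\TL}(\Delta L)}\le 2r$ (as in \cite[Theorem 3]{Klopp2014}) and $\norm{P_{\TL}(\Delta L)}[F]\le\norm{\Delta L}[F]$ gives $\norm{P_{\TL}(\Delta L)}[*]\le\sqrt{2r}\,\norm{\Delta L}[F]$, so that $\norm{\Delta L}[*]\le 4\sqrt{2r}\,\norm{\Delta L}[F]+\varepsilon=\sqrt{\rho}\,\norm{\Delta L}[F]+\varepsilon$, by the definitions $\rho=32r$ and $\varepsilon=3\lambda_2\norm{\Talpha}[1]/\lambda_1$.

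I do not expect a genuine obstacle here: the whole argument is a re-bookkeeping of inequalities already present in the proof of \Cref{th:general-th}. The one point requiring care is retaining the negative contribution $-\tfrac{\lambda_1}{2}\norm{P_{\TL^\perp}(\Delta L)}[*]$ rather than discarding it — this is precisely what produces the cone condition — and double-checking that the strengthened hypothesis on $\lambda_2$ still suffices to reproduce \eqref{eq:sub-result-S}.
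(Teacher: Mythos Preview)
Your argument is correct and reaches exactly the same cone inequality $\norm{P_{{\TL}^{\perp}}(\Delta L)}[*]\le 3\norm{P_{\TL}(\Delta L)}[*]+(3\lambda_2/\lambda_1)\norm{\Talpha}[1]$ as the paper, followed by the identical rank bound $\norm{P_{\TL}(\Delta L)}[*]\le\sqrt{2r}\,\norm{\Delta L}[F]$ to finish. The starting point differs, however: rather than recycling the basic inequality \eqref{eq:key0}, the paper begins from the first-order optimality condition \eqref{eq:opt-cond} evaluated at $(\TL,\Talpha)$, combined with the subgradient inequalities for $\norm{\cdot}[*]$ and $\norm{\cdot}[1]$. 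That route naturally involves $\nabla\mathcal{L}(\hat X;Y,\Omega)$ rather than $\nabla\mathcal{L}(\TX;Y,\Omega)$; the extra $2\smax^2(1+\ae)a$ in the hypothesis on $\lambda_2$ is exactly what absorbs the entrywise shift $\norm{\nabla\mathcal{L}(\hat X)-\nabla\mathcal{L}(\TX)}[\infty]$ when bounding the $\alpha$-term. Your route keeps the gradient at $\TX$ throughout, so the hypothesis $\lambda_1\ge 2\norm{\diff{\TX}}$ applies immediately without any gradient-shift bookkeeping; the small price is that you need convexity of $\mathcal{L}$ (to get $0\le\mathsf{A}_1+\mathsf{A}_2$), which the subgradient-based argument does not invoke. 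Both paths are short and yield identical constants.
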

\begin{proof}
See \Cref{lemma:L-nuc-proof}
\end{proof}
As a consequence, under the conditions on the regularization parameters $\lambda_1$ and $\lambda_2$ given in \Cref{lemma:L-nuc} and whenever $\norm{\Delta L+\fu{\Delta\alpha}}[\Pi]^2\geq 72\log(d)/(p\log(6/5)),$ the error terms $(\Delta L,\Delta \alpha)$ belong to the constrained set $\mathcal{C}(d_1,d_{\Pi},\rho,\varepsilon)$ with high probability. \\

\textbf{Case 1:} Suppose $\norm{\Delta L + \fu{\Delta \alpha}}[\Pi]^2<72\log(d)/(p\log(6/5))$. Then, \Cref{lemma:alpha-l1} combined with the fact that $\norm{M}[F]^2\leq p^{-1}\norm{M}[\Pi]^2$ for all $M$, and the identity $(a+b)^2\geq a^2/4-4b^2$ ensures that
$\norm{\Delta L}[F]^2 \leq 4\norm{\Delta L + \fu{\Delta \alpha}}[F]^2 + 16 \norm{\fu{\Delta\alpha}}[F]^2.$
Therefore we obtain (ii) of \Cref{th:general-th}:
$$\norm{\Delta L}[F]^2 \leq \frac{288a^2\log(d)}{\log(6/5)} + 16\frac{\norm{\Talpha}[1]}{p}\Theta_1.$$

\textbf{Case 2:}
 Suppose $\norm{\Delta L + \fu{\Delta \alpha}}[\Pi]^2\geq 72\log(d)/(p\log(6/5))$. Then, \Cref{lemma:alpha-l1}  and \ref{lemma:L-nuc} yield that with probability at least $1-8d^{-1}$,
$$\left(\frac{\Delta L}{2(1+\ae)a},\frac{\Delta \alpha}{2(1+\ae)a}\right)\in\mathcal{C}(d_1',d_{\Pi}',\rho', \varepsilon'), \text{where}$$
$$
d_1' = \frac{d_1}{2(1+\ae)a},\quad
 d_{\Pi}' = \frac{d_{\Pi}}{4(1+\ae)^2a^2},\quad
 \rho' = \rho, \quad \varepsilon'  = \frac{\varepsilon}{2(1+\ae)a},
$$
and where $d_1,d_{\Pi},\rho$ and $\varepsilon$ are the same as in \Cref{lemma:alpha-l1}  and \ref{lemma:L-nuc}.
We use the following result, proven in \Cref{proof:lemma:rsc}. Recall that we assume for all $(i,j)\in\nint{m_1}\times\nint{m_2}$, $\mathbb{P}(\Omega_{ij} = 1)\geq p$
 and define:
\begin{equation*}
\label{eq:set-A}
\tilde{\mathcal{A}}(d_1) = \left\lbrace \alpha\in\R^N:\quad \norm{\alpha}[\infty]\leq 1;\quad \norm{\alpha}[1]\leq d_1;\quad \norm{\fu{\alpha}}[\Pi]^2\geq \frac{18\log(d)}{p\log(6/5)}\right\rbrace,
\end{equation*}
\begin{equation}
\label{eq:residuals}
\begin{aligned}
& \mathsf{D}_{\alpha} && = 8 \ae d_1 \umax\PE[{\norm{\Sigma_R}[\infty]}]+768p^{-1},\\
& \mathsf{D}_{X} && = \frac{112\rho}{p}\PE[{\norm{\Sigma_R}}]^2 + 8\ae \varepsilon \PE[{\norm{\Sigma_R}}]+8\ae d_1\umax\PE[{\norm{\Sigma_R}[\infty]}] + d_{\Pi}+768p^{-1}.
\end{aligned}
\end{equation}
\begin{lemma}
\label{lemma:rsc}
\begin{enumerate}[label=(\roman*)]
\item For any $\alpha\in\tilde{\mathcal{A}}(d_1)$, with probability at least $1-8d^{-1}$,
$$\norm{\fu{\alpha} }[\Omega]^2\geq \frac{1}{2}\norm{\fu{\alpha}}[\Pi]^2- \mathsf{D}_{\alpha}.$$
\item For any pair $(L,\alpha)\in \mathcal{C}(d_1,d_{\Pi},\rho,\varepsilon)$, with probability at least $1-8d^{-1}$
\begin{equation}
\norm{L+\fu{\alpha}}[\Omega]^2\geq \frac{1}{2}\norm{L+\fu{\alpha}}[\Pi]^2- \mathsf{D}_{X}.
\end{equation}
\end{enumerate}
\end{lemma}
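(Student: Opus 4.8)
The plan is to prove this as a uniform restricted-strong-convexity estimate for the empirical sampling operator, using the slicing (peeling) technique standard in noisy matrix completion; note that only the mask $\Omega$ enters, so the sub-exponential hypothesis on $Y$ is not used here. Parts (i) and (ii) have the same structure, (i) being the special case $L=0$, so I would prove (ii) in detail. Writing $W = L + \fu{\alpha}$, one has $\norm{W}[\Omega]^2 - \norm{W}[\Pi]^2 = \sum_{i,j}(\Omega_{ij}-\pi_{ij})W_{ij}^2$, so it suffices to bound $\sum_{ij}(\Omega_{ij}-\pi_{ij})W_{ij}^2$ by $\tfrac12\norm{W}[\Pi]^2 + \mathsf{D}_X$ uniformly over $(L,\alpha)\in\mathcal{C}(d_1,d_\Pi,\rho,\varepsilon)$. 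Since the target is homogeneous of degree two in $W$ while $\mathcal{C}$ is not a cone, I would peel: decompose $\mathcal{C}$ into the dyadic shells $\{(L,\alpha): 2^{\ell-1}T \le \norm{W}[\Pi]^2 \le 2^{\ell}T\}$, where $T = 72\log(d)/(p\log(6/5))$ is the floor imposed by the first inequality defining $\mathcal{L}(\rho,\varepsilon)$, control the supremum over each shell separately, and union-bound over $\ell$; only $O(\log(m_1 m_2))$ shells are relevant, since $\norm{W}[\infty]\le1$ forces $\norm{W}[\Pi]^2 \le m_1 m_2$.

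On a fixed shell, symmetrization bounds the expected supremum by $2\,\mathbb{E}\sup_{\text{shell}}|\sum_{ij}\epsilon_{ij}\Omega_{ij}W_{ij}^2|$, and since $W_{ij}\in[-1,1]$ there and $t\mapsto t^2$ is $2$-Lipschitz on $[-1,1]$, the Ledoux–Talagrand contraction principle replaces $W_{ij}^2$ by $W_{ij}$ up to a constant, leaving $\mathbb{E}\sup|\pscal{\Sigma_R}{W}|$ with $\Sigma_R = \sum_{ij}\Omega_{ij}\epsilon_{ij}E_{ij}$. Splitting $W = L + \fu{\alpha}$ and using duality of norms, $|\pscal{\Sigma_R}{L}|\le\norm{\Sigma_R}\,\norm{L}[*]\le\norm{\Sigma_R}(\sqrt{\rho}\,\norm{L}[F]+\varepsilon)$ by the nuclear-norm constraint in $\mathcal{L}(\rho,\varepsilon)$, while $|\pscal{\Sigma_R}{\fu{\alpha}}|\le\norm{\alpha}[1]\max_k|\pscal{\Sigma_R}{U^k}|\le\umax\,\norm{\Sigma_R}[\infty]\,\norm{\alpha}[1]\le\umax\,d_1\,\norm{\Sigma_R}[\infty]$ by the $\ell_1$ constraint in $\mathcal{A}(d_1,d_\Pi)$. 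Bounding $\norm{L}[F]$ by $\norm{W}[\Pi]$ together with $\norm{\fu{\alpha}}[\Pi]^2\le d_\Pi$, and splitting $\sqrt{\rho}\,\norm{\Sigma_R}\,\norm{L}[F]$ by the arithmetic–geometric mean inequality, turns this into $\tfrac14\norm{W}[\Pi]^2$ plus a term of order $\rho\,\mathbb{E}[\norm{\Sigma_R}]^2/p$ (the $112\rho\,\mathbb{E}[\norm{\Sigma_R}]^2/p$ term of $\mathsf{D}_X$), the rest supplying the $\ae\varepsilon\,\mathbb{E}[\norm{\Sigma_R}]$, $\ae d_1\umax\,\mathbb{E}[\norm{\Sigma_R}[\infty]]$ and $d_\Pi$ terms; the extra factor $\ae$ is the Lipschitz constant incurred when the contraction is applied to the (rescaled) $\fu{\alpha}$, for which only $\norm{\fu{\alpha}}[\infty]\le\ae$ is available.

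Finally, the map $(\Omega_{ij})\mapsto\sup_{\text{shell}}|\sum_{ij}(\Omega_{ij}-\pi_{ij})W_{ij}^2|$ has bounded differences — flipping one $\Omega_{ij}$ changes it by at most $W_{ij}^2\le1$ — so McDiarmid's inequality gives sub-Gaussian concentration about its mean with variance proxy $\sum_{ij}W_{ij}^4\le\norm{W}[F]^2\le p^{-1}\norm{W}[\Pi]^2$ on the shell; calibrating the deviation at order $\sqrt{p^{-1}\norm{W}[\Pi]^2\log d}$ and using the floor $\norm{W}[\Pi]^2\ge T$ lets it be absorbed into $\tfrac14\norm{W}[\Pi]^2$ — which is exactly why that floor is built into $\mathcal{L}(\rho,\varepsilon)$ — and produces the universal $768/p$ term. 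A union bound over the $O(\log(m_1m_2))$ shells and the two signs then yields probability $1-8d^{-1}$. For (i), the same three steps apply with $L=0$: the $\norm{\Sigma_R}$, $\rho$, $\varepsilon$ contributions disappear, $\tilde{\mathcal{A}}(d_1)$ imposes $\norm{\alpha}[\infty]\le1$ so $\norm{\fu{\alpha}}[\infty]\le\ae$, and the $2\ae$-Lipschitz contraction produces the $8\ae d_1\umax\,\mathbb{E}[\norm{\Sigma_R}[\infty]]$ in $\mathsf{D}_\alpha$.

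The main obstacle is the constant bookkeeping inside the peeling step: the dyadic resolution and the McDiarmid deviation level must be chosen so that, simultaneously on every shell, the random fluctuation is genuinely dominated by $\tfrac12\norm{W}[\Pi]^2$ while the residual collects only the claimed $\mathbb{E}[\norm{\Sigma_R}]^2$, $\mathbb{E}[\norm{\Sigma_R}[\infty]]$ and $1/p$ contributions, with no dimension dependence beyond logarithms. Combined with the bound $\mathbb{E}\norm{\Sigma_R}\lesssim\sqrt{\beta}+\sqrt{\log m}$ from \Cref{lemma:SigmaR}, this is what ultimately keeps the $\rho\,\mathbb{E}[\norm{\Sigma_R}]^2/p$ term at the order $r\beta/p$ needed in \Cref{th:general-th}.
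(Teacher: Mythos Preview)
Your overall architecture matches the paper's proof closely: both use peeling on $\norm{W}[\Pi]^2$, symmetrization, the contraction principle to pass from $W_{ij}^2$ to $W_{ij}$, and the same duality split $|\pscal{\Sigma_R}{L}|\le\norm{\Sigma_R}\norm{L}[*]$, $|\pscal{\Sigma_R}{\fu{\alpha}}|\le \umax\norm{\Sigma_R}[\infty]\norm{\alpha}[1]$.

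The genuine gap is the concentration step. You invoke McDiarmid, claiming that ``flipping one $\Omega_{ij}$ changes it by at most $W_{ij}^2$'' and hence the variance proxy is $\sum_{ij}W_{ij}^4\le p^{-1}\norm{W}[\Pi]^2$. But the functional is a \emph{supremum} over the shell, so the bounded-difference constant for coordinate $(i,j)$ is $\sup_{W\text{ in shell}}W_{ij}^2$, not $W_{ij}^2$ for some fixed $W$. On the shell you only know $\norm{W}[\infty]\le 1$, so $c_{ij}\le 1$ and McDiarmid gives variance proxy $\sum_{ij}c_{ij}^2\le m_1m_2$ --- far too large to be absorbed into $\tfrac12\norm{W}[\Pi]^2$ at the required probability level.

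The paper closes this gap by using Talagrand's concentration inequality for \emph{convex Lipschitz} functions of independent $[-1,1]$-valued variables instead of bounded differences. The map $x\mapsto\sup_W\bigl|\sum_{ij}(x_{ij}-\pi_{ij})W_{ij}^2\bigr|$ is convex (supremum of absolute values of affine functions), and Cauchy--Schwarz gives
\[
\Bigl|\sum_{ij}(x_{ij}-z_{ij})W_{ij}^2\Bigr|\le\sqrt{\sum_{ij}\pi_{ij}^{-1}(x_{ij}-z_{ij})^2}\,\sqrt{\sum_{ij}\pi_{ij}W_{ij}^4}\le\sqrt{p^{-1}T}\,\norm{x-z}[2],
\]
using $\norm{W}[\infty]\le 1$ and $\norm{W}[\Pi]^2\le T$ on the shell. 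So the Lipschitz constant is $\sqrt{T/p}$, and Talagrand delivers exactly the deviation $768p^{-1}+\tfrac{5}{12}T$ with probability $4\e^{-pT/18}$ (respectively $4\e^{-pT/72}$ for part (ii)) that the peeling needs. Replace McDiarmid by this convex-Lipschitz Talagrand and the rest of your plan goes through essentially as written; the paper's other differences (peeling ratio $6/5$ instead of $2$, summing the geometric tail over all $l\ge 1$ rather than truncating at $O(\log(m_1m_2))$ shells) are cosmetic.
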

\begin{proof}
See \Cref{proof:lemma:rsc}.
\end{proof}
\Cref{lemma:rsc} (ii) applied to $\left(\frac{\Delta L}{2(1+\ae)a},\frac{\Delta \alpha}{2(1+\ae)a}\right)$ implies that with probability at least $1-8d^{-1}$,
$\norm{\Delta X}[\Pi]^2 \leq 2\norm{\Delta X}[\Omega]^2 + 4(1+\ae)a\mathsf{D}_X $. Combined with \eqref{eq:ineq_x_omega}, $\norm{\Delta X}[F]^2\leq p^{-1}\norm{\Delta X}[\Pi]^2$, $\norm{\Delta X}[F]^2\geq \norm{\Delta L}[F]^2/2 - \norm{\fu{\Delta\alpha} }[F]^2$ and
$6\sqrt{2r}\lambda_1/(p\smin^2)\norm{\Delta L}[F] \leq \norm{\Delta L}[F]^2/4 + 288r\lambda_1^2/(p^2\smin^4)$, we obtain the result of \Cref{th:general-th} (ii):
$$\norm{\Delta L}[F]^2\leq \frac{1152r\lambda_1^2}{p^2\smin^4} + \frac{24\lambda_2\norm{\Talpha}[1]}{p\smin^2} +4(1+\ae)a\mathsf{D}_X+4\frac{\norm{\Talpha}}{p}\Theta_1.$$

\section{Proof of \Cref{th:lower-bound}}
\label{th:lower-bound-proof}
We will establish separately two lower bounds of order $rM/p$ and $s/p$ respectively.
Define
\begin{equation*}
\tilde{\mathcal{L}} = \left\lbrace \tilde{L}\in\mathbb{R}^{m_1\times r}: \tilde{L}_{ij}\in\left\lbrace 0,\eta \min(a,\smax)\left(\frac{r}{p m}\right)^{1/2}\right\rbrace,
 \forall (i,j)\in\nint{m_1}\times\nint{r} \right\rbrace,
\end{equation*}
where $0\leq \eta \leq 1$ will be chosen later. Define also the associated set of block matrices
\begin{equation*}
\mathcal{L} = \left\lbrace L =  ( \tilde{L}|\ldots |\tilde{L}|O)\in \mathbb{R}^{m_1\times m_2} : \tilde{L}\in \tilde{\mathcal{L}}\right\rbrace,
\end{equation*}
where $O$ denotes the $m_1\times (m_2 -r \left\lfloor m_2/r \right \rfloor)$ null matrix and, for some $x\in\R$, $\left\lfloor x\right \rfloor$ is the integer part of $x$. We also define the following set of vectors
\begin{equation*}
\mathcal{A} = \left\lbrace \alpha  = (\tilde{O}|\tilde{\alpha})\in \mathbb{R}^{N}, \tilde{\alpha}_k\in\{0, \tilde\eta\min(a,\smax)\}\text{ } \forall 1\leq k \leq s\right\rbrace,
\end{equation*}
with $\tilde{O}\in \mathbb{R}^{m_2-s}$ denoting the null vector. Finally, we set
\begin{equation*}
\mathcal{X} = \left\lbrace X = L + \fu{\alpha} \in \mathbb{R}^{m_1\times m_2}, \alpha\in \mathcal{A}, L\in\mathcal{L} \right\rbrace.
\end{equation*}
For any $X\in\mathcal{X}$ there exists a matrix $L\in\mathcal{L}$ of rank at most $r$ and a vector $\alpha$ with at most $s$ non-zero components satisfying $X = L +\fu{\alpha}$. Furthermore, for any $\tilde X\in\mathcal{X}$ there exists a matrix $\tilde L\in\mathcal{L}$ of rank at most $r$ and a vector $\tilde \alpha$ with at most $s$ non-zero components satisfying $X -\tilde X = \tilde L +\fu{\tilde\alpha}$.
Finally, for all $X\in\mathcal{X}$ and $(i,j)\in\nint{m_1}\times\nint{m_2}$, $0\leq X_{ij}\leq(1+\ae)a$. Thus, $\mathcal{X}\subset \mathcal{F}(r,s)$, where $\mathcal{F}(r,s)$ is defined in \eqref{eq:set}.
\paragraph{ Lower bound of order $rM/p$.} Consider the set
$$\mathcal{X}_L = \{X = L + \fu{\alpha} \in \mathcal{X}; \alpha = 0\}.$$
Lemma 2.9 in \cite{Tsybakov:2008:INE:1522486} (Varshamov Gilbert bound) implies that there exists a subset $\mathcal{X}_L^0\subset \mathcal{X}_L$ satisfying $\operatorname{Card}(\mathcal{X}_L^0)\geq 2^{rM/8}+1$, such that the zero $m_1\times m_2$ matrix $\mathbf{0} \in \mathcal{X}_L^0$, and that for any two $X$ and $X'$ in $\mathcal{X}_L^0$, $X\neq X'$ we have
\begin{equation}
\label{eq:min-norm}
\norm{X-X'}[F]^2\geq \frac{Mr}{8}\left(\eta^2\min(a,\smax)^2\frac{r}{pm}\left\lfloor\frac{m_2}{r}\right\rfloor\right)\geq\frac{\eta^2}{16} \min(a^2,\smax^2)\frac{rM}{p}.
\end{equation}
For $X\in \mathcal{X}_L^0$ we compute the Kullback-Leibler divergence $\operatorname{KL}(\mathbb{P}_{\bf 0},\mathbb{P}_X)$ between $\mathbb{P}_{\bf 0}$ and $\mathbb{P}_X$. Using Assumption \textbf{H}\ref{ass:cvx} we obtain
\begin{equation}
\label{eq:kl-div}
 \operatorname{KL}(\mathbb{P}_{\bf 0},\mathbb{P}_X) = \sum_{i,j}\pi_{ij}\left(g_j(X_{ij})-g_j(0)-g_j'(0)X_{ij}\right)
 \leq \frac{\smax^2\eta^2\min(a,\smax)^2Mr}{2}.
\end{equation}
Inequality \eqref{eq:kl-div} implies that
\begin{equation}
\frac{1}{\operatorname{Card}(\mathcal{X}_L^0)-1}\sum_{X\in \mathcal{X}_L^0}\operatorname{KL}(\mathbb{P}_{\bf 0},\mathbb{P}_X)\leq \frac{1}{16}\log(\operatorname{Card}(\mathcal{X}_L^0)-1)
\end{equation}
 is satisfied for $\tilde\eta = \min\left\lbrace 1, \left(8\smax\min(a,\smax)\right)^{-1}\right\rbrace$. Then, conditions \eqref{eq:min-norm} and \eqref{eq:kl-div} guarantee that we can apply Theorem 2.5 from \cite{Tsybakov:2008:INE:1522486}. We obtain that for some constant $\delta>0$ and with $\Psi_1 = C\min\left(\smax^{-2},\min(a,\smax^2)\right)$:
\begin{equation}
\label{eq:inf}
\inf_{\hat{L},\hat{\alpha}}\sup_{(\TL,\Talpha)\in \mathcal{E}} \mathbb{P}_{\TX}\left(\norm{\Delta L}[F]^2+\norm{\Delta\alpha}[2]^2>\frac{\Psi_1rM}{p}\right)\geq \delta,
\end{equation}
\paragraph{Lower bound of order $s/p$.} Using again the Varshamov-Gilbert bound (\cite{Tsybakov:2008:INE:1522486}, Lemma 2.9) we obtain that there exists a subset $\mathcal{A}^0\in \mathcal{A}$ satisfying $\operatorname{Card}(\mathcal{A}^0)\geq 2^{s/8}+1$ and containing the null vector ${\bf 0}\in\mathbb{R}^N$  and such that, for any $\alpha$ and $\alpha '$ of $\mathcal{A}^0$, $\alpha\neq \alpha '$,
\begin{equation}
\label{eq:min-norm-alpha}
\norm{\alpha-\alpha'}[2]^2\geq \frac{s}{8}\tilde\eta^2\min(a,\smax)^2.
\end{equation}
Define $\mathcal{X}_{\alpha} \subset \mathcal{X}$ the set of matrices $X =\fu{\alpha}$ such that $\alpha\in \mathcal{A}^0$ and $L = 0$. For any $X\in \mathcal{X}_{\alpha}$ we compute the Kullback-Leibler divergence $\operatorname{KL}(\mathbb{P}_{\bf 0},\mathbb{P}_X)$ between $\mathbb{P}_{\bf 0}$ and $\mathbb{P}_X$
\begin{equation}
\operatorname{KL}(\mathbb{P}_{\bf 0},\mathbb{P}_X) = \sum_{i,j}\pi_{ij}(g_j(X_{ij})-g_j(0)-g_j'(0)X_{ij}\leq \smax^2\norm{\fu{\alpha}}[\Pi]^2\leq \smax^2p\norm{\fu{\alpha}}[F]^2.
\end{equation}
Using Assumption \textbf{H}\ref{ass:cvx}
\begin{equation}
\label{eq:kl-div-alpha}
\begin{aligned}
&\operatorname{KL}(\mathbb{P}_0,\mathbb{P}_X) &&\leq \smax^2p\left(\max_k\norm{U^k}[F]^2 + 2\tau\right)\norm{\alpha}[2]^2\\
& &&\leq s\smax^2p\left(\max_k\norm{U^k}[F]^2 + 2\tau\right)\tilde\eta^2\min(a,\smax)^2.
\end{aligned}
\end{equation}
From \eqref{eq:kl-div-alpha} we deduce that
\begin{equation}
\frac{1}{\operatorname{Card}(\mathcal{A}^0)-1}\sum_{\mathcal{A}^0}\operatorname{KL}(\mathbb{P}_0,\mathbb{P}_X)\leq sp\left(\max_k\norm{U^k}[F]^2+2\tau\right)\smax^2\tilde\eta^2\min(a,\smax)^2.
\end{equation}
Choosing $\tilde\eta = \min\left\lbrace 1, \left(\sqrt{p}\smax\max_k(\norm{U^k}[F]+2\tau)\min(a,\smax)\right)^{-1}\right\rbrace,$ we now use \cite{Tsybakov:2008:INE:1522486}, Theorem 2.5 which implies for some constant $\delta>0$
\begin{equation}
\label{eq:inf-alpha}
\inf_{\hat{L},\hat{\alpha}}\sup_{(\TL,\Talpha)\in \mathcal{E}} \mathbb{P}_{\TX}\left\lbrace\norm{\Delta L}[F]^2+\norm{\sum_{k=1}^N(\Talpha_k - \hat{\alpha}_k)U^k}[F]^2>\Psi_2\frac{s\kappa^2}{p}\right\rbrace \geq \delta,
\end{equation}
$$\Psi_2 =  C\left(\frac{1}{\smax^2\left(\max_k\norm{U^k}[F]^{2}+2\tau\right)}\wedge (a\wedge\smax)^2\right),$$
where we have used that 
$\norm{\sum_{k=1}^N(\Talpha_k - \hat{\alpha}_k)U^k}[F]^2\geq \kappa^2\norm{\hat\alpha-\Talpha}[2]^2$. We finally obtain the result by combining \eqref{eq:inf} and \eqref{eq:inf-alpha}.

\section{Proof of \Cref{lemma:alpha-l1}}
\label{lemma:alpha-l1-proof}

We start by proving $\norm{\Delta\alpha}[1]\leq 4\norm{\Talpha}[1]$. By the optimality conditions over a convex set \cite[Chapter 4, Section 2, Proposition 4]{AubinEkeland}, there exist two subgradients $\hat{f}_{L} $ in the subdifferential of $\norm{\cdot}[*]$ taken at $\hat{L}$ and $\hat{f}_{\alpha}$ in the subdifferential of $\norm{\cdot}[1]$ taken at $\hat{\alpha}$, such that for all feasible pairs $(L,\alpha)$ we have
\begin{equation}
\label{eq:opt-cond}
\pscal{\diff{\hat{X}}}{L-\hat{L}+\sum_{k=1}^N(\alpha_k-\hat{\alpha}_k)U^k}+\lambda_1\pscal{\hat{f}_{L}}{L-\hat{L}}+\lambda_2\pscal{\hat{f}_{\alpha}}{\alpha-\hat{\alpha}}\geq 0.
\end{equation}
Applying inequality \eqref{eq:opt-cond} to the pair $(\hat{L},\Talpha)$ we obtain
$\pscal{\diff{\hat{X}}}{\sum_{k=1}^N\Delta\alpha_kU^k}+\lambda_2\pscal{\hat{f}_{\alpha}}{\Delta\alpha}\geq 0.
 $
Denote $\tilde X = \hat{L} + \sum_{k=1}^N \Talpha_kU^k$. The last inequality is equivalent to
\begin{multline*}
\underbrace{\pscal{\diff{\TX}}{\sum_{k=1}^N\Delta\alpha_kU^k}}_{\mathsf{B}_1}+\underbrace{\pscal{ \diff{\tilde{X}}-\diff{\TX}}{\sum_{k=1}^N\Delta\alpha_kU^k}}_{\mathsf{B}_2}\\+\underbrace{\pscal{\diff{\hat{X}}- \diff{\tilde{X}}}{\sum_{k=1}^N\Delta\alpha_kU^k}}_{\mathsf{B}_3}+\lambda_2\pscal{\hat{f}_{\alpha}}{\Delta\alpha}\geq 0.
\end{multline*}
We now derive upper bounds on the three terms $\mathsf{B}_1$, $\mathsf{B}_2$ and $\mathsf{B}_3$ separately. Recall that we denote $\umax = \max_k\norm{U^k}[1]$ and use \eqref{eq:subres} to bound $\mathsf{B}_1$:
\begin{equation}
\label{eq:boundI}
\mathsf{B}_1\leq \norm{\Delta\alpha}[1]\norm{\diff{\TX}}[\infty]\umax.
\end{equation}
The duality between $\linf$ and $\lone$ gives
$\mathsf{B}_2\leq \norm{\Delta \alpha}[1]\norm{ \diff{\tilde{X}}-\diff{\TX}}[\infty]\umax.$
Moreover,  $\diff{\tilde{X}}-\diff{\TX}$ is a matrix with entries $g_j'(\tilde{X}_{ij})-g_j'(\TX_{ij})$, therefore assumption \textbf{H}\ref{ass:cvx} ensures
$\norm{\diff{\tilde{X}}-\diff{\TX}}[\infty]\leq 2\smax^2(1+\ae)a, $
and finally we obtain
\begin{equation}
\label{eq:boundII}
\mathsf{B}_2\leq \norm{\Delta \alpha}[1]2\smax^2(1+\ae)a\umax .
\end{equation}
We finally bound $\mathsf{B}_3$ as follows. We have that
$\mathsf{B}_3 = \sum_{i=1}^{m_1}\sum_{j=1}^{m_2}\Omega_{ij}(g_j'(\hat{X}_{ij})-g_j'(\tilde{X}_{ij}))(\tilde{X}_{ij}-\hat{X}_{ij}).$
Now, for all $j\in\nint{m_2}$, $g_j'$ is increasing therefore
$(g_j'(\hat{X}_{ij})-g_j'(\tilde{X}_{ij}))(\tilde{X}_{ij}-\hat{X}_{ij})\leq 0,$ which implies $\mathsf{B}_3\leq 0.$
 Combined with \eqref{eq:boundI} and \eqref{eq:boundII} this yields
$$\lambda_2\pscal{\hat{f}_{\alpha}}{\hat\alpha - \Talpha}\leq  \norm{\Delta\alpha}[1]\umax\left(\norm{\diff{\TX}}[\infty]+ 2\smax^2(1+\ae)a\right).$$
Besides, the convexity of $\lone$ gives $\pscal{\hat{f}_{\alpha}}{\hat\alpha - \Talpha}\geq \norm{\hat{\alpha}}[1]-\norm{\Talpha}[1]$, therefore
\begin{multline*}
\left\lbrace\lambda_2-\umax\left(\norm{\diff{\TX}}[\infty]+ 2\smax^2(1+\ae)a\right)\right\rbrace\norm{\hat\alpha}[1]\leq\\
\left\lbrace\lambda_2+\umax\left(\norm{\diff{\TX}}[\infty]+ 2\smax^2(1+\ae)a\right)\right\rbrace\norm{\Talpha}[1],
\end{multline*}
and the condition $\lambda_2\geq 2\left\lbrace\umax\left(\norm{\diff{\TX}}[\infty]+ 2\smax^2(1+\ae)a\right)\right\rbrace$ gives $\norm{\hat \alpha}[1]\leq 3\norm{\Talpha}[1]$ and finally
\begin{equation}
\label{eq:ineq-alpha-l1}
\norm{\Delta\alpha}[1]\leq 4\norm{\Talpha}[1].
\end{equation}
\paragraph{Case 1:} $\norm{\fu{\Delta\alpha}}[\Pi]^2 < 72a^2\log(d)/(p\log(6/5))$. Then the result holds trivially.
\paragraph{Case 2:} $\norm{\fu{\Delta\alpha}}[\Pi]^2 \geq 72a^2\log(d)/(p\log(6/5))$. For $d_1>0$ recall the definition of the set
\begin{equation*}
\label{eq:set-A}
\tilde{\mathcal{A}}(d_1) = \left\lbrace \alpha\in\R^N:\quad \norm{\alpha}[\infty]\leq 1;\quad \norm{\alpha}[1]\leq d_1;\quad \norm{\fu{\alpha}}[\Pi]^2\geq \frac{18\log(d)}{p\log(6/5)}\right\rbrace.
\end{equation*}
Inequality \eqref{eq:ineq-alpha-l1} and $\norm{\Delta\alpha}[\infty]\leq 2a$ imply that
$\Delta \alpha/(2a) \in \tilde{\mathcal{A}}(2\norm{\Talpha}[1]/a).$
Therefore we can apply \Cref{lemma:rsc}(i) and obtain that with probability at least $1-8d^{-1}$,
\begin{equation}
\label{eq:upper-bound-alpha-subres}
\norm{\fu{\Delta\alpha}}[\Pi]^2\leq 2\norm{\fu{\Delta\alpha}}[\Omega]^2 + 64\ae a\norm{\Talpha}[1]\umax\PE[{\norm{\Sigma_R}[\infty]}] + 3072a^2p^{-1}.
\end{equation}
We now must upper bound the quantity $\norm{\fu{\Delta\alpha}}[\Omega]^2$. Recall that $\tilde{X} = \sum_{k=1}^N\Talpha_kU^k+\hat{X}$. By definition, $\mathcal{L}(\hat{X};Y,\Omega)+ \lambda_1\norm{\hat{L}}[*] + \lambda_2\norm{\hat{\alpha}}[1]\leq \mathcal{L}(\tilde{X};Y,\Omega)+ \lambda_1\norm{\hat{L}}[*] + \lambda_2\norm{\Talpha}[1],$ i.e.
\begin{equation*}
\mathcal{L}(\hat{X};Y,\Omega)- \mathcal{L}(\tilde{X};Y,\Omega)  \leq
\lambda_2\left(\norm{\Talpha}[1]-\norm{\hat{\alpha}}[1]\right).
\end{equation*}
Substracting $\pscal{\nabla\mathcal{L}(\tilde{X};Y,\Omega)} {\hat{X} - \tilde{X}}$ on both sides and using the restricted strong convexity (\eqref{eq:strong-cvx}), we obtain
\begin{equation}
\label{eq:th2-subres}
\begin{aligned}
& \frac{\smin^2}{2}\norm{\fu{\Delta \alpha}}[\Omega]^2 && \leq \lambda_2\left(\norm{\Talpha}[1]-\norm{\hat{\alpha}}[1]\right) +\pscal{\nabla  \mathcal{L}(\tilde{X}; Y,\Omega)}{\fu{\Delta\alpha}}\\
& && \leq \lambda_2\left(\norm{\Talpha}[1]-\norm{\hat{\alpha}}[1]\right) + \underbrace{\left|\pscal{\nabla  \mathcal{L}(\TX; Y,\Omega)}{\fu{\Delta \alpha}}\right|}_{\mathsf{C}_1}\\
& && +\underbrace{\left|\pscal{\nabla  \mathcal{L}(\TX; Y,\Omega)-\nabla  \mathcal{L}(\tilde{X}; Y)}{\fu{\Delta \alpha}}\right|}_{\mathsf{C}_2}
\end{aligned}.
\end{equation}
The duality of $\norm{\cdot}[1]$ and $\norm{\cdot}[\infty]$ yields $\mathsf{C}_1 \leq  \norm{\nabla  \mathcal{L}(\TX; Y,\Omega)}[\infty]u\norm{\Delta\alpha}[1]$, and
$$\mathsf{C}_2 \leq  \norm{\nabla  \mathcal{L}(\TX; Y,\Omega)-\nabla  \mathcal{L}(\tilde{X}; Y,\Omega)}[\infty]\umax\norm{\Delta \alpha}[1].$$
Furthermore, $\norm{\nabla  \mathcal{L}(\TX; Y,\Omega)-\nabla  \mathcal{L}(\tilde{X}; Y,\Omega)}[\infty]\leq 2\smax^2a,$
since for all $(i,j)\in\nint{m_1}\times\nint{m_2}$ $|\tilde{X}_{ij} - \TX_{ij}|\leq 2a$ and $g_j''(\tilde{X}_{ij})\leq \smax^2$. The last three inequalities plugged in \eqref{eq:th2-subres} give
\begin{equation*}
\begin{aligned}
& \frac{\smin^2}{2}\norm{\fu{\Delta \alpha}}[\Omega]^2 && \leq \lambda_2\left(\norm{\Talpha}[1]-\norm{\hat{\alpha}}[1]\right) +\umax\norm{\Delta\alpha}[1]\left\lbrace\norm{\nabla  \mathcal{L}(\TX; Y,\Omega)}[\infty]+2\smax^2a\right\rbrace.
\end{aligned}
\end{equation*}
The triangular inequality gives
\begin{equation*}
\begin{aligned}
& \frac{\smin^2}{2}\norm{\fu{\Delta \alpha}}[\Omega]^2 &&  \leq \left\lbrace\umax\left(\norm{\nabla  \mathcal{L}(\TX; Y,\Omega)}[\infty] + 2\smax^2a\right)+\lambda_2\right\rbrace\norm{\Talpha}[1]  \\
& && +\left\lbrace\umax\left(\norm{\nabla  \mathcal{L}(\TX; Y,\Omega)}[\infty] + 2\smax^2a\right)-\lambda_2\right\rbrace\norm{\hat{\alpha}}[1].
\end{aligned}
\end{equation*}
Then, the assumption $\lambda_2\geq 2\umax\left(\norm{\diff{\TX}}[\infty]+ 2\smax^2(1+\ae)a\right)$ gives
$$\norm{\fu{\Delta \alpha}}[\Omega]^2 \leq \frac{3\lambda_2}{\smin^2}\norm{\Talpha}[1].$$
Plugged into \eqref{eq:upper-bound-alpha-subres}, this last inequality implies that with probability at least $1-8d^{-1}$
\begin{equation}
\label{eq:upper-bound-alpha-subres-2}
\norm{\fu{\Delta\alpha}}[\Pi]^2\leq \frac{3\lambda_2}{\smin^2}\norm{\Talpha}[1] + 64\ae a\norm{\Talpha}[1]\umax\PE[{\norm{\Sigma_R}[\infty]}] + 3072a^2p^{-1}.
\end{equation}
Combining \eqref{eq:ineq-alpha-l1} and \eqref{eq:upper-bound-alpha-subres-2} gives the result.
\section{Proof of \Cref{lemma:L-nuc}}
\label{lemma:L-nuc-proof}
Using \eqref{eq:opt-cond} for $L=\TL$ and $\alpha = \Talpha$ we obtain
\begin{equation*}
\label{eq:subdiff-L}
\pscal{\diff{\hat{X}}}{\Delta L+\sum_{k=1}^N(\Delta \alpha_k)U^k}+\lambda_1\pscal{\hat f_L}{\Delta L}+\lambda_2\pscal{\hat f_{\alpha}}{\Delta \alpha}\geq 0.
\end{equation*}
Then, the convexity of $\lnuc$ and $\lone$ imply that
$\norm{\TL}[*]\geq \norm{\hat L}[*]+\pscal{\partial\norm{\hat L}[*]}{\Delta L}$ and
$\norm{\Talpha}[1]\geq \norm{\hat \alpha}[*]+\pscal{\partial\norm{\hat \alpha}[1]}{\Delta \alpha}$.
The last three inequalities yield
\begin{multline*}
\lambda_1\left(\norm{\hat L}[*]-\norm{\TL}[*]\right) + \lambda_2\left(\norm{\hat \alpha}[1]-\norm{\Talpha}[1]\right) \leq \pscal{\diff{\hat{X}}}{\Delta L}\\
+ \pscal{\diff{\hat{X}}}{\sum_{k=1}^N(\Delta \alpha_k)U^k}\\
\leq \norm{\diff{\hat{X}}}\norm{\Delta L}[*] + \umax\norm{\diff{\hat{X}}}[\infty]\norm{\Delta \alpha}[1].
\end{multline*}
Using \eqref{eq:L-nuc} and the conditions $$\lambda_1\geq 2\norm{\diff{\TX}},\quad\lambda_2\geq 2\umax\left\lbrace\norm{\diff{\TX}}[\infty]+ 2\smax^2(1+\ae)a\right\rbrace,$$
we get
\begin{multline*}
\lambda_1\left(\norm{P_{\TL}^{\perp}(\Delta L)}[*]-\norm{P_{\TL}(\Delta L)}[*]\right) + \lambda_2\left(\norm{\hat \alpha}[1]-\norm{\Talpha}[1]\right)\leq\\
 \frac{\lambda_1}{2}\left(\norm{P_{\TL}^{\perp}(\Delta L)}[*]+\norm{P_{\TL}(\Delta L)}[*]\right) + \frac{\lambda_2}{2}\norm{\Delta \alpha}[1],
\end{multline*}
which implies
$\norm{P_{\TL}^{\perp}(\Delta L)}[*] \leq 3\norm{P_{\TL}(\Delta L)}[*] + 3\lambda_2/\lambda_1\norm{\Talpha}[1].$
Now, using $$\norm{\Delta L}[*]\leq \norm{P_{\TL}^{\perp}(\Delta L)}[*]+\norm{P_{\TL}(\Delta L)}[*],\quad \norm{P_{\TL}(\Delta L)}[F]\leq \norm{\Delta L}[F]$$ and $\operatorname{rank}(P_{{\TL}}(\Delta L))\leq 2 r$, we get
$\norm{\Delta L}[*] \leq \sqrt{32r}\norm{\Delta L}[F] + 3\lambda_2/\lambda_1\norm{\Talpha}[1].$
This completes the proof of \Cref{lemma:L-nuc}.

\section{Proof of \Cref{lemma:rsc}}
\label{proof:lemma:rsc}
\paragraph{Proof of (i):} Recall $\mathsf{D}_{\alpha} = 8\ae d_1\umax\PE[{\norm{\Sigma_R}[\infty]}]+768p^{-1}$ and
$$\tilde{\mathcal{A}}(d_1) = \left\lbrace \alpha\in\R^N:\quad \norm{\alpha}[\infty]\leq 1;\quad \norm{\alpha}[1]\leq d_1;\quad \norm{\fu{\alpha}}[\Pi]^2\geq \frac{18\log(d)}{p\log(6/5)}\right\rbrace.
$$
 We will show that the probability of the following event is small:
$$\mathcal{B} = \left\lbrace \exists \alpha\in\tilde{\mathcal{A}}(d_1) \text{ such that } \left|\norm{\fu{\alpha} }[\Omega]^2 -  \norm{\fu{\alpha}}[\Pi]^2\right|>  \frac{1}{2}\norm{\fu{\alpha}}[\Pi]^2+ \mathsf{D}_{\alpha}\right\rbrace.$$
Indeed, $\mathcal{B}$ contains the complement of the event we are interested in. We use a peeling argument to upper bound the probability of event $\mathcal{B}$. Let $\nu = 18\log(d)/(p\log(6/5))$ and $\eta = 6/5$. For $l\in\mathbb{N}$ set
$$\mathcal{S}_l = \left\lbrace \alpha\in\tilde{\mathcal{A}}(d_1):\quad \eta^{l-1}\nu\leq \norm{\fu{\alpha}}[\Pi]^2 \leq \eta^l\nu \right\rbrace.$$
Under the event $\mathcal{B}$, there exists $l\geq 1$ and $\alpha \in \tilde{\mathcal{A}}(d_1)\cap S_l$ such that
\begin{equation}
\label{eq:peel}
\begin{aligned}
 &\left|\norm{\fu{\alpha} }[\Omega]^2 -  \norm{\fu{\alpha}}[\Pi]^2\right|&&> \frac{1}{2}\norm{\fu{\alpha}}[\Pi]^2+ \mathsf{D}_{\alpha}
 > \frac{1}{2}\eta^{l-1}\nu + \mathsf{D}_{\alpha} = \frac{5}{12}\eta^l\nu + \mathsf{D}_{\alpha}.
\end{aligned}
\end{equation}
For $T>\nu$, consider the set of vectors
$$\tilde{\mathcal{A}}(d_1, T) = \left\lbrace \alpha \in \tilde{\mathcal{A}}(d_1):\norm{\fu{\alpha}}[\Pi]^2\leq T\right \rbrace $$
and the event
$$\mathcal{B}_l = \left\lbrace \exists \alpha\in \tilde{\mathcal{A}}(d_1, \eta^l\nu): \left|\norm{\fu{\alpha}}[\Omega]^2 -  \norm{\fu{\alpha}}[\Pi]^2\right|>\frac{5}{12}\eta^l\nu + \mathsf{D}_{\alpha} \right\rbrace.$$
If $\mathcal{B}$ holds, then \eqref{eq:peel} implies that $\mathcal{B}_l$ holds for some $l\leq 1$. Therefore ,$\mathcal{B}\subset \cup_{l=1}^{+\infty} \mathcal{B}_l$, and it is enough to estimate the probability of the events $\mathcal{B}_l$ and then apply the union bound. Such an estimation is given in the following lemma, adapted from Lemma 10 in \cite{klopp:hal-01111757}.
\begin{lemma}
\label{lemma:ZT-concentration}
Define
$Z_T = {\sup}_{\alpha\in\tilde{\mathcal{A}}(d_1, T)}\left|\norm{\fu{\alpha}}[\Omega]^2 -  \norm{\fu{\alpha}}[\Pi]^2\right|.$ Then,
$$\prob{Z_T \geq \mathsf{D}_{\alpha}+\frac{5}{12}T}\leq 4\e^{-pT/18}.$$
\end{lemma}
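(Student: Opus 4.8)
The plan is to regard $Z_T$ as the supremum of a centered empirical process in the independent Bernoulli variables $\Omega_{ij}$, bound its expectation by symmetrization and contraction, and then apply a Talagrand-type concentration inequality with the deviation level tuned so the exponent reads $pT/18$. Write
$$Z_T=\sup_{\alpha\in\tilde{\mathcal{A}}(d_1,T)}\Big|\sum_{i=1}^{m_1}\sum_{j=1}^{m_2}(\Omega_{ij}-\pi_{ij})\big(\fu{\alpha}_{ij}\big)^2\Big|.$$
On $\tilde{\mathcal{A}}(d_1,T)$, Assumption \textbf{H}\ref{ass:dict} and $\|\alpha\|_\infty\le1$ give $|\fu{\alpha}_{ij}|\le\ae$, so each summand is bounded by $\ae^2$, while $\sum_{ij}\pi_{ij}(1-\pi_{ij})(\fu{\alpha}_{ij})^4\le\ae^2\sum_{ij}\pi_{ij}(\fu{\alpha}_{ij})^2=\ae^2\|\fu{\alpha}\|_\Pi^2\le\ae^2T$, the last inequality being exactly the defining constraint of $\tilde{\mathcal{A}}(d_1,T)$. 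These serve as the uniform bound and the variance proxy fed into the concentration step.

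To bound $\mathbb{E}[Z_T]$, I would symmetrize to introduce a Rademacher matrix and then, using $\Omega_{ij}(\fu{\alpha}_{ij})^2=(\Omega_{ij}\fu{\alpha}_{ij})^2$ (valid since $\Omega_{ij}\in\{0,1\}$), apply the Ledoux--Talagrand contraction principle to $\phi(t)=t^2$, which is $2\ae$-Lipschitz on $[-\ae,\ae]$, so as to linearize the square. This bounds $\mathbb{E}[Z_T]$ by a constant multiple of $\ae\,\mathbb{E}\big[\sup_{\alpha}|\langle\Sigma_R,\fu{\alpha}\rangle|\big]$, with $\Sigma_R$ as in \eqref{eq:sigma}; expanding $\langle\Sigma_R,\fu{\alpha}\rangle=\sum_k\alpha_k\langle\Sigma_R,U^k\rangle$ and using the duality between $\ell_1$ and $\ell_\infty$ together with $\|U^k\|_1\le\umax$ and $\|\alpha\|_1\le d_1$ yields, once the symmetrization and contraction constants are tracked, $\mathbb{E}[Z_T]\le 8\ae d_1\umax\,\mathbb{E}[\|\Sigma_R\|_\infty]$ — precisely the first term of $\mathsf{D}_\alpha$.

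Finally I would apply a Talagrand/Bousquet deviation inequality to the two one-sided suprema $\pm\sum_{ij}(\Omega_{ij}-\pi_{ij})(\fu{\alpha}_{ij})^2$, with uniform bound $\ae^2$ and variance proxy $\ae^2T$, taking the deviation parameter $t=pT/18$, so that a union bound over the two signs (plus slack) gives probability at most $4e^{-pT/18}$. It then remains to verify that $\mathbb{E}[Z_T]$ plus the resulting fluctuation terms — one of order $\sqrt{(\ae^2T+2\ae^2\mathbb{E}[Z_T])\,t}$ and one linear in $t$, of order $\ae^2t$ — are dominated by $\mathsf{D}_\alpha+\tfrac{5}{12}T$: the variance-type term is absorbed into $\tfrac{5}{12}T$ (using $p\le1$, $\ae=O(1)$ as in the model examples, and the lower bound $\|\fu{\alpha}\|_\Pi^2\ge\nu=18\log(d)/(p\log(6/5))$ built into $\tilde{\mathcal{A}}$), the linear-in-$t$ term and the cross term (after an AM--GM split) into $768p^{-1}$, and the mean by the previous paragraph. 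I expect this last step — the arithmetic of constants, engineered so the Talagrand output fits exactly under $\tfrac{5}{12}T+768p^{-1}$, which is the form the subsequent peeling argument in the proof of \Cref{lemma:rsc}(i) requires — to be the main technical burden; a secondary subtlety is invoking the contraction principle in a version valid for suprema of absolute values, which is handled precisely by passing to the two one-sided suprema.
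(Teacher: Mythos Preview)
Your plan is correct in outline and matches the paper on the expectation bound: the paper also symmetrizes, applies contraction to strip the square, and bounds $\langle\Sigma_R,\fu{\alpha}\rangle$ by $\|\alpha\|_1\umax\|\Sigma_R\|_\infty$, arriving at $\mathbb{E}[Z_T]\le 4\ae d_1\umax\,\mathbb{E}[\|\Sigma_R\|_\infty]$ (so the factor $8$ in $\mathsf{D}_\alpha$ is just slack).

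Where you diverge is in the concentration step. You invoke a Bousquet/Talagrand empirical-process bound with variance proxy $\ae^2T$ and envelope $\ae^2$, then set $t=pT/18$. The paper instead views
\[
f(\omega_{11},\dots,\omega_{m_1m_2})=\sup_{\alpha\in\tilde{\mathcal{A}}(d_1,T)}\Big|\sum_{ij}(\omega_{ij}-\pi_{ij})\fu{\alpha}_{ij}^2\Big|
\]
as a convex Lipschitz function of the bounded independent variables $\Omega_{ij}$, and applies Talagrand's convex-distance inequality (the version recorded in the paper as \Cref{lemma:talagrand}): $\mathbb{P}(|Z-\mathbb{E}Z|\ge 16L+t)\le 4e^{-t^2/(2L^2)}$. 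The Lipschitz constant is computed directly as $L=\sqrt{T/p}$, using Cauchy--Schwarz, $\pi_{ij}\ge p$, $\|\fu{\alpha}\|_\infty\le 1$ and $\|\fu{\alpha}\|_\Pi^2\le T$. With $t=T/3$ one gets the exponent $t^2/(2L^2)=pT/18$ for free, and the AM--GM split $16\sqrt{T/p}\le 768p^{-1}+T/12$ delivers exactly $768p^{-1}+\tfrac{5}{12}T$. This is why the constants in $\mathsf{D}_\alpha$ look engineered: they fall out of this specific Lipschitz form.

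Your Bousquet route would yield a bound of the same shape but the constant-matching you flag as the ``main technical burden'' does not obviously close: with $t=pT/18$ the leading fluctuation term is $\sqrt{2\ae^2T\cdot pT/18}=\tfrac{\ae\sqrt{p}}{3}T$, which already exceeds $\tfrac{5}{12}T$ when $\ae=2$ and $p$ is close to $1$, and the $768p^{-1}$ budget has no natural counterpart in the Bousquet output. So while your approach is sound for a result with \emph{some} absolute constants, recovering the paper's exact statement is easier via the convex-Lipschitz inequality than via Bousquet.
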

\begin{proof}
By definition,
$$Z_T = {\sup}_{\alpha\in\tilde{\mathcal{A}}(d_1, T)}\left|\sum_{(i,j)}\Omega_{ij}\fu{\alpha}_{ij}^2 - \PE[{\sum_{(i,j)}\Omega_{ij}\fu{\alpha}_{ij}^2}]\right|.$$
We use the following Talagrand's concentration inequality, proven in \cite{talagrand1996} and \cite{Chat2014univ}.
\begin{lemma}
\label{lemma:talagrand}
Assume $f:[-1,1]^n\mapsto \R$ is a convex Lipschitz function with Lipschitz constant L. Let $\Xi_1,\ldots,\Xi_n$ be independent random variables taking values in $[-1,1]$. Let $Z := f(\Xi_1,\ldots,\Xi_n)$. Then, for any $t\geq 0$,
$\prob{\left|Z-\PE[{Z}]\right|\geq 16L+t}\leq 4\e^{-t^2/2L^2}.$
\end{lemma}
We apply this result to the function $$f(x_{11},\ldots,x_{m_1m_2}) = {\sup}_{\alpha\in\tilde{\mathcal{A}}(d_1, T)}\left|\sum_{(i,j)}(x_{ij}-\pi_{ij})\fu{\alpha}_{ij}^2\right|,$$ which is Lipschitz with Lipschitz constant $\sqrt{p^{-1}T}$. Indeed, for any $(x_{11},\ldots,x_{m_1m_2})\in\R^{m_1\times m_2}$ and $(z_{11},\ldots,z_{m_1m_2})\in\R^{m_1\times m_2}$:
\begin{equation*}
\begin{aligned}
& \left|f(x_{11},\ldots,x_{m_1m_2})-f(z_{11},\ldots,z_{m_1m_2})\right|\\
& \quad = \left|{\sup}_{\alpha\in\tilde{\mathcal{A}}(d_1, T)}\left|\sum_{(i,j)}(x_{ij}-\pi_{ij})\fu{\alpha}_{ij}^2\right|-{\sup}_{\alpha\in\tilde{\mathcal{A}}(d_1, T)}\left|\sum_{(i,j)}(z_{ij}-\pi_{ij})\fu{\alpha}_{ij}^2\right| \right|\\
& \quad \leq {\sup}_{\alpha\in\tilde{\mathcal{A}}(d_1, T)}\left|\left|\sum_{(i,j)}(x_{ij}-\pi_{ij})\fu{\alpha}_{ij}^2\right|-\left|\sum_{(i,j)}(z_{ij}-\pi_{ij})\fu{\alpha}_{ij}^2\right| \right|\\
&\quad \leq {\sup}_{\alpha\in\tilde{\mathcal{A}}(d_1, T)}\left|\sum_{(i,j)}(x_{ij}-\pi_{ij})\fu{\alpha}_{ij}^2-\sum_{(i,j)}(z_{ij}-\pi_{ij})\fu{\alpha}_{ij}^2 \right|\\
&\quad \leq{\sup}_{\alpha\in\tilde{\mathcal{A}}(d_1, T)}\left|\sum_{(i,j)}(x_{ij}-z_{ij})\fu{\alpha}_{ij}^2 \right|\\
&\quad \leq{\sup}_{\alpha\in\tilde{\mathcal{A}}(d_1, T)}\sqrt{\sum_{(i,j)}\pi_{ij}^{-1}(x_{ij}-z_{ij})^2}\sqrt{\sum_{(i,j)}\pi_{ij}\fu{\alpha}_{ij}^4}\\
&\quad \leq{\sup}_{\alpha\in\tilde{\mathcal{A}}(d_1, T)}\sqrt{p^{-1}}\sqrt{\sum_{(i,j)}(x_{ij}-z_{ij})^2}\sqrt{\sum_{(i,j)}\pi_{ij}\fu{\alpha}_{ij}^2}\\
& \quad \leq \sqrt{p^{-1}T}\sqrt{\sum_{(i,j)}(x_{ij}-z_{ij})^2},
\end{aligned}\\
\end{equation*}
where we used $||a|-|b||\leq |a-b|$,$\norm{\fu{\alpha}}[\infty]\leq 1$ and $\norm{A}[\Pi]^2\leq T$. Thus, \Cref{lemma:talagrand} and the identity $\sqrt{p^{-1}T}\leq \frac{96p^{-1}}{2} + \frac{T}{2\times 96}$ imply
$$\prob{\left|Z-\PE[{Z}]\right|\geq 768p^{-1}+\frac{1}{12}T+t}\leq 4e^{-t^2p/2T}.$$
Taking $t = T/3$ we get
\begin{equation}
\label{eq:concentration-ZT}
\prob{\left|Z-\PE[{Z}]\right|\geq 768p^{-1}+\frac{5}{12}T}\leq 4e^{-pT/18}.
\end{equation}
Now we must bound the expectation $\PE[{Z_T}]$.  To do so, we use a symmetrization argument \citep{Ledoux2001} which gives
\begin{equation*}
\begin{aligned}
& \PE[{Z_T}] && = \PE[{{\sup}_{\alpha\in\tilde{\mathcal{A}}(d_1, T)}\left|\sum_{(i,j)}\Omega_{ij}\fu{\alpha}_{ij}^2 - \PE{\sum_{(i,j)}\Omega_{ij}\fu{\alpha}_{ij}^2}\right|}]\\
& && \leq 2\PE[{{\sup}_{\alpha\in\tilde{\mathcal{A}}(d_1, T)}\left|\sum_{(i,j)}\epsilon_{ij}\Omega_{ij}\fu{\alpha}_{ij}^2 \right|}],
\end{aligned}
\end{equation*}
where $\{\epsilon_{ij}\}$ is an i.i.d. Rademacher sequence independent of $\left\{\Omega_{ij}\right\}$. We apply an extension Talagrand's contraction inequality to Lipschitz functions (see \cite{Koltchinskii2011}, Theorem 2.2) and obtain
\begin{multline*}
\PE[{Z_T}] = \PE[{\sup_{A\in\mathcal{T}}\left|\sum_{i,j}\epsilon_{ij}\Omega_{ij}A_{ij}^2\right|}]\leq 4\ae\PE[{{\sup}_{\alpha\in\tilde{\mathcal{A}}(d_1, T)}\left|\sum_{(i,j)}\epsilon_{ij}\Omega_{ij}A_{ij} \right|}]\\ = 4\ae\PE[{{\sup}_{\alpha\in\tilde{\mathcal{A}}(d_1, T)}\left|\pscal{\Sigma_R}{\fu{\alpha}}\right|}],
\end{multline*}
where $\Sigma_R = \sum_{(i,j)}\epsilon_{ij}\Omega_{ij}E_{ij}$. Moreover, for $\alpha\in \tilde{\mathcal{A}}(d_1, T)$ we have
\begin{equation*}
\begin{aligned}
&\left|\pscal{\Sigma_R}{\fu{\alpha}}\right| &&= \left|\pscal{\Sigma_R}{\sum_{k=1}^N\alpha_kU^k}\right|\leq \norm{\alpha}[1]\umax\norm{\Sigma_R}[\infty].
\end{aligned}
\end{equation*}
Finally, we get $\PE[{Z_T}]\leq 4\ae d_1 \umax\PE[{\norm{\Sigma_R}[\infty]}].$
Combining this with the concentration inequality \eqref{eq:concentration-ZT} we complete the proof of \Cref{lemma:ZT-concentration}:
$$\prob{Z_T \geq 8\ae d_1\umax\PE[{\norm{\Sigma_R}[\infty]}]+768p^{-1}+\frac{5}{12}T}\leq 4\e^{-pT/18}.$$
\end{proof}
\Cref{lemma:ZT-concentration} gives that $\prob{\mathcal{B}_l}\leq 4\exp(-p\eta^l\nu/18)$. Applying the union bound we obtain
\begin{equation*}
\begin{aligned}
&\prob{\mathcal{B}} && \leq \sum_{l=1}^{\infty}\prob{\mathcal{B}_l}\leq 4\sum_{l=1}^{\infty}\exp(-p\eta^l\nu/18)\\
& && \leq 4\sum_{l=1}^{\infty}\exp(-p\log(\eta)l\nu/18),
\end{aligned}
\end{equation*}
where we used $e^x\geq x$. Finally, for $\nu = 18\log(d)/(p\log(6/5))$ we obtain
$$\prob{\mathcal{B}}\leq \frac{4\exp(-p\nu\log(\eta)/18)}{1-\exp(-p\nu\log(\eta)/18)}\leq \frac{4\exp(-\log(d))}{1-\exp(-\log(d))}\leq \frac{8}{d}, $$
since $d-1\geq d/2$, which concludes the proof of (i).

\paragraph{Proof of (ii):} The proof is similar to that of (i); we recycle some of the notations for simplicity. Recall
$\mathsf{D}_X=112\rho p^{-1} \PE[{\norm{\Sigma_R}}]^2 + 8\ae\varepsilon \PE[{\norm{\Sigma_R}}]+8\ae d_1\umax\PE[{\norm{\Sigma_R}[\infty]}] +d_{\Pi}+768p^{-1},$
and let\begin{multline*}
\mathcal{B} = \Big\lbrace \exists (L,\alpha)\in\mathcal{C}(d_1,d_{\Pi},\rho,\varepsilon);\\
 \left|\norm{L+\fu{\alpha} }[\Omega]^2 -  \norm{L+\fu{\alpha}}[\Pi]^2\right|>  \frac{1}{2}\norm{L+\fu{\alpha}}[\Pi]^2+ \mathsf{D}_X\Big\rbrace,
\end{multline*}
 $\nu = 72\log(d)/(p\log(6/5))$, $\eta = 6/5$ and for $l\in\mathbb{N}$ $$\mathcal{S}_l = \left\lbrace (L,\alpha)\in\mathcal{C}(d_1,d_{\Pi},\rho,\varepsilon):\quad \eta^{l-1}\nu\leq \norm{L+\fu{\alpha}}[\Pi]^2 \leq \eta^l\nu \right\rbrace.$$
As before, if $\mathcal{B}$ holds, then there exist $l\geq 2$ and $(L,\alpha)\in\mathcal{C}(d_1,d_{\Pi},\rho,\varepsilon)\cap S_l$ such that
\begin{equation}
\label{eq:peel2}
\begin{aligned}
 &\left|\norm{L+\fu{\alpha} }[\Omega]^2 -  \norm{L+\fu{\alpha}}[\Pi]^2\right|&&> \frac{5}{12}\eta^l\nu + \mathsf{D}_X.
\end{aligned}
\end{equation}
For $T>\nu$, consider the set $\tilde{\mathcal{C}}(T) = \left\lbrace (L,\alpha) \in \mathcal{C}(d_1,d_{\Pi},\rho,\varepsilon): \norm{L+\fu{\alpha}}[\Pi]^2\leq T\right \rbrace $, and the event
$$\mathcal{B}_l = \left\lbrace \exists (L,\alpha)\in \tilde{\mathcal{C}}(\eta^l\nu):\quad \left|\norm{L+\fu{\alpha}}[\Omega]^2 -  \norm{L+\fu{\alpha}}[\Pi]^2\right|>\frac{5}{12}\eta^l\nu + \mathsf{D}_X \right\rbrace.$$
Then, \eqref{eq:peel2} implies that $\mathcal{B}_l$ holds and $\mathcal{B}\subset \cup_{l=1}^{+\infty} \mathcal{B}_l$. Thus, we estimate in \Cref{lemma:ZT-concentration2} the probability of the events $\mathcal{B}_l$, and then apply the union bound.
\begin{lemma}
\label{lemma:ZT-concentration2}
Let $W_T = {\sup}_{(L,\alpha)\in\tilde{\mathcal{C}}(T)}\left|\norm{L+\fu{\alpha}}[\Omega]^2 -  \norm{L+\fu{\alpha}}[\Pi]^2\right|.$
$$\prob{W_T \geq \mathsf{D}_X+\frac{5}{12}T}\leq 4\e^{-pT/72}.$$
\end{lemma}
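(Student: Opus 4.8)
The approach mirrors the proof of \Cref{lemma:ZT-concentration}: realize $W_T$ as a convex Lipschitz function of the independent Bernoulli array $\{\Omega_{ij}\}$, apply Talagrand's concentration inequality to control $W_T$ around its mean, and then bound $\PE[{W_T}]$ by a symmetrization--contraction argument.

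First I would note that since $\Omega_{ij}\in\{0,1\}$,
\[
\norm{L+\fu{\alpha}}[\Omega]^2-\norm{L+\fu{\alpha}}[\Pi]^2=\sum_{i,j}(\Omega_{ij}-\pi_{ij})(L+\fu{\alpha})_{ij}^2,
\]
so $W_T=f(\Omega_{11},\dots,\Omega_{m_1m_2})$ with $f(x)=\sup_{(L,\alpha)\in\tilde{\mathcal{C}}(T)}|\sum_{i,j}(x_{ij}-\pi_{ij})(L+\fu{\alpha})_{ij}^2|$. This $f$ is convex, being a supremum of affine maps, and by Cauchy--Schwarz together with $|(L+\fu{\alpha})_{ij}|\le 1$ (from the definition of $\mathcal{L}(\rho,\varepsilon)$) and $\norm{L+\fu{\alpha}}[\Pi]^2\le T$ (from the definition of $\tilde{\mathcal{C}}(T)$), it is Lipschitz on $[-1,1]^{m_1m_2}$ with constant $\sqrt{p^{-1}T}$, exactly as in \Cref{lemma:ZT-concentration}. \Cref{lemma:talagrand} then gives, for every $t\ge 0$,
\[
\prob{\bigl|W_T-\PE[{W_T}]\bigr|\ge 16\sqrt{p^{-1}T}+t}\le 4\e^{-t^2p/(2T)}.
\]
Choosing $t=T/6$ makes the exponent $-pT/72$, and $16\sqrt{p^{-1}T}\le 768p^{-1}+\tfrac{1}{12}T$ yields $16\sqrt{p^{-1}T}+t\le 768p^{-1}+\tfrac14 T$; it thus remains to show $\PE[{W_T}]\le \mathsf{D}_X-768p^{-1}+\tfrac16 T$.

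To bound $\PE[{W_T}]$ I would symmetrize, introducing an independent Rademacher array $\{\epsilon_{ij}\}$, and apply the contraction inequality of \cite{Koltchinskii2011} (Theorem~2.2) to the map $t\mapsto t^2$, which is Lipschitz on $[-1,1]$, the range of $(L+\fu{\alpha})_{ij}$ on $\mathcal{C}(d_1,d_{\Pi},\rho,\varepsilon)$. Writing $\Sigma_R=\sum_{i,j}\epsilon_{ij}\Omega_{ij}E_{ij}$, this gives $\PE[{W_T}]\le C\,\PE[{\sup_{(L,\alpha)\in\tilde{\mathcal{C}}(T)}|\pscal{\Sigma_R}{L+\fu{\alpha}}|}]$ for a numerical constant $C$. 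Splitting $\pscal{\Sigma_R}{L+\fu{\alpha}}=\pscal{\Sigma_R}{L}+\pscal{\Sigma_R}{\fu{\alpha}}$, I bound the sparse part by $\ell_1/\ell_\infty$ duality, $|\pscal{\Sigma_R}{\fu{\alpha}}|\le\norm{\alpha}[1]\umax\norm{\Sigma_R}[\infty]\le d_1\umax\norm{\Sigma_R}[\infty]$ (as in \Cref{lemma:ZT-concentration}), and the low-rank part by nuclear/operator duality, $|\pscal{\Sigma_R}{L}|\le\norm{\Sigma_R}\norm{L}[*]\le\norm{\Sigma_R}(\sqrt{\rho}\,\norm{L}[F]+\varepsilon)$, using the relaxed inequality defining $\mathcal{L}(\rho,\varepsilon)$. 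Finally $\norm{L}[F]\le\norm{L+\fu{\alpha}}[F]+\norm{\fu{\alpha}}[F]\le\sqrt{T/p}+\sqrt{d_{\Pi}/p}$, since $\norm{L+\fu{\alpha}}[\Pi]^2\le T$, $\norm{\fu{\alpha}}[\Pi]^2\le d_{\Pi}$ and $\pi_{ij}\ge p$.

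Taking expectations over $(\Omega,\epsilon)$ and using that $\varepsilon,d_1,\rho,d_{\Pi}$ are deterministic, $\PE[{W_T}]$ is dominated by a sum of terms proportional to $\varepsilon\PE[{\norm{\Sigma_R}}]$, $d_1\umax\PE[{\norm{\Sigma_R}[\infty]}]$, $\sqrt{\rho/p}\,\PE[{\norm{\Sigma_R}}]\sqrt{T}$ and $\sqrt{\rho/p}\,\PE[{\norm{\Sigma_R}}]\sqrt{d_{\Pi}}$. The only term carrying $T$ is the third; applying AM--GM it is absorbed into a contribution $\lesssim\tfrac{\rho}{p}\PE[{\norm{\Sigma_R}}]^2+\tfrac16 T$, and the fourth term into a further $\tfrac{\rho}{p}\PE[{\norm{\Sigma_R}}]^2$ contribution plus a constant multiple of $d_{\Pi}$; collecting the constants reproduces $\mathsf{D}_X=\tfrac{112\rho}{p}\PE[{\norm{\Sigma_R}}]^2+8\ae\varepsilon\PE[{\norm{\Sigma_R}}]+8\ae d_1\umax\PE[{\norm{\Sigma_R}[\infty]}]+d_{\Pi}+768p^{-1}$, with leftover $T$-coefficient $\tfrac16$, and the Talagrand bound then yields $\prob{W_T\ge\mathsf{D}_X+\tfrac{5}{12}T}\le 4\e^{-pT/72}$. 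The only delicate point is precisely this constant tracking: the AM--GM weights, the choice $t=T/6$, and the split of $16\sqrt{p^{-1}T}$ must together leave a coefficient of $T$ at most $\tfrac{5}{12}$ while the residual collapses exactly to $\mathsf{D}_X$. The probabilistic content is the same Talagrand/symmetrization/contraction chain already used for part~(i), now applied to a process indexed by pairs $(L,\alpha)$ rather than by $\alpha$ alone.
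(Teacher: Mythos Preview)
Your proposal is correct and follows essentially the same route as the paper: Talagrand's inequality (\Cref{lemma:talagrand}) with Lipschitz constant $\sqrt{p^{-1}T}$ and the choice $t=T/6$ for the concentration step, then symmetrization and contraction to reduce $\PE[{W_T}]$ to $\PE[{\sup|\pscal{\Sigma_R}{L+\fu{\alpha}}|}]$, followed by the nuclear/operator and $\ell_1/\ell_\infty$ duality splits and AM--GM to absorb the $\sqrt{T}$ and $\sqrt{d_\Pi}$ terms. The only cosmetic difference is that the paper passes through $\norm{L}[\Pi]$ rather than $\norm{L}[F]$ when bounding $\norm{L}[*]$, but this yields the identical estimate $\sqrt{\rho/p}(\sqrt{T}+\sqrt{d_\Pi})+\varepsilon$.
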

\begin{proof}
The proof is two-fold: first we show that $W_T$ concentrates around its expectation, then bound its expectation.
By definition,
$$W_T = {\sup}_{(L,\alpha)\in\tilde{\mathcal{C}}(T)}\left|\sum_{(i,j)}\Omega_{ij}(L_{ij}+\fu{\alpha}_{ij})^2 - \PE[{\sum_{(i,j)}\Omega_{ij}(L_{ij}+\fu{\alpha}_{ij})^2}]\right|.$$
The concentration proof is exactly similar to the proof in \Cref{lemma:ZT-concentration}, but we choose $t = T/6$, and we obtain
\begin{equation}
\label{eq:concentration-ZT2}
\prob{\left|W_T-\PE[{W_T}]\right|\geq 768p^{-1}+\frac{3}{12}T}\leq 4e^{-pT/72}.
\end{equation}
Let us now bound the expectation $\PE[{W_T}]$. Again, we use a standard symmetrization argument \citep{Ledoux2001} which gives
\begin{equation*}
\begin{aligned}
& \PE[{W_T}] && \leq 2\PE[{{\sup}_{(L,\alpha)\in\tilde{\mathcal{C}}(T)}\left|\sum_{(i,j)}\epsilon_{ij}\Omega_{ij}(L_{ij}+\fu{\alpha}_{ij})^2 \right|}],
\end{aligned}
\end{equation*}
where $\{\epsilon_{ij}\}$ is an i.i.d. Rademacher sequence independent of $\Omega_{ij}$. Then, the contraction inequality (see \cite{Koltchinskii2011}, Theorem 2.2) yields
\begin{equation*}
\PE[{W_T}]\leq 4\ae\PE[{{\sup}_{(L,\alpha)\in\tilde{\mathcal{C}}(T)}\left|\pscal{\Sigma_R}{L+\fu{\alpha}}\right|}],
\end{equation*}
where $\Sigma_R = \sum_{(i,j)}\epsilon_{ij}\Omega_{ij}E_{ij}$. Moreover
\begin{equation*}
\begin{aligned}
&\left|\pscal{\Sigma_R}{L+\fu{\alpha}}\right| &&\leq  \left|\pscal{\Sigma_R}{L}\right|+\left|\pscal{\Sigma_R}{\fu{\alpha}}\right|\\
& && \leq \norm{L}[*]\norm{\Sigma_R}+\norm{\alpha}[1]\umax\norm{\Sigma_R}[\infty].
\end{aligned}
\end{equation*}
For $(L,\alpha)\in \tilde{\mathcal{C}}(T)$ we have by assumption $\norm{\alpha}[1]\leq d_1 $, $\norm{\fu{\alpha}}[\Pi]\leq \sqrt{d_{\Pi}} $ and $\norm{L}[*]\leq \sqrt{\rho}\norm{L}[F]+\varepsilon$. We obtain
\begin{equation*}
\begin{aligned}
& \norm{L}[*] && \leq \sqrt{\frac{\rho}{p}}\norm{L}[\Pi]+\varepsilon \leq \sqrt{\frac{\rho}{p}}\left(\norm{L+\fu{\alpha}}[\Pi]+\norm{\fu{\alpha}}[\Pi]\right)+\varepsilon\\
& && \leq \sqrt{\frac{\rho}{p}}\left(\sqrt{T}+\sqrt{d_{\Pi}}\right)+\varepsilon.
\end{aligned}
\end{equation*}
This gives
\begin{equation*}
\begin{aligned}
&\PE[{W_T}] &&\leq 4\ae\left\lbrace\sqrt{\frac{\rho}{p}}\left(\sqrt{T}+\sqrt{d_{\Pi}}\right) +\varepsilon\right\rbrace\norm{\Sigma_R} + 4\ae d_1\umax\norm{\Sigma_R}[\infty]\\
& && \leq \frac{T}{12} + \frac{d_{\Pi}}{2} + 56\ae^2\frac{\rho}{p}\norm{\Sigma_R}^2 + 4\ae\varepsilon\norm{\Sigma_R} + 4\ae d_1\umax\norm{\Sigma_R}[\infty].
\end{aligned}
\end{equation*}
Combining this with the concentration inequality \eqref{eq:concentration-ZT2} we finally obtain:
\begin{equation*}
\prob{W_T \geq \mathsf{D}_X+\frac{5}{12}T}\leq 4\e^{-pT/72}.
\end{equation*}
\end{proof}

\Cref{lemma:ZT-concentration2} gives that $\prob{\mathcal{B}_l}\leq 4\exp(-p\eta^l\nu/72)$. Applying the union bound we obtain
\begin{equation*}
\begin{aligned}
&\prob{\mathcal{B}} && \leq \sum_{l=1}^{\infty}\prob{\mathcal{B}_l} \leq 4\sum_{l=1}^{\infty}\exp(-p\eta^l\nu/72)\\
& && \leq 4\sum_{l=1}^{\infty}\exp(-p\log(\eta)l\nu/72),
\end{aligned}
\end{equation*}
where we used $e^x\geq x$. Finally, for $\nu = 72\log(d)/(p\log(6/5))$ we obtain
$$\prob{\mathcal{B}}\leq \frac{4\exp(-p\nu\log(\eta)/72)}{1-\exp(-p\nu\log(\eta)/72)}\leq \frac{4\exp(-\log(d))}{1-\exp(-\log(d))}\leq 8d^{-1}, $$
since $d-1\geq d/2$, which concludes the proof of (ii).

\section{Proof of \Cref{lemma:SigmaR}}
\label{proof:lemma:SigmaR}
The first inequality is trivially true using that $\norm{\Sigma}[\infty] = \max_{i,j}|\Omega_{ij}\epsilon_{ij}|\leq 1$. We prove the second inequality using an extension to rectangular matrices via self-adjoint dilation of Corollary 3.3 in \cite{bandeira2016}.
\begin{proposition}
\label{prop:sigmaR-nuc}
Let $A$ be an $m_1\times m_2$ rectangular matrix with $A_{ij}$ independent centered bounded random variables. then, there exists a universal constant $C^*$ such that
$$\PE[{\norm{A}}] \leq C^*\left\{\sigma_1\vee \sigma_2 + \sigma_*\sqrt{\log(m_1\wedge m_2)}\right\},$$
\begin{equation*}
\sigma_1 = \max_i\sqrt{\sum_j\PE[{A_{ij}^2}]},\quad
\sigma_2 = \max_j\sqrt{\sum_i\PE[{A_{ij}^2}]},\quad
\sigma_* = \max_{i,j}|A_{ij}|.
\end{equation*}
\end{proposition}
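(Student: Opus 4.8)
The plan is to reduce the rectangular case to a symmetric one via a Hermitian dilation and then apply the spectral-norm bound of Bandeira and van Handel for symmetric random matrices with independent entries. First I would introduce
\[
\tilde A = \begin{pmatrix} 0 & A \\ A^{\top} & 0 \end{pmatrix} \in \R^{(m_1+m_2)\times(m_1+m_2)} ,
\]
which is symmetric, carries the independent centered bounded entries $A_{ij}$ in its off-diagonal blocks, has vanishing diagonal blocks, and satisfies $\norm{\tilde A}=\norm{A}$ since its nonzero eigenvalues are exactly $\pm\sigma_k(A)$. Hence $\tilde A$ meets the hypotheses of Corollary~3.3 in \cite{bandeira2016}, which yields $\PE[\norm{\tilde A}]\leq C\bigl(\tilde\sigma+\tilde\sigma_{*}\sqrt{\log n}\bigr)$, where $n$ is the size of $\tilde A$, $\tilde\sigma=\max_k\bigl(\sum_l\PE[\tilde A_{kl}^2]\bigr)^{1/2}$ and $\tilde\sigma_{*}=\max_{k,l}\|\tilde A_{kl}\|_{\infty}$.

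Next I would match these parameters to $\sigma_1,\sigma_2,\sigma_{*}$: a row of $\tilde A$ with index $k\leq m_1$ consists of $A_{k1},\dots,A_{km_2}$ (the rest being zero), so its variance sum is $\sum_j\PE[A_{kj}^2]$, while a row with index $k=m_1+j$ has variance sum $\sum_i\PE[A_{ij}^2]$; taking the maximum over all rows gives $\tilde\sigma=\sigma_1\vee\sigma_2$, and since the diagonal blocks vanish, $\tilde\sigma_{*}=\max_{i,j}\|A_{ij}\|_{\infty}=\sigma_{*}$. This already gives $\PE[\norm A]=\PE[\norm{\tilde A}]\leq C\bigl(\sigma_1\vee\sigma_2+\sigma_{*}\sqrt{\log(m_1+m_2)}\bigr)$.

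The only delicate point, and the one I expect to be the main obstacle, is sharpening $\log(m_1+m_2)$ to $\log(m_1\wedge m_2)$, which is a genuine improvement for strongly rectangular matrices. I would handle it by invoking the rectangular form of the Bandeira--van Handel estimate directly rather than the plain square case: the underlying moment argument controls $\norm A^{2p}$ through $\operatorname{tr}\bigl((AA^{\top})^{p}\bigr)=\operatorname{tr}\bigl((A^{\top}A)^{p}\bigr)$, so one may evaluate the trace on whichever of the two spaces is smaller and take $p\asymp\log(m_1\wedge m_2)$, at the cost of only an absolute constant; equivalently, one notes that $\operatorname{rank}(\tilde A)\leq 2(m_1\wedge m_2)$ and uses the intrinsic-dimension form of the symmetric bound, in which $\log n$ is replaced by the log of the effective rank. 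Either route delivers $\PE[\norm A]\leq C^{*}\bigl(\sigma_1\vee\sigma_2+\sigma_{*}\sqrt{\log(m_1\wedge m_2)}\bigr)$, which is the claim. In the application to $\Sigma_R$ the entries $\Omega_{ij}\epsilon_{ij}$ have second moment $\pi_{ij}$, so $\sigma_1\vee\sigma_2\leq\sqrt{\beta}$ and $\sigma_{*}\leq 1$, which recovers the bound of \Cref{lemma:SigmaR}; the dilation and the parameter identification are routine, and all the care goes into the logarithmic factor.
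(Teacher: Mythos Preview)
Your proposal is correct and follows precisely the route the paper indicates: the paper does not give a detailed proof of this proposition but simply states it as ``an extension to rectangular matrices via self-adjoint dilation of Corollary~3.3 in \cite{bandeira2016},'' which is exactly your Hermitian dilation argument together with the parameter identification $\tilde\sigma=\sigma_1\vee\sigma_2$, $\tilde\sigma_*=\sigma_*$. You are in fact more careful than the paper about the sharpening from $\log(m_1+m_2)$ to $\log(m_1\wedge m_2)$, which the paper's one-line citation glosses over; your observation that one can evaluate $\operatorname{tr}\bigl((AA^\top)^p\bigr)$ on the smaller space (or use the intrinsic-rank form) is the right way to close that gap.
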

Applying \Cref{prop:sigmaR-nuc} to $\Sigma_R$ with $\sigma_1 \vee \sigma_2 \leq \sqrt{\beta}$ and $\sigma_*\leq 1$ we obtain
$$\PE[{\norm{\Sigma_R}}] \leq C^*\left\{\sqrt{\beta} + \sqrt{\log(m_1\wedge m_2)}\right\}.$$

\section{Proof of \Cref{lemma:Sigma}}
\label{proof:lemma:Sigma}
Denote $\Sigma = \nabla \mathcal{L}(\TX; Y,\Omega)$.
Definition \eqref{exp-model} implies that $\PE[Y_{ij}] = g_j'(\TX_{ij})$, $(i,j)\in\nint{m_1}\times\nint{m_2}$. Combined with the sub-exponentiality of the entries $Y_{ij}$, we obtain that for all $i,j$, $Y_{ij}-g_j'(\TX_{ij})$ is sub-exponential with scale and variance parameters $1/\gamma$ and $\smax^2$ respectively. Then, noticing that $|\Omega_{ij}|\leq 1$ implies that for all $t\geq 0$, $$\mathbb{P}\left\{\left|\Omega_{ij}\left(Y_{ij}-g_j'(\TX_{ij})\right)\right|\geq t\right\}\leq \mathbb{P}\left\{\left|Y_{ij}-g_j'(\TX_{ij})\right|\geq t\right\},$$
we obtain that the random variables $\Sigma_{ij}=\Omega_{ij}\left(Y_{ij}-g_j^{'}(\TX_{ij})\right)$ are also sub-exponential. Thus, for all $i,j$ and for all $t\geq 0$ we have that
$|\Sigma_{ij}|\leq t$ with probability at least $1-\max\left\{2e^{-t^2/2\smax^2}, 2e^{-\gamma t/2} \right\} $. A union bound argument then yields
$$\norm{\Sigma}[\infty]\leq t \quad \text{w. p. at least } 1-\max\left\{2m_1m_2e^{-t^2/2\smax^2}, 2m_1m_2e^{-\gamma t/2} \right\},$$
where $\gamma$ and $\smax$ are defined in \textbf{H}\ref{ass:cvx}. Using $\log(m_1m_2)\leq 2\log d$, where $d = m_1 + m_2$ and setting $t = 6\max \left\{\smax\sqrt{\log d},  \gamma^{-1}\log d\right\},$ we obtain that with probability at least $1-d^{-1}$,
$$\norm{\Sigma}[\infty]\leq 6\max \left\{\smax\sqrt{\log d},  \gamma^{-1} \log d\right\},$$
which proves the first inequality.
Now we prove the second inequality using the following result obtained by extension of Theorem 4 in \cite{Tropp2012} to rectangular matrices.
\begin{proposition}
\label{prop:klopp2014}
Let $W_{1},\ldots , W_n$ be independent random matrices with dimensions $m_1\times m_2$ that satisfy $\PE[{W_i}]=0$. Suppose that
\begin{equation}
\delta_* = \sup_{i\in\nint{n}}\inf_{\delta>0} \left\lbrace\PE[\exp\left(\norm{W_i}/\delta\right)]\leq \e\right\rbrace<+\infty.
\end{equation}
 Then, there exists an absolute constant $c^*$ such that, for all $t>0$ and with probability at least $1-\e^{-t}$ we have
$$\left\|\frac{1}{n}\sum_{i=1}^nW_i\right\|\leq c^*\max\left\lbrace \sigma_W\sqrt{\frac{t+\log d}{n}}, \delta_*\left(\log\frac{\delta_*}{\sigma_W}\right)\frac{t+\log d}{n} \right\rbrace,$$
where
$$\sigma_W = \max\left\lbrace \left\|\frac{1}{n}\sum_{i=1}^{n}\PE[W_{i}W_{i}^{\top}]\right\|^{1/2},\left\|\frac{1}{n}\sum_{i=1}^{n}\PE[W_{i}^{\top}W_{i}]\right\|^{1/2} \right \rbrace.$$
\end{proposition}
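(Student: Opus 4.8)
The plan is to reduce the rectangular statement to the self-adjoint matrix Bernstein inequality of \cite{Tropp2012} via Hermitian dilation, and then to handle the sub-exponential (rather than uniformly bounded) summands by a truncation argument.

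First I would symmetrize the dimensions. For each $i\in\nint{n}$ introduce the $d\times d$ Hermitian dilation
\begin{equation}
\label{eq:dilation-plan}
\tilde W_i = \begin{pmatrix} 0 & W_i \\ W_i^{\top} & 0 \end{pmatrix}\in\R^{d\times d},\qquad d = m_1+m_2.
\end{equation}
The map $W\mapsto\tilde W$ is linear and isometric for the operator norm, so $\norm{\tfrac1n\sum_i W_i} = \norm{\tfrac1n\sum_i\tilde W_i}$; moreover $\PE[{\tilde W_i}]=0$, $\tilde W_i^2 = \operatorname{diag}(W_iW_i^{\top},\,W_i^{\top}W_i)$ so that $\norm{\sum_i\PE[{\tilde W_i^2}]} = n\sigma_W^2$, and $\norm{\tilde W_i}=\norm{W_i}$ so the Orlicz parameter $\delta_*$ is unchanged. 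Hence it suffices to bound $\norm{\tfrac1n\sum_i\tilde W_i}$ for independent, centered, self-adjoint $d\times d$ summands with matrix variance proxy $n\sigma_W^2$ and common $\psi_1$-parameter $\delta_*$.

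Next I would truncate at a level $R>0$ to be fixed later: write $\tilde W_i = Y_i + Z_i$ with $Y_i = \tilde W_i\,\mathbbm{1}\{\norm{\tilde W_i}\le R\}$ and $Z_i = \tilde W_i - Y_i$. Markov's inequality applied to $\PE[{\e^{\norm{\tilde W_i}/\delta_*}}]\le\e$ gives $\prob{\norm{\tilde W_i}>R}\le\e^{1-R/\delta_*}$, and integrating the tail gives $\PE[{\norm{Z_i}}]\le(R+\delta_*)\e^{1-R/\delta_*}$ and $\PE[{\norm{\tilde W_i}}]\le\delta_*$. On the event $\bigcap_i\{\norm{\tilde W_i}\le R\}$, whose complement has probability at most $n\e^{1-R/\delta_*}$, one has $\sum_i\tilde W_i = \sum_i(Y_i-\PE[{Y_i}]) + \sum_i\PE[{Y_i}]$, where the centering drift obeys $\norm{\sum_i\PE[{Y_i}]} = \norm{\sum_i\PE[{Z_i}]}\le n(R+\delta_*)\e^{1-R/\delta_*}$. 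The centered truncated summands $Y_i-\PE[{Y_i}]$ have operator norm at most $R+\delta_*\le 2R$ (for $R\ge\delta_*$) and, because $Y_i^2\preceq\tilde W_i^2$, matrix variance bounded by $\norm{\sum_i\PE[{\tilde W_i^2}]} = n\sigma_W^2$; so the self-adjoint matrix Bernstein inequality for bounded summands (Theorem~6.2 of \cite{Tropp2012}, in dimension $d$) yields, for every $s>0$,
\begin{equation}
\label{eq:bdd-bernstein-plan}
\prob{\norm{\tfrac1n\sum_{i=1}^n(Y_i-\PE[{Y_i}])}\ge s}\le 2d\exp\!\left(\frac{-ns^2/2}{\sigma_W^2 + 2Rs/3}\right).
\end{equation}

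Finally I would assemble the estimates: setting the exponent in \eqref{eq:bdd-bernstein-plan} equal to $-(t+\log d)$ and splitting into the two regimes of the Bernstein bound produces a contribution of order $\max\{\sigma_W\sqrt{(t+\log d)/n},\,R(t+\log d)/n\}$, while the discarded tail and the drift $n(R+\delta_*)\e^{1-R/\delta_*}$ are made dominated by this term by taking $R$ to be a large absolute multiple of $\delta_*\log(\delta_*/\sigma_W)$ (which in particular forces $R\ge\delta_*$), using that in the regime where the second term dominates one has $\delta_*/\sigma_W\gtrsim\sqrt{n/(t+\log d)}$. Collecting constants gives the announced bound with an absolute $c^*$. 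The main obstacle is exactly this last bookkeeping: one must pick the truncation level $R\asymp\delta_*\log(\delta_*/\sigma_W)$ so that simultaneously the truncated matrices are bounded enough for \eqref{eq:bdd-bernstein-plan} to reproduce the correct sub-Gaussian term, the discarded event has probability $\lesssim\e^{-t}$, and the truncation drift is absorbed into the two main terms — verifying that these constraints are compatible is where the extra logarithmic factor $\log(\delta_*/\sigma_W)$ in the linear regime is produced. An alternative that sidesteps this is to quote an off-the-shelf matrix Bernstein inequality for $\psi_1$ summands (as used in \cite{Koltchinskii2011, klopp:hal-01111757}) and carry out only the dilation reduction \eqref{eq:dilation-plan} here.
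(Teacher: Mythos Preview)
The paper does not prove this proposition at all: it is stated as a quotable black box, described only as ``obtained by extension of Theorem~4 in \cite{Tropp2012} to rectangular matrices,'' and then applied directly. Your closing alternative --- cite an off-the-shelf $\psi_1$ matrix Bernstein inequality (as in \cite{Koltchinskii2011,klopp:hal-01111757}) and carry out only the Hermitian dilation~\eqref{eq:dilation-plan} --- is exactly the paper's approach.

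Your longer sketch (dilation, then truncation, then bounded matrix Bernstein) is the standard way such $\psi_1$ inequalities are actually established, and it is correct in structure. One caution on the step you yourself flag as the obstacle: the choice $R \asymp \delta_*\log(\delta_*/\sigma_W)$ by itself does not guarantee $n\e^{1-R/\delta_*}\le \e^{-t}$ for the discarded event, since $\log(\delta_*/\sigma_W)$ need not dominate $\log n + t$. The usual argument either takes $R$ to also scale with $t+\log n$ and then argues, via a case split on which of the two terms in the maximum is larger, that the resulting $R(t+\log d)/n$ contribution is absorbed; or it reverses the logic and fixes $R$ depending on which regime one is in. Either way the $\log(\delta_*/\sigma_W)$ factor emerges, but the bookkeeping is slightly more involved than your sketch suggests. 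Since the paper bypasses all of this by citation, your proposal is more than sufficient for the present purpose.
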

For all $(i,j)\in\nint{m_1}\times\nint{m_2}$ define $Z_{ij} = -\Omega_{ij}\left(Y_{ij}-g_j^{'}(\TX_{ij})\right)E_{ij}.$ The sub-exponentiality of the variables $\Omega_{ij}\left(Y_{ij}-g_j^{'}(\TX_{ij})\right)$ implies that for all $i,j\in\nint{m_1}\times \nint{m_2}$
$$\delta_{ij}={\operatorname{inf}}_{\delta>0}\quad \left\lbrace\PE[\exp\left(\left|\Omega_{ij}\left(Y_{ij}-g_j^{'}(\TX_{ij})\right)\right|/\delta\right)]\leq \e\right\rbrace\leq \frac{1}{\gamma}.$$
We can therefore apply \Cref{prop:klopp2014} to the matrices $Z_{ij}$ defined above, with the quantity
\begin{equation}
\label{eq:sigma-def-tail}
\sigma_Z = \max\Bigg\lbrace \left\|\frac{1}{m_1m_2}\sum_{i=1}^{m_1}\sum_{j=1}^{m_2}\PE[{Z_{ij}Z_{ij}^{\top}}]\right\|^{1/2},
\left\|\frac{1}{m_1m_2}\sum_{i=1}^{m_1}\sum_{j=1}^{m_2}\PE[{Z_{ij}^{\top}Z_{ij}}]\right\|^{1/2} \Bigg \rbrace.
\end{equation}
We obtain that for all $t\geq 0$ and with probability at least $1-\e^{-t}$,
\begin{equation*}
\norm{\Sigma } \leq c^*\max\Bigg\lbrace \sigma_Z\sqrt{m_1m_2(t+\log d)},
 \left(\log\frac{1}{\gamma\sigma_Z}\right)\frac{t+\log d}{\gamma } \Bigg\rbrace.
\end{equation*}
We bound $\sigma_Z$ from above and below as follows.
\begin{equation*}
\sum_{i=1}^{m_1}\sum_{j=1}^{m_2}\PE[{Z_{ij}Z_{ij}^{\top}}] = \sum_{i=1}^{m_1}\left\lbrace\sum_{j=1}^{m_2}\PE[{\Omega_{ij}^2}]\PE[{\left(Y_{ij}-g_j^{'}(\TX_{ij})\right)^2}]\right\rbrace E_{ii}(m_1),
\end{equation*}
where $E_{ii}(n)$, $i,n\geq 1$ denotes the $n\times n$ square matrix with $1$ in the $(i,i)$-th entry and zero everywhere else. Therefore
\begin{equation*}
\left\|\frac{1}{m_1m_2}\sum_{i=1}^{m_1}\sum_{j=1}^{m_2}\PE[{Z_{ij}Z_{ij}^{\top}}]\right\|^{1/2} = \sqrt{\frac{1}{m_1m_2}\max_i\sum_{j=1}^{m_2}\PE[{\Omega_{ij}^2}]\PE[{\left(Y_{ij}-g_j^{'}(\TX_{ij})\right)^2}]}.
\end{equation*}
Then, assumption \textbf{H}\ref{ass:cvx} gives
\begin{equation*}
\left\|\frac{1}{m_1m_2}\sum_{i=1}^{m_1}\sum_{j=1}^{m_2}\PE[{Z_{ij}Z_{ij}^{\top}}]\right\|^{1/2} \leq \smax\sqrt{\frac{1}{m_1m_2}\left(\max_i\sum_{j=1}^{m_2}\PE[{\Omega_{ij}^2}]\right)},
\end{equation*}
and
\begin{equation*}
\left\|\frac{1}{m_1m_2}\sum_{i=1}^{m_1}\sum_{j=1}^{m_2}\PE[{Z_{ij}Z_{ij}^{\top}}]\right\|^{1/2} \geq \smin\sqrt{\frac{1}{m_1m_2}\left(\max_i\sum_{j=1}^{m_2}\PE[{\Omega_{ij}^2}]\right)}.
\end{equation*}
Similarly, we obtain
\begin{equation*}
\left\|\frac{1}{m_1m_2}\sum_{i=1}^{m_1}\sum_{j=1}^{m_2}\PE[{Z_{ij}^{\top}Z_{ij}}]\right\|^{1/2} \leq \smax\sqrt{\frac{1}{m_1m_2}\left(\max_j\sum_{i=1}^{m_1}\PE[{\Omega_{ij}^2}]\right)},
\end{equation*}
and
\begin{equation*}
\left\|\frac{1}{m_1m_2}\sum_{i=1}^{m_1}\sum_{j=1}^{m_2}\PE[{Z_{ij}^{\top}Z_{ij}}]\right\|^{1/2} \geq \smin\sqrt{\frac{1}{m_1m_2}\left(\max_j\sum_{i=1}^{m_1}\PE[{\Omega_{ij}^2}]\right)}.
\end{equation*}
Combining the last four inequalities, we obtain
$$\smin\sqrt{\frac{\beta}{m_1m_2}}\leq \sigma_Z \leq \smax\sqrt{\frac{\beta}{m_1m_2}},$$
and setting $t = \log d$, we further obtain for all $t\geq 0$ and with probability at least $1-d^{-1}$:
$$\norm{\Sigma } \leq c^*\max\left\lbrace \smax\sqrt{2\beta\log d}, \frac{2\log d}{\gamma }\log\left(\frac{1}{\smin}\sqrt{\frac{m_1m_2}{\beta}}\right) \right\rbrace,$$
which proves the result.

\end{document}